\definecolor{shadecolor}{rgb}{0.85,0.85,0.85}
\newcommand*{\addFileDependency}[1]{
\typeout{(#1)}
%
%
\@addtofilelist{#1}
%
\IfFileExists{#1}{}{\typeout{No file #1.}}
}\makeatother
\theoremstyle{definition}
\newtheorem{Theorem}{Theorem}
\newtheorem{Lemma}{Lemma}
\newtheorem{Corollary}{Corollary}[Theorem]
\newcommand{\orcidicon}{\includegraphics[width=0.32cm]{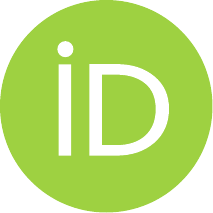}}
\xdef\csname orcid\x\endcsname{\noexpand\href{https://orcid.org/\csname orcidauthor\x\endcsname}{\noexpand\orcidicon}}
\begin{document}
\title{A Theoretical Review of Area Production Rates as Test Statistics for Detecting Nonequilibrium Dynamics in Ornstein-Uhlenbeck Processes}
\newcommand{\orcidauthorA}{0000-0001-6618-631X} 

\author{Alexander Strang $^1$  \orcidA{} }

\affil{\footnotesize $^1$ University of California, Berkeley, Statistics. \\ Correspondence: alexstrang@berkeley.edu}

\date{\vspace{-1.2em} \today}

\maketitle
\vspace{-20pt}

\section{Abstract}

A stochastic process is at thermodynamic equilibrium if it obeys time-reversal symmetry; forward and reverse time are statistically indistinguishable at steady state. Non-equilibrium processes break time-reversal symmetry by maintaining circulating probability currents. In physical processes, these currents require a continual use and exchange of energy. Accordingly, signatures of non-equilibrium behavior are important markers of energy use in biophysical systems. In this article we consider a particular signature of nonequilibrium behavior: area production rates. These are, the average rate at which a stochastic process traces out signed area in its projections onto coordinate planes. Area production is an example of a linear observable: a path integral over an observed trajectory against a linear vector field. We provide a summary review of area production rates in Ornstein-Uhlenbeck (OU) processes. Then, we show that, given an OU process, a weighted Frobenius norm of the area production rate matrix is the optimal test statistic for detecting nonequilibrium behavior in the sense that its coefficient of variation decays faster in the length of time observed than the coefficient of variation of any other linear observable. We conclude by showing that this test statistic estimates the entropy production rate of the process.

\section{Introduction and Background} \label{sec: intro}

Many important physical processes operate far from thermodynamic equilibrium \cite{parrondo2009entropy}. Examples range from essential cellular processes occurring at the molecular scale \cite{gnesotto2018broken} to oscillations in the climate \cite{weiss2019nonequilibrium}. Unlike equilibrium processes, whose thermodynamics are well understood \cite{mura2019mesoscopic}, nonequilibrium processes are harder to study, and are largely characterized by expansions near equilibrium, or by fluctuation theorems \cite{jiang2004mathematical,gingrich2016dissipation}. Equilibrium processes satisfy time-reversal symmetry; when initialized from the steady-state distribution, it is impossible to distinguish trajectories evolved forward in time from trajectories evolved backwards in time. Nonequilibrium processes break time-reversal symmetry by maintaining nonvanishing probability currents at steady state \cite{tomita1974irreversible}. These cyclic currents manifest in hysteretic fluctuations and noisy oscillation. Such cycles can only be maintained by a continuous exchange of energy between two reservoirs via the system of interest \cite{li2019quantifying}. In contrast, an energetically isolated system, or system in contact with a single reservoir, cannot maintain steady state probability currents \cite{teitsworth2022stochastic}. This property of equilibrium processes is called detailed balance. Violations of detailed balance signify that a process is not at equilibrium.

For example, consider an Ornstein-Uhlenbeck (OU) process \cite{uhlenbeck1930theory} evolving in $\mathbb{R}^d$. Let $X(t) \in \mathbb{R}^d$ denote the state of the process at time $t$. Then, $X(t)$ is a Markov process that obeys the stochastic differential equation (SDE):
\begin{equation} \label{eqn: OU process SDE}
    dX(t) = \mu(X(t)) dt + G dW(t), \quad \mu(x) = - A x
\end{equation}
where $A$ and $G$ are both matrices in $\mathbb{R}^{d \times d}$, $dW(t)$ denotes a Weiner increment (white noise with variance $dt$), and where $W(t)$ denotes the Weiner process in $\mathbb{R}^d$ (Brownian motion) \cite{ahmad1988introduction,gardiner1985handbook}. We interpret the SDE \eqref{eqn: OU process SDE} in the It\^o sense \cite{van1981ito}. We will, throughout this manuscript, assume that $A$ is full rank, diagonalizable, and the real part of all of the eigenvalues of $A$ are positive. We will also assume that $G$ is full rank.

Equation 
\eqref{eqn: OU process SDE} is the simplest stochastic analog to an asymptotically stable linear system of ordinary differential equations (ODE's). Accordingly, OU processes are widely used to approximate the behavior of more general SDE's near stable equilibria of the corresponding ODE, $\frac{d}{dt} x(t) = \mu(x(t))$ \cite{patterson2021and}. Alternately, an OU process of the form equation \eqref{eqn: OU process SDE}, may be used to generate colored noise traces \cite{bibbona2008ornstein}, to study stochastic oscillators \cite{gonzalez2019experimental,teitsworth2022stochastic}. All linear autoregressive models with Gaussian noise are time-sampled OU processes. Generically, solutions to \eqref{eqn: OU process SDE} are samples from a vector-valued Gaussian process with an exponential kernel \cite{williams2006gaussian}. These may be represented by convolving the noise process $G dW(t)$ with the impulse response function $f(t) = \{0 \text{ if } t < 0, \exp(-A t) \text{ if } t \geq 0\}$.  

Suppose that, at $t = 0$, $X(0) \sim \pi(\cdot,0)$. Then, let $\pi(\cdot,t)$ denote the probability density function for $X(t)$. The density function obeys the Fokker-Planck equation \cite{fokker1914mittlere,planck1917satz,van1992stochastic}:
\begin{equation} \label{eqn: OU Fokker-Planck}
    \partial_t \pi(x,t) = - \nabla \cdot \mu(x) \pi(x,t) + \frac{1}{2} D \nabla^2 \pi(x,t) = -\nabla \cdot j(x,t)
\end{equation}
where $D = G G^{\intercal}$ is the diffusion tensor, and where $j(x,t)$ denotes the flux in probability density at state $x$ and time $t$ \cite{gardiner1985handbook}:
\begin{equation} \label{eqn: density flux}
    j(x,t) = \mu(x) \pi(x,t) - \frac{1}{2} D \nabla \pi(x,t) = -\left(A x \pi(x,t) + \frac{1}{2} D \nabla \pi(x,t) \right).
\end{equation}

A distribution $\pi_*(\cdot)$ is a steady state for the Fokker-Planck dynamic if the corresponding fluxes, $j_*(x)$ have zero divergence. In general, this can be accomplished by any $\pi_*$ such that $j_*(x)$ is incompressible. This is a flux balance equation. It requires that the total flux into any region equals the total flux out of that region. It does \textit{not} require that the flux itself is zero. If the flux $j_*(\cdot)$ vanishes at steady state, i.e.~$j_*(x) = 0$ for all $x$, then the process obeys \textit{detailed} balance. There is no circulation of probability density at equilibrium.

Solutions to \eqref{eqn: OU Fokker-Planck} converge to a Gaussian steady-state $\pi_*(\cdot)$ with steady-state covariance $\Sigma_*$. The covariance is the unique solution to the Lyapunov equation \cite{tomita1974irreversible,van1992stochastic}:
\begin{equation} \label{eqn: Lyapunov}
    A \Sigma_* + \Sigma_* A^{\intercal} = D.
\end{equation}

Any Gaussian density with covariance $\Sigma_*$ satisfies $\nabla \pi_*(x) = \left(-\Sigma_*^{-1} x \right) \pi_*(x)$. Therefore, the steady-state fluxes can be expanded:
\begin{equation} \label{eqn: steady state velocities}
j_*(x) = v_*(x) \pi_*(x) \text{ where } v_*(x) = \left(\frac{1}{2} D \Sigma_*^{-1} - A  \right) x. 
\end{equation}
Here $v_*(x)$ is the steady-state velocity of the probability density at $x$ \cite{tomita1974irreversible,tomita2008irreversible}. The steady-state density is nonzero everywhere, so the steady-state fluxes are zero everywhere if and only if the steady-state velocities are zero everywhere. So, we can identify whether or not an OU process obeys detailed balance by studying $v_*(x)$. The left panel of Figure \ref{fig: velocities and area production} shows $v_*$ for a two-dimensional OU process. 

Applying the Lyapunov equation \eqref{eqn: Lyapunov} to $D$ in equation \eqref{eqn: steady state velocities} gives:
\begin{equation} \label{eqn: steady state velocities using alpha}
v_*(x) = \left(\frac{1}{2} D - A \Sigma_* \right) \Sigma_*^{-1} x = \frac{1}{2}\left(\Sigma_* A^{\intercal} - A \Sigma_* \right) \Sigma_*^{-1} x = - \alpha_* \Sigma_*^{-1} x
\end{equation}
where:
\begin{equation} \label{eqn: commutator for area production}
\alpha_* = \frac{1}{2} \left(A \Sigma_* - \Sigma_* A^{\intercal} \right)
\end{equation}

The matrix $\alpha_*$ plays a fundamental role in the analysis of nonequilibrium OU processes \cite{tomita1974irreversible,tomita2008irreversible}. In the following, we will show that $\alpha_*$ can be estimated by tracking the area traced out by long trajectories, is intimately related to the behavior of the steady state distribution when the degree of driving rotation inducing non-equilibrium circulation varies, and controls the production rate of a broad class of observables. 

Solutions to the Lyapunov equation are full-rank, so $v_*(x) = 0$ for all $x \in \mathbb{R}^d$ if and only if $\alpha_* = 0$. Therefore, an OU process obeys detailed balance if and only if $\alpha_* = 0$. We will see that $\alpha_*$ is zero if and only if $D^{-1} A$ is a symmetric matrix. This is a special case of the more general requirement that, an SDE obeys detailed balance if and only if $D^{-1}(x) \mu(x)$ can be expressed as the gradient of some potential function on $\mathbb{R}^d$ \cite{strang2020applications}. Clearly, generic $(A,G)$ will not satisfy this symmetry requirement, so almost all OU processes do not satisfy detailed balance. This situation is generic. Most parameterized SDE's only satisfy detailed balance on a low-dimensional manifold characterized by a set of symmetry constraints. In this sense, non-equilibrium processes are the mathematically general case.

While non-equilibrium processes are ubiquitous, it is not always obvious whether observed fluctuations are thermal fluctuations in an equilibrium process, or oscillatory dynamics arising from circulating probability currents \cite{skinner2021estimating}. For example, while many processes in cellular biology are clearly violate equilibrium (i.e. cell division, transcription and translation, sensing, biochemical patterning, or mechanical force generation) \cite{cao2015free,england2013statistical,nirody2017biophysicist,parrondo2009entropy,wang2010robust}, for others (i.e. dynamics of chromosomes, the cytoskeleton, the nucleus, or beating of flagella) it is not clear, and violations of detailed balance must be empirically demonstrated and quantified \cite{gladrow2017nonequilibrium,gnesotto2018broken,guo2014probing,mura2019mesoscopic,skinner2021estimating}. At the mesoscopic scale, large collections of non-equilibrium processes may have emergent dynamics which do not distinctly violate, or even regain, detailed balance \cite{egolf2000equilibrium}. Consequently, a variety of experimental procedures have been developed to detect violations of detailed balance.

In fluctuation-dissipation based methods the study system is slightly perturbed by an applied force, the responses are measured, and compared to the power spectrum of thermal fluctuations in the process. While widely used, this experimental technique is invasive, since it requires perturbing the study system. Some microscopic systems are too fragile to perturb, while other systems are too large or complicated (cf.~\cite{weiss2019nonequilibrium}). In those cases it is important to have observational metrics which depend only on observed trajectories. 

Detecting time reversal asymmetry from observed trajectories is a classical problem in time series analysis \cite{steinberg1986time}. The conceptually simplest method is to use observed trajectories to reconstruct the probability velocity field \cite{gnesotto2018broken}. This technique is data intensive and requires long-trajectories, especially in high dimension, since it requires the process to repeatedly visit a series of voxels many times \cite{frishman2020learning,gonzalez2019experimental,li2019quantifying}. Reconstruction of velocity fields is accordingly limited to low-dimensional systems that are inexpensive to observe. When the underlying velocity field cannot be reconstructed its influence on trajectories can still be measured. For example, large fluctuations typically form hysteritic loops, that can be reconstructed from repeated observations. This technique is also information intensive since it requires the repeated observation of the same rare fluctuations \cite{gonzalez2019experimental}. Both of these techniques are expensive since they require reconstruction of local properties of the velocity field.

These difficulties motivate the development of observational metrics that use the entire observed trajectory to infer a global property of the velocity field (c.f.~\cite{gradziuk2019scaling,teitsworth2022stochastic,weiss2019nonequilibrium}). Area production rates are promising global metrics that can be applied to observed trajectories, without requiring a dynamical model, or complete observation of all dynamical variables \cite{ghanta2017fluctuation,martinez2019inferring,teza2020exact}.
The area production rates of a process form a matrix, whose $i,j$ entry is the expected rate at which the projection of the process onto coordinates $i$ and $j$ traces out signed area \cite{ghanta2017fluctuation}. These rates can be estimated directly from observed trajectories and only require observation of the $i^{th}$ and $j^{th}$ coordinates \cite{frishman2020learning}. 

\begin{figure}[t]
\centering
\includegraphics[trim = 85 20 60 0, clip,scale = 0.45]{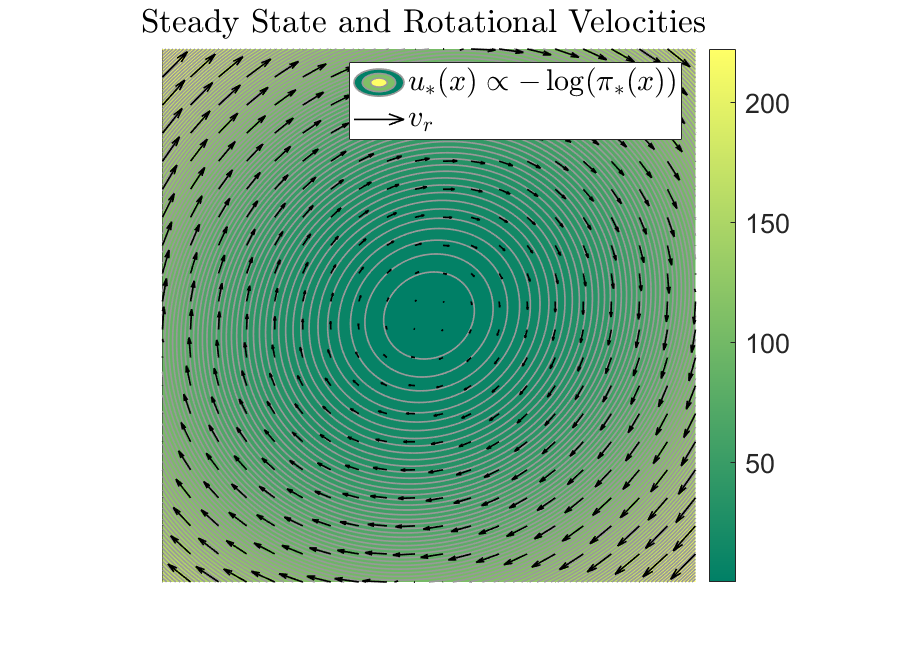}
\includegraphics[trim = 85 20 80 0, clip,scale = 0.45]{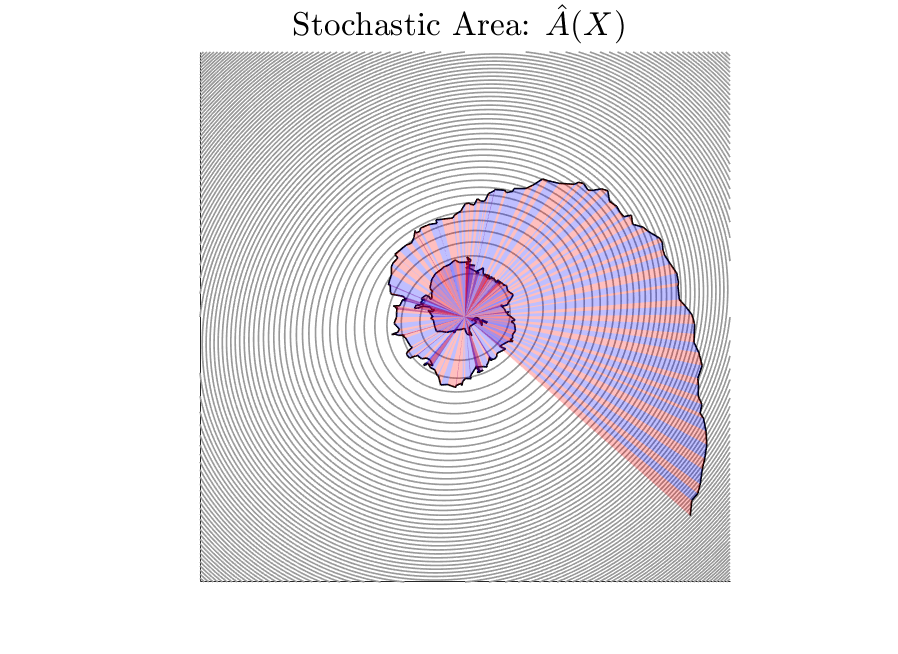}
\caption{\textbf{Left:} The negative log steady state, $-\log(\pi_*(x))$ for a two-dimensional OU process and the corresponding steady state velocities $v_*(x) = v_r(x)$. \textbf{Right:} An example trajectory  $\{X(t_k)\}_{k=0}^n$ sampled from the OU process and the associated signed area $\hat{A}(\{X(t_k)\}_{k=0}^n)$. The shaded triangles show the area produced by each step of the process. The total area produced is the sum of the signed area of each shaded triangle.}
\label{fig: velocities and area production}
\end{figure}   

Specifically, suppose that we observe a process $X$ at a sequence of times $\{t_k\}_{k=0}^n$ where $\Delta t = t_{k+1} - t_k$ is small, and $n \Delta t = T$. Then, we can approximate the area traced out by the trajectory in the $i,j$ plane via the path-sum:
\begin{equation}
\begin{aligned}
    \hat{A}_{ij}(\{X(t_k)\}_{k=0}^n)&  = \frac{1}{2} \sum_{j=0}^{n-1} [X_i(t_k),X_j(t_k)] \times [\Delta X_i(t_k),\Delta X_j(t_k)] \\
    & \hspace{1.5 cm} \approx \frac{1}{2} \int_{t = 0}^{T} [X_i(t),X_j(t)] \times [dX_i(t),dX_j(t)] = \hat{A}_{ij}(\{X(t)\}_{t=0}^T) \\
    \end{aligned}
\end{equation}
where $\times$ denotes the cross product, $\Delta X(t_k) = X(t_{k+1}) - X(t_{k})$, and where the approximation holds for small $\Delta t$. The right panel of Figure \ref{fig: velocities and area production} shows an example trajectory and the associated area.

Let $\hat{A}(\{X(t_k)\}_{k=0}^n)$ denote the matrix with entries $\hat{A}_{ij}(\{X(t_k)\}_{k=0}^n)$. Then, the average rate at which the process produced area on each coordinate plane is \cite{gonzalez2019experimental}:
\begin{equation}
\hat{\alpha}(\{X(t_k)\}_{k=0}^n) = \frac{1}{n \Delta t} \hat{A}(\{X(t_k)\}_{k=0}^n)) \approx \frac{1}{T} \hat{A}(\{X(t)\}_{t=0}^T) = \hat{\alpha}(\{X(t)\}_{t=0}^T).
\end{equation}

If the process is ergodic, then long-time averages converge in probability to averages against the steady-state distribution \cite{boltzmann1910vorlesungen, feller1991introduction}. Therefore, for large $T$, $\hat{\alpha}(\{X(t)\}_{t=0}^T)$ converges in probability to its expectation \cite{ghanta2017fluctuation}:
\begin{equation}
\lim_{T \rightarrow \infty} \hat{\alpha}_{ij}(\{X(t)\}_{t=0}^T) \longrightarrow \mathbb{E}_{X \sim \pi_*}\left[ \mathbb{E}_{dX|X}\left[ [X_i,X_j] \times [dX_i,dX_j] \right] \right] = \alpha_{ij}.
\end{equation}

The matrix $\alpha$ is the area production rate matrix for the process. Its $i,j$ entry returns the expected rate at which the process traces out area in the $i,j$ coordinate plane when initialized from its steady state. When ergodic, the empirical estimator $\hat{\alpha}$ along long sample trajectories converges, in expectation, to the steady-state area production rate matrix $\alpha$. We will show that the area production rate matrix $\alpha$ is equivalent to the angular momentum tensor of the probability density flux, and, given an OU process, converges to the commutator $\alpha_*$ defined in equation \eqref{eqn: commutator for area production}.

If a process obeys detailed balance it is equally likely to complete any circuit clockwise as counter-clockwise. Therefore, if a process obeys detailed balance, all area production rates must equal zero. Then, any nonzero rate indicates that a process does not obey detailed balance. We proved this condition for OU processes (see equations \eqref{eqn: steady state velocities} - \eqref{eqn: commutator for area production}). This test has been used to demonstrate violations of detailed balance in soft-matter systems \cite{gnesotto2020learning,mura2019mesoscopic} coupled RC circuits driven by separate noise sources \cite{gonzalez2019experimental}, and to explain oscillatory climate dynamics \cite{weiss2019nonequilibrium}.

The area production rate matrix is closely related to fundamental physical properties of the system and can reveal intrinsic features of the active driving forces \cite{mura2019mesoscopic}. It equals the angular momentum tensor of the steady state probability currents \cite{weiss2019nonequilibrium}, and is equivalent to the ``irreversible circulation of fluctuations" studied by Tomita and Tomita \cite{tomita2008irreversible,tomita1974irreversible}. The area production rate of a two-dimensional process is also proportional to the rate at which the process winds around the origin and is a better empirical measure since it does not become singular when trajectories approach the origin. If the underlying process is an OU process then inner products with the area production rate matrix control the rate of production of observables, including the entropy production of the process \cite{chiang2017electrical,gonzalez2019experimental,qian2002thermodynamics,tomita2008irreversible}. The area production rates also entirely specify the steady state probability currents and control the tilt and eccentricity of the isoclines of the steady state distribution relative to the equilibrium distribution. Recent work has suggested that the area production rate matrix can also be used to guide dimension reduction methods that retain as much information as possible about non-equilibrium dynamics \cite{gnesotto2020learning}.

While estimation of area production rates from an observed trajectory is straightforward (cf.~\cite{gonzalez2019experimental}), the resulting estimate is itself a random variable since it depends on a sample from a stochastic process. The distribution of empirical area production rate will depend on where the trajectory is initialized, the underlying system dynamics, the sampling rate at which the process is observed,and the length of the observed trajectory. Understanding this distribution is required for proper statistical testing. It is important to note that sample trajectories of an equilibrium process will almost surely produce area, even though no area is produced in expectation. Thus a nonzero observed rate is not, taken alone, sufficient evidence to show that a process violates detailed balance. Instead, the observed rates must be far from zero relative to the uncertainty in the rates. To show that an observed production rate is statistically significant we must show that we would be unlikely to sample an area production rate equal to or larger than the observed rate if the process was at equilibrium. Some authors approximate the sampling distribution of the area production rate, and associated $p$-values, via bootstrapping procedures (c.f.~\cite{frishman2020learning,weiss2019nonequilibrium}). 

This raises an important question: is the area production rate a ``good" test statistic for detecting violations of detailed balance? In particular, does the uncertainty in the empirical area production rates decay quickly as the length of the observed trajectory increases? Moreover, since the area production rate is matrix-valued, and any nonzero entry indicates a violation of detailed balance, what multiple testing procedure, or what norm of the matrix, should be used? 

In this manuscript we will argue that area production rates are a good choice of test statistic for detecting violations of detailed balance in OU processes. We make this argument for two reasons:

\begin{enumerate}
\item \textbf{Interpretability:} If estimated, the area production matrix answers a variety of important questions regarding nonequilibrium behavior simply. We will show that the area production rate matrix characterizes the circulatory component of probability fluxes and velocities, is equivalent to the angular momentum of the probability fluxes, and is, in two dimensions, equal to a ratio of the rotational and conservative components of the vector field that controls maximally likely paths and the work exerted along a trajectory. Moreover, it controls the production rate of all observables defined as a path integral. Importantly this includes other signatures of nonequilibrium dynamics such as entropy production \cite{qian2002thermodynamics}. We will then focus on the two-dimensional case, and show that the area production rate provides an interpretable geometric relationship between the steady-state distribution of an OU process and the degree of underlying circulation. Many of these results are available in past work, but are not unified in a single review. We hope, by reviewing these properties, to provide a compelling survey that will motivate the application of area production rates as meaningful measures of nonequilibrium behavior.

\item \textbf{Optimality:} Given an OU process, a weighted Frobenius norm of the empirical area production matrix, $\hat{\alpha}$, is the unique, optimal, test statistic for identifying violations of detailed balance among a class of observables that can be expressed as time-averaged path integrals against a linear vector field. Specifically, it maximizes the decay rate of the coefficient of variation of the path integral in the length of time observed. This proof is, to the author's knowledge, novel. We will show that this norm is equivalent to the entropy production rate for the process.

\end{enumerate}

\section{Results}

\subsection{Essential Relations}

In this section, we will establish six key relations that characterize the commutator $\alpha$. In particular, given an OU process, the $\alpha$ is simultaneously:
\begin{enumerate}
\item the linear map that produces the rotational probability velocities and fluxes,
\item the angular momentum matrix of the probability fluid (up to a factor of 2), 
\item the linear map relating the conservative and rotational components of the vector field defining work (up to a factor of 2), 
\item the expected value of the stream function in two dimensions,
\item the time-averaged angular velocity of the process in a set of eigen-planes, and
\item the area production rate matrix.
\end{enumerate}

\noindent \textbf{How is $\alpha$ related to the circulation in probability currents?}
\vspace{0.05 in}

Suppose that $X(t)$ obeys an OU process (see equation \eqref{eqn: OU process SDE}). As before, let $\pi(\cdot,t)$ denote the density function for $X(t)$. The time evolution of $\pi(\cdot,t)$ is governed by the Fokker-Planck equation \eqref{eqn: OU Fokker-Planck}. If the process is initialized from a Gaussian distribution, then $\pi(\cdot,t)$ is Gaussian at all times, with covariance that evolves according to the matrix-valued ODE \cite{tomita1974irreversible}:
\begin{equation} \label{eqn: covariance dynamics}
    \frac{d}{dt} \Sigma(t) = D - (A \Sigma(t) + \Sigma(t) A^{\intercal}).
\end{equation}
It follows that the steady state covariance, $\Sigma_*$ must solve the Lyapunov equation \eqref{eqn: Lyapunov}.

Then, the probability fluxes at time $t$ are:
\begin{equation}
    j(x,t) = \left(\frac{1}{2} D - A \Sigma(t)  \right) \nabla \pi(x,t) = v(x,t) \pi(x,t)
\end{equation}
where $v(x,t)$ are the probability velocities:
\begin{equation}
    v(x,t) = - \left( \frac{1}{2} D - A \Sigma(t) \right) \Sigma(t)^{-1} x.
\end{equation}

These expressions recover the steady-state equations by substituting $\Sigma_*$ for $\Sigma(t)$. Just as we could use $v_*(x)$ to define the matrix $\alpha_*$, we can use $v(x,t)$ to define a matrix $\alpha(t)$. Here, we show that this matrix directs the circulating components of the probability fluxes and velocities, is, at each time, the angular momentum of the probability fluid, and converges to $\alpha_*$ at steady state.

Following \cite{tomita1974irreversible,tomita2008irreversible}, separate $A \Sigma(t)$ into its symmetric and skew-symmetric components:
\begin{equation}
\frac{1}{2} D - \frac{1}{2}(A \Sigma(t) + \Sigma(t) A^T) - \frac{1}{2}(A \Sigma(t) - \Sigma(t) A^T) =  \frac{1}{2}\frac{d}{dt} \Sigma(t) - \alpha(t) 
\end{equation}
where $\alpha(t) = \frac{1}{2}(A \Sigma(t) - \Sigma(t) A^T)$. We can then decompose the velocities and fluxes into dilational and rotational velocities and fluxes:
\begin{equation}
\begin{aligned}
& v(x,t) = v_d(x,t) + v_r(x,t) = -\frac{1}{2} \left[\frac{d}{dt}\Sigma(t) \right] \Sigma^{-1} x + \alpha(t) \Sigma^{-1}(t) x \\
& j(x,t) = j_d(x,t) + j_r(x,t) = \frac{1}{2} \left[\frac{d}{dt}\Sigma(t) \right] \nabla \pi(x,t) - \alpha(t) \nabla \pi(x,t) \\
& j_d(x,t) = v_d(x,t) \pi(x,t), \quad j_r = v_r(x,t) \pi(x,t).
\end{aligned}
\end{equation}
At steady state the covariance $\Sigma(t) = \Sigma_*$ stops changing so the dilational terms vanish and:
\begin{equation}
\begin{aligned}
& v_*(x) =  {v_r}_*(x) = \alpha_* \Sigma_*^{-1} x, \quad \alpha_* = \frac{1}{2}(A \Sigma_* - \Sigma_* A^{\intercal}) \\
& j_*(x) = {j_r}_*(x) = - \alpha_* \nabla \pi_*(x) = {v_r}_*(x) \pi_*(x).
\end{aligned}
\end{equation}

The ``rotational" fluxes and velocities are rotational in two senses: (1) they are divergence-free, and thus incompressible, (2) they are transverse, that is, they run perpendicular to the isoclines of $\pi(x,t)$. Therefore, they preserve the distribution. See, for example, the left panel of Figure \ref{fig: velocities and area production}, where the steady-state velocity vectors $v_*(x)$ are tangent, at all $x$, to the level sets of the steady-state density.

\begin{Lemma}
Given an OU process, initialized from a Gaussian distribution, the rotational velocities $v_r(x,t)$ and fluxes $j_r(x,t)$ are both divergence-free, $\nabla \cdot v_r(x,t) = \nabla \cdot j_r(x,t) = 0$ and transverse, $\nabla \pi(x,t) \cdot \nabla v_r(x,t) = \nabla \pi(x,t) \cdot \nabla j_r(x,t) = 0$ at all $x$ and $t$.
\end{Lemma}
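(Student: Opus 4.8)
The plan is to exploit two structural facts that hold at every fixed time $t$: the matrix $\alpha(t) = \frac{1}{2}(A\Sigma(t) - \Sigma(t)A^{\intercal})$ is skew-symmetric by construction, while $\Sigma(t)$ (hence $\Sigma^{-1}(t)$) is symmetric. Because $v_r(x,t) = \alpha(t)\Sigma^{-1}(t)x$ is a \emph{linear} vector field in $x$ with constant coefficient matrix $M(t) := \alpha(t)\Sigma^{-1}(t)$, both the divergence and the transversality claims reduce to algebraic identities about $M(t)$, and everything will follow from the interplay of a skew-symmetric factor with a symmetric one together with the Gaussian gradient relation $\nabla\pi(x,t) = -\Sigma^{-1}(t)x\,\pi(x,t)$.

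First I would verify that $v_r$ is divergence-free. Since $v_r(x,t) = M(t)x$ is linear, $\nabla\cdot v_r = \operatorname{tr} M(t) = \operatorname{tr}(\alpha(t)\Sigma^{-1}(t))$. I would then show this trace vanishes: using $\alpha^{\intercal} = -\alpha$ and $(\Sigma^{-1})^{\intercal} = \Sigma^{-1}$ gives $\operatorname{tr}(\alpha\Sigma^{-1}) = \operatorname{tr}((\alpha\Sigma^{-1})^{\intercal}) = \operatorname{tr}(\Sigma^{-1}\alpha^{\intercal}) = -\operatorname{tr}(\Sigma^{-1}\alpha) = -\operatorname{tr}(\alpha\Sigma^{-1})$, so the trace is zero and $\nabla\cdot v_r(x,t) = 0$ for all $x,t$. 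This is just the general fact that the trace of the product of a skew-symmetric and a symmetric matrix vanishes.

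Next I would establish transversality, which I read as the precise content of the stated condition: the rotational velocity is perpendicular to $\nabla\pi$, i.e.~$v_r\cdot\nabla\pi = 0$, so that $v_r$ is tangent to the level sets of $\pi$ (as claimed geometrically in the text). Here the Gaussian structure enters through $\nabla\pi(x,t) = -\Sigma^{-1}(t)x\,\pi(x,t)$. Substituting yields $v_r\cdot\nabla\pi = -\pi\,(\alpha\Sigma^{-1}x)^{\intercal}\Sigma^{-1}x = -\pi\,x^{\intercal}\Sigma^{-1}\alpha^{\intercal}\Sigma^{-1}x = \pi\,x^{\intercal}(\Sigma^{-1}\alpha\Sigma^{-1})x$. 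The matrix $\Sigma^{-1}\alpha\Sigma^{-1}$ is skew-symmetric, since conjugation by a symmetric matrix preserves skew-symmetry, and the quadratic form of a skew-symmetric matrix vanishes identically; hence $v_r\cdot\nabla\pi = 0$ for all $x,t$.

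Finally, the two flux statements follow immediately once the velocity statements are in hand. Writing $j_r = v_r\pi$ and applying the product rule, $\nabla\cdot j_r = (\nabla\cdot v_r)\pi + v_r\cdot\nabla\pi = 0 + 0 = 0$, which is exactly the point where the divergence-free property and the transversality of $v_r$ combine; likewise $j_r\cdot\nabla\pi = \pi\,(v_r\cdot\nabla\pi) = 0$. I do not anticipate a genuine obstacle: the argument is two applications of the skew/symmetric trace and quadratic-form identities plus the Gaussian gradient relation. The only point requiring care is interpreting the transversality condition as $v_r\cdot\nabla\pi = 0$ (equivalently $j_r\cdot\nabla\pi = 0$), matching the stated geometric claim that the rotational flow runs perpendicular to the isoclines of $\pi$, rather than reading $\nabla v_r$ literally as a Jacobian, which would not produce a scalar that vanishes.
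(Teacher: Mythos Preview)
Your proposal is correct and matches the paper's proof essentially step for step: the divergence computation via $\operatorname{tr}(\alpha\Sigma^{-1})=0$, the transversality via vanishing of the skew-symmetric quadratic form $x^{\intercal}\Sigma^{-1}\alpha\Sigma^{-1}x$, and the product-rule combination for $j_r$. You also correctly flag that the lemma's notation $\nabla\pi\cdot\nabla v_r$ should be read as $\nabla\pi\cdot v_r$, which is exactly what the paper computes.
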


\begin{proof}
First, the divergence of the rotational velocity $\nabla \cdot v_r(x,t) = \nabla \cdot \alpha(t) \Sigma^{-1}(t) x = \text{trace}(\alpha(t) \Sigma^{-1}(t))$. The matrix $\alpha(t)$ is skew-symmetric by construction, while $\Sigma^{-1}$ is symmetric since it is the inverse of a covariance. The trace of any skew-symmetric matrix times a symmetric matrix is always zero since the trace of a product of matrices is the matrix inner product, and the set of skew-symmetric matrices is orthogonal to the set of symmetric matrices under the matrix inner product. Therefore, $\nabla \cdot v_r(x,t) = 0$. 

The rotational velocities are transverse since $\nabla \pi(x,t) \cdot v_r(x,t) \propto x^{\intercal} \Sigma^{-1}(t) \alpha(t) \Sigma^{-1}(t) x = y^{\intercal} \alpha(t) y$ for some $y$. The matrix $\alpha(t)$ is skew-symmetric, so so all quadratic forms $y^{\intercal} \alpha(t) y = 0$. Therefore, $\nabla \pi(x,t) \cdot v_r(x,t) = 0$.

The fluxes are proportional to the velocities. It follows that they rotational flux $j_r$ is also perpendicular to $\nabla \pi(x,t)$ so is transverse. The divergence of the rotational fluxes is $\nabla \cdot v_r(x,t) \pi(x,t) = [\nabla \cdot v_r(x,t)] + v_r(x,t) \cdot \nabla \pi(x,t) = 0 + 0 = 0$. The first term is zero since the rotational velocities are divergence-free. The second is zero since they are transverse.
\end{proof}

Stepping back, the rotational fluxes/velocities describe the motion of the probability fluid along the isoclines of the probability distribution. These fluxes do not change the distribution since they are incompressible, and since the velocities corrspond to motion along the isoclines of the distribution. It follows that the remaining ``dilational" currents are responsible for changes in the shape of the distribution \cite{tomita1974irreversible}. 

\vspace{0.1 in}
\noindent \textbf{How is $\alpha(t)$ related to the angular momentum of the probability fluid?}
\vspace{0.05 in}

The angular momentum tensor, $L(t)$ for the probability fluid at time $t$ has $i,j$ entries:
\begin{equation} \label{eqn: angular momentum matrix}
    L(t)_{i,j} = \int_{x \in \mathbb{R}^d} [x_i,x_j] \times [j_i(x,t),j_j(x,t)] dx = \mathbb{E}_{X \sim \pi(\cdot,t)}\left[ x^{\intercal} R^{(i,j)} v(x,t)  \right]
\end{equation}
where $R^{(i,j)} = e_i e_j^{\intercal} - e_j e_i^{\intercal}$ is the unitary matrix that performs a ninety-degree rotation in the $i,j$ coordinate plane. Like area production rates, the angular momentum of the probability fluid is an indicator of nonequilibrium dynamics \cite{mellor2016characterization,zia2007probability}. Following \cite{weiss2019nonequilibrium}, we show that the $L(t)$ is proportional to $\alpha(t)$. 

\begin{Lemma}
    The angular momentum matrix $L(t) = 2 \alpha(t)$ and converges to $2 \alpha_*$ at steady-state.
\end{Lemma}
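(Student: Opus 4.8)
The plan is to evaluate the defining expectation $L(t)_{i,j} = \mathbb{E}_{X \sim \pi(\cdot,t)}\big[x^{\intercal} R^{(i,j)} v(x,t)\big]$ by exploiting the fact that, for an OU process initialized from a mean-zero Gaussian, the probability velocity is linear, $v(x,t) = M(t)\,x$ with $M(t) = \big(\tfrac{1}{2} D - A\Sigma(t)\big)\Sigma(t)^{-1}$ (read off from the flux relation, up to the overall sign convention for $v$), so the integrand $x^{\intercal} R^{(i,j)} M(t)\, x$ is a quadratic form. First I would rewrite this as $\mathrm{trace}\big(R^{(i,j)} M(t)\, x x^{\intercal}\big)$ and pull the expectation inside the trace, using $\mathbb{E}[x x^{\intercal}] = \Sigma(t)$ (the mean is zero under the stated initialization). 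This collapses the $d$-dimensional integral to the purely algebraic identity $L(t)_{i,j} = \mathrm{trace}\big(R^{(i,j)} M(t)\,\Sigma(t)\big)$, and the factor $\Sigma(t)^{-1}$ in $M(t)$ cancels against $\Sigma(t)$, leaving $M(t)\Sigma(t) = \tfrac{1}{2} D - A\Sigma(t)$.

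Next I would split $A\Sigma(t)$ into its symmetric and skew-symmetric parts exactly as in the decomposition preceding the lemma. Using the covariance ODE \eqref{eqn: covariance dynamics}, $A\Sigma(t) + \Sigma(t) A^{\intercal} = D - \tfrac{d}{dt}\Sigma(t)$, so the symmetric part of $A\Sigma(t)$ is $\tfrac{1}{2}\big(D - \tfrac{d}{dt}\Sigma(t)\big)$ and its skew-symmetric part is $\alpha(t)$. Substituting gives $\tfrac{1}{2} D - A\Sigma(t) = \tfrac{1}{2}\tfrac{d}{dt}\Sigma(t) - \alpha(t)$, that is, a symmetric ``dilational'' piece plus the skew piece $-\alpha(t)$.

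The decisive step is the orthogonality argument already established in the previous lemma: the rotation generator $R^{(i,j)} = e_i e_j^{\intercal} - e_j e_i^{\intercal}$ is skew-symmetric, and the trace of a skew-symmetric matrix against a symmetric matrix vanishes. Hence $\mathrm{trace}\big(R^{(i,j)} \tfrac{d}{dt}\Sigma(t)\big) = 0$ --- the transient dilational current carries no angular momentum --- and only the $\alpha(t)$ term survives. I would then evaluate the surviving trace entrywise via $\mathrm{trace}\big(R^{(i,j)} B\big) = B_{ji} - B_{ij}$; applied to the skew-symmetric $-\alpha(t)$ this produces $2\,\alpha(t)_{i,j}$, which simultaneously pins down the factor of two and reduces the matrix claim to the entrywise identity $L(t)_{i,j} = 2\,\alpha(t)_{i,j}$, i.e. $L(t) = 2\alpha(t)$. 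Finally, since solutions of \eqref{eqn: covariance dynamics} converge to the Lyapunov solution $\Sigma_*$, we have $\alpha(t) \to \alpha_*$, and therefore $L(t) \to 2\alpha_*$ at steady state.

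I expect the only real obstacle to be bookkeeping rather than anything conceptual: carefully carrying the sign conventions for $v$ and for the cross-product/rotation generator so that the factor emerges as exactly $+2$, and justifying the reduction $\mathbb{E}[x^{\intercal} B x] = \mathrm{trace}(B\Sigma(t))$ under the mean-zero initialization (for a nonzero transient mean one would instead work with angular momentum about the instantaneous mean). The conceptual content --- that the symmetric dilational part of the flux is annihilated by the skew generator $R^{(i,j)}$, leaving precisely the skew-symmetric $\alpha(t)$ --- is exactly the orthogonality fact proved in the preceding lemma, so no new machinery is required.
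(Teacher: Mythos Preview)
Your proposal is correct and follows essentially the same route as the paper: split the velocity into dilational and rotational parts, kill the symmetric dilational contribution via the skew--symmetric/symmetric trace orthogonality, and read off the factor of two from the entrywise trace against $R^{(i,j)}$. The only technical difference is that the paper works in the flux form $j_r = -\alpha(t)\nabla\pi$ and invokes the integration-by-parts identity \eqref{eqn: trace integral}, $-\int x^{\intercal} M\nabla\mu(x)\,dx = \mathrm{trace}(M)$, whereas you work directly in the velocity form and use the second-moment identity $\mathbb{E}[x^{\intercal} B x] = \mathrm{trace}(B\,\Sigma(t))$; for a mean-zero Gaussian these are the same computation, so the two arguments coincide once the bookkeeping is done.
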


\begin{proof} First, expand the velocities in equation \eqref{eqn: angular momentum matrix}:
$$
\begin{aligned}
L(t)_{i,j} & = \mathbb{E}_{X \sim \pi(\cdot,t)}\left[ x^{\intercal} R^{(i,j)} v_d(x,t)  \right] + \mathbb{E}_{X \sim \pi(\cdot,t)}\left[ x^{\intercal} R^{(i,j)} v_r(x,t)  \right].
\end{aligned}
$$

We will show that the first integral is zero and the second returns $\alpha(t)$ via the following relation:
\begin{equation} \label{eqn: trace integral}
- \int_{x \in \mathbb{R}^d} x^{\intercal} M \nabla \mu(x) dx = \text{trace}(M)
\end{equation}
for any differentiable density $\mu$ that whose tails vanish exponentially or faster at infinity (see Appendix \ref{app: trace proof}). 

In particular:
$$
\begin{aligned}
        & \mathbb{E}_{X \sim \pi(\cdot,t)}\left[ x^{\intercal} R^{(i,j)} v_d(x,t)  \right] = \frac{1}{2}\int_{x \in \mathbb{R}^d} x^{\intercal} R^{(i,j)} \left[\frac{d}{dt} \Sigma(t) \right] \nabla \pi(x,t) dx \propto \text{trace}\left(R^{(i,j)} \frac{d}{dt} \Sigma(t) \right) \\ 
         & \mathbb{E}_{X \sim \pi(\cdot,t)}\left[ x^{\intercal} R^{(i,j)} v_r(x,t)  \right] = -\int_{x \in \mathbb{R}^d} x^{\intercal} R^{(i,j)} \alpha(t) \nabla \pi(x,t) dx = \text{trace}\left(R^{(i,j)} \alpha(t) \right) \\
\end{aligned}
$$

The matrix $R^{(i,j)}$ is skew-symmetric, and $\Sigma(t)$ is symmetric at all times, so the dilational term is the matrix inner product of a symmetric and skew-symmetric matrix. Therefore, the dilational term is zero. The rotational term is:
$$
\text{trace}\left(R^{(i,j)} \alpha(t) \right) = \sum_{k,l} R^{(i,j)}_{k,l} \alpha(t)_{k,l} = \alpha_{i,j}(t) - \alpha_{j,i}(t) = 2 \alpha_{i,j}(t).
$$

Therefore, $L(t) = 2 \alpha(t)$. Since $\alpha(t) \rightarrow \alpha_*$, $L(t)$ converges to $2 \alpha_*$ \end{proof}

\noindent \textbf{How is $\alpha_*$ related to work along trajectories and maximally probable trajectories?} 
\vspace{0.05 in}

Given an SDE of the form $dX(t) = \mu(X(t)) dt + G dW$ and with diffusion tensor, $D = \frac{1}{2} G G^{\intercal}$, define the vector field $w(x) = 2 D^{-1} \mu(x)$. This vector field plays a central role in the thermodynamic analysis of SDE's. Path integrals against $w(x)$ are proportional to the work needed to traverse the corresponding path in physically derived processes \cite{strang2020applications}. The vector field $w(x)$ also plays a central role in Friedlin-Wentzell quasi-potential analysis \cite{bender2013advanced,bressloff2014stochastic,ventsel1970small}. In particular, in a small-noise limit, the asymptotic probability of large fluctuations, and associated quantities, such as expected passage times, are characterized by trajectories that minimize work when integrated against $w(x)$ \cite{nolting2016balls,strang2019avoid,zhou2012quasi}.  

An SDE obeys detailed balance if and only if the vector field $w(x)$ can be expressed as the gradient of some potential function, $w(x) = -\nabla u(x)$ \cite{strang2020applications}. Such a potential only exists if $\partial_{x_i} \mu_j(x) = \partial_{x_j} \mu_i(x)$ for all $x$. If $\mu(x) = - A x$, as in an OU process, this requires that $[D^{-1} A]_{i,j} = [D^{-1} A]_{j,i}$, or $D^{-1} A = (D^{-1} A)^{\intercal}$. This symmetry condition is equivalent to the requirement that $\alpha_* = 0$ for OU processes, and can be shown by rearranging the algebra that led to equation \eqref{eqn: commutator for area production}.

If the SDE does not obey detailed balance, then $w(x)$ is not the gradient of any potential function, so is not conservative. Since any conservative vector field defines an equilibrium process, it is natural to seek decompositions that separate $w(x)$ into a conservative component associated with an equilibrium process, and a circulatory component. 

Two different decompositions are relevant here. If the process is diffusion-dominated, then the Helmholtz decomposition provides the most useful decomposition. For example, the circulating component of the Helmholtz decomposition can be used perturbatively to study systems near to equilibrium \cite{strang2020applications}. This expansion coincides with ``linear" thermodynamics, which studies systems that weakly violate detailed balance (c.f.~\cite{biot1958linear,garcia1991extended,onsager1931reciprocal1,onsager1931reciprocal2,schnakenberg1976network,yamamoto2016linear}). 

The Helmholtz decomposition expresses $w(x)$ as:
\begin{equation}
w(x) = w_c(x) + w_r(x) \text{ where } w_c(x) = -\nabla u(x) \text{ and } \nabla \cdot w_r(x) = 0. 
\end{equation}

To perform the Helmholtz decomposition, it suffices to find a potential $u(x)$ that such that the remainder $w_r(x) = w(x) - w_c(x)$ is incompressible. This leads to the Poisson equation:
\begin{equation} \label{eqn: Poisson}
\nabla \cdot D (\nabla u(x) + w(x)) = 0. 
\end{equation}

If the process is drift-dominated, as in a small-noise limit, then the relevant decomposition separates $w$ into a conservative component $w_c$ and a circulating component $w_r$ where $w_r$ is transverse to the conservative component \cite{nolting2016balls}. Then:
\begin{equation}
w(x) = w_c(x) + w_r(x) \text{ where } w_c(x) = - \nabla u(x) \text{ and } w_r(x) \perp w_c(x) \text{ for all } x.
\end{equation}

These constraints lead to a Hamilton-Jacobi equation \cite{nolting2016balls}:
\begin{equation} \label{eqn: Hamilton Jacobi}
\nabla u(x) \cdot D (\nabla u(x) + w(x)) = 0.
\end{equation}

Generically, these two decompositions differ. If there exists a potential $u$ that satisfies both the Poisson equation \eqref{eqn: Poisson} and the Hamilton-Jacobi equation \eqref{eqn: Hamilton Jacobi}, then that potential uniquely specifies the steady-state distribution. In particular, if $u(x)$ satisfies both the Poisson equation \eqref{eqn: Poisson} and the Hamilton-Jacobi equation \eqref{eqn: Hamilton Jacobi}, then:
\begin{equation}
\log(\pi_*(x)) = - \frac{1}{2} u(x)
\end{equation}
where the equality holds up to some additive constant.

This result can be checked by direct substitution into the Fokker-Planck equation. If $\pi_*(x) \propto \exp(- u(x))$ for some function $u(x)$, then setting $\nabla \cdot j_*(x) = 0$ is equivalent to setting $\nabla \cdot D (\nabla u(x) + w(x)) = \nabla u(x) \cdot D (\nabla u(x) + w(x))$. If $u(x)$ solves the Poisson and Hamilton-Jacobi equations, then both sides are zero. Indeed, if $-\log(\pi_*(x))$ satisfies either equation, then one side must be zero, and both sides must be equal, so $u_*(x) = -\log(\pi_*(x))$ must satisfy both equations. Any function of the form $u_*(x) = -\log(\pi_*(x))$ is the effective potential.

\begin{Lemma} Given an OU process, there exists a potential function $u(x)$ that satisfies both the Poisson and Hamilton-Jacobi equations, and it induces a decomposition $w(x) = w_c(x) + w_r(x)$ where $w_c(x) = - \nabla u(x)$, $w_r(x)$ is incompressible and divergence free, and:
\begin{equation}
    w_r(x) \propto \alpha_* w_c(x).
\end{equation}
\end{Lemma}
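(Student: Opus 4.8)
The plan is to exhibit the potential explicitly, verify it solves both PDEs by reducing each to a statement about symmetric versus skew-symmetric matrices, and then read off the decomposition using the Lyapunov equation. The preceding discussion already forces the candidate: any $u$ with $u = -\log(\pi_*)$ solves both the Poisson and Hamilton-Jacobi equations, and the steady state is Gaussian, so I would take $u(x) = -\log(\pi_*(x)) = \frac{1}{2} x^\intercal \Sigma_*^{-1} x$ up to an additive constant. Because $\Sigma_*$ solves the Lyapunov equation \eqref{eqn: Lyapunov}, this $u$ is well defined and $\nabla u(x) = \Sigma_*^{-1} x$ is linear, which is exactly what makes the remaining argument purely algebraic.

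First I would verify the two PDEs. Since $w(x) = 2 D^{-1}\mu(x) = -2 D^{-1} A x$ and $\nabla u(x) = \Sigma_*^{-1} x$ are both linear, the remainder $w_r(x) = w(x) + \nabla u(x)$ is linear, say $w_r(x) = M x$. The Poisson equation $\nabla\cdot D(\nabla u + w) = 0$ then collapses to the scalar condition $\text{trace}(D M) = 0$, while the Hamilton-Jacobi equation $\nabla u\cdot D(\nabla u + w) = 0$ collapses to the requirement that the quadratic form $x^\intercal \Sigma_*^{-1} D M x$ vanish for all $x$, i.e.\ that $\Sigma_*^{-1} D M$ be skew-symmetric. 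The plan is to show both conditions hold once $M$ is identified.

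The heart of the proof is the computation of $M$ via the Lyapunov equation. Rewriting \eqref{eqn: Lyapunov} as $A\Sigma_* = \frac{1}{2} D + \alpha_*$ (the symmetric part of $A\Sigma_*$ is $\frac{1}{2}D$ and the skew part is $\alpha_*$, exactly as in \eqref{eqn: commutator for area production}) gives $D^{-1} A = \frac{1}{2}\Sigma_*^{-1} + D^{-1}\alpha_* \Sigma_*^{-1}$. Substituting into $w_r(x) = -2 D^{-1} A x + \Sigma_*^{-1} x$ cancels the $\Sigma_*^{-1} x$ terms and leaves $w_r(x) = -2 D^{-1}\alpha_* \Sigma_*^{-1} x$. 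Since $w_c(x) = -\nabla u(x) = -\Sigma_*^{-1} x$, this is precisely $w_r(x) = 2 D^{-1}\alpha_* w_c(x)$, which is the claimed proportionality $w_r \propto \alpha_* w_c$. Incompressibility of the associated flux is then immediate: $D w_r(x) = -2\alpha_* \Sigma_*^{-1} x$, so $\nabla\cdot(D w_r) = \text{trace}(-2\alpha_* \Sigma_*^{-1}) = 0$ because $\alpha_*$ is skew-symmetric and $\Sigma_*^{-1}$ is symmetric, which is the same orthogonality of symmetric and skew matrices used earlier in showing that the rotational velocities are divergence-free; this simultaneously discharges the Poisson condition $\text{trace}(DM)=0$. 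Transversality (the Hamilton-Jacobi condition) follows the same way, since $x^\intercal \Sigma_*^{-1} \alpha_* \Sigma_*^{-1} x = 0$ for all $x$.

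The step I expect to be the main obstacle is bookkeeping rather than conceptual: keeping the diffusion metric $D$ in the correct place. The clean, coordinate-free statement is that $D w_r$ (equivalently the probability flux, which is the object that must be divergence-free) is proportional to $\alpha_* w_c$; the bare field $w_r$ carries an extra factor of $D^{-1}$, so the proportionality in the lemma is most safely read in the $D$-weighted sense and becomes literal when $D \propto I$. I would therefore be careful to track the factors of two and the sign conventions implicit in $w = 2D^{-1}\mu$ and $u = -\log\pi_*$, since those are the only places an error can enter; the essential mechanism, that splitting $A\Sigma_*$ into its symmetric and skew parts isolates $\alpha_*$ as the generator of the rotational field, is robust.
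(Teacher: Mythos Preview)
Your proposal is correct and follows essentially the same route as the paper: take $u(x)=\tfrac{1}{2}x^\intercal\Sigma_*^{-1}x$, verify the Poisson and Hamilton--Jacobi equations via the symmetric/skew decomposition $A\Sigma_*=\tfrac{1}{2}D+\alpha_*$, and read off $w_r$ from $w_r(x)=w(x)+\nabla u(x)$. Your bookkeeping is in fact tighter than the paper's own argument: the paper's final chain of equalities silently drops a factor of $D^{-1}$ (and flips a sign) to arrive at $w_r=-2\alpha_* w_c$, whereas your computation correctly yields $w_r=2D^{-1}\alpha_* w_c$, and your caveat that the stated proportionality $w_r\propto\alpha_* w_c$ is literal only when $D\propto I$ (or should be read in the $D$-weighted sense, i.e.\ $Dw_r=-2\alpha_*\nabla u$) is exactly right.
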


\begin{proof}
Given an OU process, $w(x) =- D^{-1} A x$, so is not $L_1$ or $L_2$ integrable. It follows that the Helmholtz decomposition is not unique. That said, there is a unique choice of the potential $u(x)$ that satisfies both the Poisson and Hamilton-Jacobi equations, thus equals the effective potential. 

Recall that the steady state of an OU process is Gaussian with covariance $\Sigma_*$ where $\Sigma_*$ solves the Lyapunov equation \eqref{eqn: Lyapunov}. Therefore, the effective potential is proportional to,
\begin{equation}
u_*(x) = \frac{1}{2} x^{\intercal} \Sigma_*^{-1} x.
\end{equation}

To show that this potential simultaneously solves the Poisson and Hamilton-Jacobi equations it suffices to show that it satisfies one of the two. The Poisson equation is:
$$
\nabla \cdot D (\nabla u_*(x) - 2 D^{-1} A x) = \nabla \cdot (D \Sigma_*^{-1} x - 2 A x) = \text{trace}(D \Sigma_*^{-1}) - 2 \text{trace}( A). 
$$

Therefore, the Poisson equation is satisfied if $\text{trace}(\Sigma_*^{-1}) = 2 \text{trace}(D^{-1} A)$. To show that $\Sigma_*$ satisfies this equality, multiply the Lyapunov equation \eqref{eqn: Lyapunov} by $\Sigma_*^{-1}$ on the right. Then:
$$
A + \Sigma_* A^{\intercal} \Sigma_*^{-1} = D \Sigma_*^{-1}.
$$
Evaluate a trace on both sides, and recall that the trace is preserved by similarity transforms and transposes. Then $\text{trace}(A) + \text{trace}(\Sigma_* A^{\intercal} \Sigma_*^{-1}) = 2 \text{trace}(A) = \text{trace}(D \Sigma_*^{-1}))$.

So, given an OU process, the effective potential function $u_*(x) = \frac{1}{2} x^{\intercal} \Sigma_* x$ provides a valid Helmholtz decomposition and Hamilton-Jacobi decomposition. This potential is meaningful. In particular, it coincides with the quasipotential, so, in a small noise limit, the expected hitting time between two states is exponential in their potential difference when moving uphill on the potential \cite{nolting2016balls,ventsel1970small,zhou2012quasi}. Since $u)*9x)$ solves both the Poisson and Hamilton-Jacobi equations, the remaining rotational component $w_r(x)$ responsible for nonequilibrium behavior is, like the rotational velocities and fluxes, both incompressible and transverse. 

The two components of $w(x)$ are:
\begin{equation}
w_c(x) = - \Sigma_*^{-1} x, \quad w_r(x) = w(x) - w_c(x) 
\end{equation}

Like the rotational velocities and fluxes, the rotational component can be expressed using $\alpha_*$. In particular:
$$
\begin{aligned} 
w_r(x) & = (-2 D^{-1} A + \Sigma_*^{-1} ) x = D^{-1} (-2 A \Sigma_* + D) \Sigma_*^{-1} x = (D - 2 A \Sigma_*) w_c(x) \\
& = (\Sigma_* A^{\intercal} - A \Sigma_* ) w_c(x) = -2 \alpha_* w_c(x). 
\end{aligned}
$$

Therefore:
\begin{equation} \label{eqn: rotational and circulatory via alpha}
w_r(x) = -2 \alpha_* w_c(x) = -L_* w_c(x) =  L_* \nabla u_*(x)
\end{equation}
where $2 \alpha_* = L_*$ is the steady-state angular momentum matrix responsible for the irreversible circulation of fluctuations. Thus, given an OU process, $\alpha_*$ is responsible for transforming between the conservative and rotational components of the vector-field that is integrated to define work and to identify most likely paths. \end{proof}


Equation \eqref{eqn: rotational and circulatory via alpha} is particularly compelling in two dimensions. In two dimensions:
\begin{equation} \label{eqn: two-d area production}
\alpha_* = \omega R, \text{ where } R = \left[\begin{array}{cc} 0 &  1 \\ -1 & 0 \end{array} \right].
\end{equation}

Then, since $R$ is unitary, $\| w_r(x)\|_{2} = 2 \omega \| w_c(x) \|_2$ for all $x$. Rearranging, 
\begin{equation}
\alpha_* = \frac{1}{2} \frac{\|w_r(x)\|_2}{\|w_c(x)\|_2} R. 
\end{equation}

Thus, in two-dimensions, the entries of $\alpha_*$ express the ratio of the rotational component of $w$ to the conservative component of $w$ at all $x$. The larger $\alpha_*$, the larger $w_r$ relative to $w_c$. 

\vspace{0.1 in}
\noindent \textbf{How is $\alpha_*$ related to the stream function?}
\vspace{0.05 in}

In the previous section, we showed that $w(x) = D^{-1} A x$ can be expanded, $w(x) = w_c(x) + w_r(x) = w_c(x) - 2 \alpha_* w_c(x) = -(I - 2 \alpha_*) \nabla u_*(x)$ where $u_*(x) = \frac{1}{2} x^{\intercal} \Sigma_*^{-1} x$ is the effective potential which simultaneously solves the Poisson and Hamilton-Jacobi equations associated with a Helmholtz, and quasi-potential, decomposition. It follows that, the rotational component can be expanded:
\begin{equation} \label{eqn: rotational forces}
    w_r(x) = 2 \alpha_* \nabla u_*(x).  
\end{equation}

A vector field of the form \eqref{eqn: rotational forces}, if set as the flow of an ODE, defines a Hamiltonian dynamic. In particular, for a two-dimensional process:
\begin{equation}
    w_r(x) = 2 \omega R \nabla u_*(x) = 2 \omega \nabla \times u_*(x) = \nabla \times (2 \omega u_*(x))
\end{equation}
where $\nabla \times$ is the curl. In this setting, $2 \omega u_*$ is the stream function for the rotational component of the vector field $w$ \cite{skinner2021estimating}. The stream function is the unique function (up to a constant) such that, the curl of the stream function recovers $w_r$. Following \cite{skinner2021estimating}:

\begin{Lemma} \label{lem: stream function}
    In two-dimensions, the magnitude of $\alpha_*$, $\omega$, equals one-half the expected value of the stream function $u_*(X)$ when $X$ is drawn from the steady state distribution $\pi_*$.
\end{Lemma}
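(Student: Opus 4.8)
The plan is to combine the explicit Gaussian form of the effective potential with the standard identity for the expectation of a quadratic form, and then to track carefully the factor of two hidden in the definition of the stream function. Recall from the preceding discussion that in two dimensions $\alpha_* = \omega R$, and that the stream function for the rotational component $w_r$ is $\psi(x) = 2\omega\, u_*(x)$, where $u_*(x) = \tfrac12 x^{\intercal} \Sigma_*^{-1} x$ is the effective potential that simultaneously solves the Poisson and Hamilton-Jacobi equations. The object I must evaluate is therefore $\mathbb{E}_{X \sim \pi_*}[\psi(X)] = 2\omega\, \mathbb{E}_{X \sim \pi_*}[u_*(X)]$, so the lemma will follow immediately once I show that $\mathbb{E}_{X \sim \pi_*}[u_*(X)] = 1$.

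The central computation is the expectation of the quadratic form $u_*$ against the Gaussian steady state $\pi_* = \mathcal{N}(0,\Sigma_*)$. Using the identity $\mathbb{E}_{X \sim \mathcal{N}(0,\Sigma_*)}[X^{\intercal} M X] = \text{trace}(M \Sigma_*)$ with $M = \Sigma_*^{-1}$, I obtain
\begin{equation}
\mathbb{E}_{X \sim \pi_*}[u_*(X)] = \tfrac12 \text{trace}(\Sigma_*^{-1} \Sigma_*) = \tfrac12 \text{trace}(I_d) = \frac{d}{2}.
\end{equation}
This is precisely the equipartition statement that the effective potential carries one-half unit per degree of freedom. Specializing to $d = 2$ gives $\mathbb{E}_{X \sim \pi_*}[u_*(X)] = 1$, and hence $\mathbb{E}_{X \sim \pi_*}[\psi(X)] = 2\omega \cdot 1 = 2\omega$, i.e. $\omega = \tfrac12\, \mathbb{E}_{X \sim \pi_*}[\psi(X)]$, which is the asserted relation.

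I expect essentially no analytic obstacle here; the entire content reduces to a single Gaussian moment. The one place to be careful is the factor-of-two bookkeeping: the stream function already carries the factor $2\omega$, while the expected potential supplies a compensating $d/2 = 1$ in two dimensions, so the explicit one-half out front is exactly what converts $\mathbb{E}[\psi]$ back into $\omega$. It is worth flagging that the identity is genuinely two-dimensional: in $d$ dimensions the same computation yields $\mathbb{E}_{\pi_*}[\psi] = d\,\omega$, so the clean relation $\omega = \tfrac12 \mathbb{E}[\psi]$ holds only when $d = 2$, which is also the only regime in which $\alpha_*$ collapses to a single scalar $\omega$ and a scalar stream function is well defined.
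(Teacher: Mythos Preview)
Your proof is correct and follows essentially the same approach as the paper: both reduce the claim to showing $\mathbb{E}_{X\sim\pi_*}[u_*(X)]=1$ via the standard Gaussian second-moment computation, with the paper whitening to a $\chi^2_d$ variable and you using the equivalent trace identity $\mathbb{E}[X^{\intercal}\Sigma_*^{-1}X]=\text{trace}(I_d)=d$. Your added remark that the identity is specific to $d=2$ is a nice clarification not spelled out in the paper.
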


\begin{proof}
    It suffices to show that $\mathbb{E}_{X \sim \pi_*}[u_*(X)] = 1$. The expectation is,  $\mathbb{E}_{X \sim \pi_*}[u_*(X)] = \frac{1}{2} \mathbb{E}_{X \sim \pi_*}[X^{\intercal} \Sigma_*^{-1} X]$. Let $Y = \Sigma_*^{-1/2} X$. Then $\mathbb{E}_{X \sim \pi_*}[u_*(X)] = \frac{1}{2} \mathbb{E}_{Y \sim \mathcal{N}(0,I)}[\|Y\|^2]$. $\|Y\|^2$ is a $\chi^2$ random variable in $d$ dimensions, so has expectation equal to $d$. Then, when $d = 2$, $\mathbb{E}[u_*(X)] = 1$, so $\omega = \frac{1}{2} \mathbb{E}_{X \sim \pi_*}[2 \omega u_*(X)]$.
\end{proof}

\vspace{0.1 in}
\noindent \textbf{How is $\alpha_*$ related to angular velocity?}
\vspace{0.05 in}

The matrix $\alpha_*$ is also closely related to the expected rate at which sample trajectories orbit the origin given a two-dimensional OU process. This result generalizes to arbitrary dimension by noting that, the projection of an $d$ dimensional OU process onto any pair of the eigenvectors of $A$ produces a two-dimensional OU process. 

Consider a two-dimensional stochastic OU process. Let $\Theta(t) - \Theta(0)$ denote the total angle traced out around the origin by $X(t)$. Then, define the time-averaged angular velocity:
\begin{equation}
    \hat{\omega}(\{X(t)\}_{t=0}^{T}) = \frac{1}{T} (\Theta(T)  - \Theta(0))
\end{equation}

We will show that $\lim_{T \rightarrow \infty} \mathbb{E}[\hat{\omega}(\{X(t)\}_{t=0}^{T})] = \omega$ where $\alpha_* = \omega R$. In this case, the estimator $\hat{\omega}(\{X(t)\}_{t=0}^{T})$ does not converge in probability to its expectation since the variance in the estimator diverges as $\|X(t)\|$ approaches zero. 

First, note that any positive definite change of coordinates preserves the long time expectation of $\hat{\omega}(\{X(t)\}_{t=0}^{T})$ since each complete rotation in the transformed coordinate system will map back to a complete rotation of the same sign in the original coordinate system. Consider the transform, $y = \Sigma_*^{-1/2} x$ that isotropizes the steady state distribution. Let $A^{(y)} = \Sigma_*^{-1/2} A \Sigma_*^{1/2}$ denote the drift matrix in the new coordinates. The diffusion matrix in the new coordinates is:
\begin{equation}
    D^{(y)} = \Sigma_*^{-1/2} D \Sigma_*^{-1/2} = \Sigma_*^{-1/2} (A \Sigma_* + \Sigma_* A^{\intercal}) \Sigma_*^{-1/2} = A^{(y)} + {A^{(y)}}^{\intercal} = 2 A_s^{(y)}
\end{equation}
where $A_s^{(y)} = \frac{1}{2}(A^{(y)} + {A^{(y)}}^{\intercal})$ is the symmetric part of $A^{(y)}$. 

In this coordinate system, $\Sigma_*^{(y)} = I$, so:
\begin{equation}
    \alpha_*^{(y)} = \frac{1}{2}(A^{(y)} I -  I {A^{(y)}}^{\intercal}) = A_a^{(y)}
\end{equation}
where $A_a^{(y)} = \frac{1}{2}(A^{(y)} - {A^{(y)}}^{\intercal})$ is the anti-symmetric part of $A^{(y)}$. We will soon show that $\alpha_*$ coincides with the area production rate for an OU process. Then, using the change of area formula for linear transformations:
\begin{equation}
\alpha_* = \text{det}(\Sigma_*)^{1/2} \alpha_*^{(y)}.
\end{equation}

As before, any skew-symmetric matrix in two-dimensions is proportional to $R$, the ninety-degree rotation matrix, so:
\begin{equation}
    \alpha_*^{(y)} = \omega^{(y)} R.
\end{equation}

Then, in the coordinates that isotropizes the steady state:
\begin{equation}
dY(t) = (A_s^{(y)} + \omega^{(y)} R) Y(t) dt + (2 A_s^{(y)})^{1/2} dW(t).
\end{equation}

To simplify, diagonalize $A_s^{(y)}$. Since $A_s^{(y)}$ is symmetric, it is unitarily diagonalizable, so $A_s^{(y)} = Q \Lambda Q^{\intercal}$ where $Q$ is an orthonormal matrix. Let $Z = Q^{\intercal} Y$. Then:
\begin{equation}
    dZ(t) = (\Lambda + \omega^{(y)} R) Z(t) dt + (2 \Lambda)^{1/2} dW(t)
\end{equation}
where $\alpha_*^{(y)} = \alpha_*^{(z)}$ since $Q^{\intercal} R Q = R$. Alternately, note that  $Q$ defines a unitary transformation, so the area produced in the $z$ coordinate system equals the area produced in the $y$ coordinate system, thus $\omega^{(y)} = \omega^{(z)}$. 

To find the expected angular velocity of $X(t)$ compute the expected angular velocity of $Z(t)$. These angular velocities coincide since completing a full rotation (moving 2 $\pi$ radians) in the coordinates $Z$ is the same as completing a full rotation in $X$. Therefore, even if a constant angular velocity in $Z$ does not translate to a constant angular velocity in $X$, when averaged over many complete rotations the total angle covered will be the same in both coordinates systems up to at most a difference of $\pi$ radians. This difference becomes negligible after many complete rotations, so the mean angular velocity $X(t)$, $Y(t)$ and $Z(t)$ coincide. 

To compute the expected angular velocity, convert from Cartesian coordinates $z_1, z_2$ to polar coordinates $r, \theta$. Let $r(z_1,z_2) = \sqrt{z_1^2 + z_2^2}$ and $\theta(z_1,z_2) = \tan^{-1}{\left(z_2/z_1\right)}$. Then, in polar coordinates $R(t) = r(Z(t))$, and $\Theta(t) = \theta(Z(t)).$
	
To find the equations governing the motion of $R(t)$ and $\Theta(t)$ we use It\^o's Lemma \cite{gardiner1985handbook}. It\^o's Lemma states that, if $Z(t)$ is a multi-dimensional stochastic process that obeys the SDE $dZ = \mu(Z) dt + G(Z) dW$ and $f(z)$ is a twice differentiable function then, if $F(t) = f(Z(t))$:
\begin{equation} \label{eqn: Ito lemma}
dF = \left(\nabla_z f(Z) \cdot \mu(Z)  + \frac{1}{2} \text{trace}\left(G(Z)^T H_f(Z) G(Z) \right) \right) dt + \nabla_z f(z) \cdot G(Z) dW
\end{equation}
where $H_f(z)$ is the Hessian of $f$ at $z$. 

Applying It\^o's lemma:
\begin{equation} \label{eqn: OU polar}
	\begin{aligned}
	dR(t) = & \lambda \left(\left(1 - \nu \cos{(2 \Theta(t))} \right) \frac{1}{R(t)} - \left(1 + \nu \cos{(2 \Theta(t))} \right) R(t) \right) dt \\ & + \sqrt{2 \lambda} \left( \sqrt{1 + \nu} \cos{(\Theta(t))} dW_1(t) +  \sqrt{1 - \nu} \sin{(\Theta(t))} dW_2(t)\right) \\
    d\Theta(t) = & \left(\omega^{(y)} - \left(1 - \frac{1}{R(t)^2} \right)\lambda \nu \sin(2 \Theta(t)) \right) dt \\ & + \frac{\sqrt{2 \lambda}}{R(t)}\left( -\sqrt{1 - \nu} \sin(\Theta(t)) dW_1(t) +  \sqrt{1 + \nu}\cos(\Theta(t)) dW_2(t) \right)
	\end{aligned}
\end{equation}
where $[\lambda_1,\lambda_2] = \lambda[1+ \nu, 1 - \nu]$

This equation simplifies dramatically when $\lambda_1 = \lambda_2$. Then $\nu = 0$, so:
\begin{equation} \label{eqn: symmetric OU polar}
    \begin{aligned}
    & dR(t) = - \lambda\left( R(t) - \frac{1}{R(t)} \right)  + \sqrt{2 \lambda} dW(t) \\
    & d\Theta(t) = \omega^{(y)} dt + \frac{\sqrt{2 \lambda}}{R(t)} dW'(t).
	\end{aligned}
\end{equation}
where $W$ and $W'$ are independent Weiner processes satisfying, $[dW(t),dW'(t)] = R(\Theta(t)) [dW_1(t), dW_2(t)]$ and where $R(\theta)$ is the $\theta$ degree rotation matrix. 

The symmetric process defined in equation \eqref{eqn: symmetric OU polar} is particularly easy to analyze. The radial dynamics decouple from the angular dynamics. The angle $\Theta(t)$ has expected increments $d\Theta(t) = \omega^{(y)}$ at all times, so the expected angular velocity is, at all times, equal to $\omega^{(y)}$. Notice that the noise variance in the angular dynamics depends on the distance to the origin, $R(t)$. The closer to the origin, the greater the noise variance. The noise variance diverges as $X(t)$ approaches the origin. As a result, the long-time average angular velocity, $\hat{\omega}(\{X(t)\}_{t=0}^T)$ will not converge in probability to its expectation. Nevertheless, $\mathbb{E}[d\Theta(t)] = \omega^{(y)} dt$, no matter the state of the process. Here we will show that this same property holds for a generic two-dimensional process provided $X$ is drawn from the steady-state distribution. 


\begin{Lemma} \label{lem: angular velocity} The expected instantaneous angular velocity, $\mathbb{E}[d\Theta(t)]/dt$ when $X \sim \pi_*$, equals $\text{det}(\Sigma_*)^{-1/2} \omega$ where $\alpha_* = \omega R$.
\end{Lemma}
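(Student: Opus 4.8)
My plan is to push the entire computation into the isotropized, diagonalized coordinates $Z = Q^{\intercal}\Sigma_*^{-1/2} X$ constructed above, in which the stationary law is the standard normal $\mathcal{N}(0,I)$ and the angular dynamics are exactly the second line of the polar system \eqref{eqn: OU polar}. Conditional on the current state, the instantaneous expected angular velocity is simply the drift coefficient of $\Theta$, namely
\[
\frac{\mathbb{E}[\,d\Theta(t)\mid Z(t)\,]}{dt} = \omega^{(y)} - \left(1 - \frac{1}{R(t)^2}\right)\lambda\nu\sin\big(2\Theta(t)\big),
\]
so the lemma reduces to averaging this quantity against $\mathcal{N}(0,I)$ and then re-expressing $\omega^{(y)}$ in terms of $\omega$.

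The crucial step is that the state-dependent correction term averages to zero at stationarity. Because $\mathcal{N}(0,I)$ is rotationally invariant, in polar coordinates the radius $R$ and angle $\Theta$ are independent and $\Theta$ is uniform on $[0,2\pi)$. I would therefore integrate over the angle first: for each fixed $r$ the angular integral $\int_0^{2\pi}\sin(2\theta)\,d\theta = 0$ annihilates the whole correction, so the factor $(1 - r^{-2})$, which is badly behaved and in fact not integrable near the origin, never has to be integrated on its own. Equivalently, $\mathbb{E}[\sin(2\Theta)\mid R] = 0$ pointwise, so the tower property gives expectation zero with no integrability concern at $r=0$. What survives is the constant $\omega^{(y)}$, so $\mathbb{E}[d\Theta]/dt = \omega^{(y)}$ at steady state in the $Z$ coordinates. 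I then substitute the change-of-area identity $\alpha_* = \det(\Sigma_*)^{1/2}\alpha_*^{(y)}$, i.e.\ $\omega = \det(\Sigma_*)^{1/2}\omega^{(y)}$, to obtain $\omega^{(y)} = \det(\Sigma_*)^{-1/2}\omega$, the claimed value.

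The one genuine gap to close is that $\Theta$ in the statement is the polar angle of the original process $X$, whereas the clean computation lives in the $Z$ coordinates, and the map $X\mapsto Z$ is a shear-and-scale, not a rotation, so the instantaneous angle is not preserved pointwise. I would bridge this by noting that, by stationarity, the steady-state average of the instantaneous expected angular velocity equals the long-run winding rate $\lim_{T\to\infty}\tfrac{1}{T}\mathbb{E}[\Theta(T)-\Theta(0)]$, and that the total winding of a planar path about the origin is a topological (winding-number) invariant under any orientation-preserving homeomorphism of the punctured plane that fixes the origin. Since $Z = Q^{\intercal}\Sigma_*^{-1/2}X$ is linear, invertible, fixes the origin, and has positive determinant (with $Q\in SO(2)$, consistent with the earlier choice $Q^{\intercal}RQ = R$), the $X$-path and the $Z$-path accumulate the same net winding up to an $O(1)$ boundary term, so their winding rates, and hence their steady-state expected angular velocities, agree. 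This identifies the original-coordinate quantity with $\omega^{(y)} = \det(\Sigma_*)^{-1/2}\omega$.

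I expect the main obstacle to be the interplay between the origin singularity and the averaging: the vanishing of the correction must be argued through the angular average (or a conditioning argument), not by naive radial integration, since $\mathbb{E}[R^{-2}]=\infty$ for the planar standard normal. A secondary, more bookkeeping obstacle is making the coordinate invariance precise rather than heuristic; the winding-number argument is the cleanest route. It is worth emphasizing that the factor $\det(\Sigma_*)^{-1/2}$ does \emph{not} arise from the change of coordinates, since the angular velocity itself is invariant, but from the identity $\alpha_* = \det(\Sigma_*)^{1/2}\alpha_*^{(y)}$ relating the area-production coefficient $\omega$ to the coordinate-invariant angular velocity $\omega^{(y)}$ through the area Jacobian of the isotropizing map. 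As a sanity check I would verify the result by a direct It\^o computation of $\Theta = \arctan(X_2/X_1)$ in a rotationally symmetric special case, where $\theta$ is harmonic and $D$ is isotropic, so the It\^o-correction trace vanishes and the drift collapses to the constant $\omega^{(y)}$ with no averaging required.
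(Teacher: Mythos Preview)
Your proposal is correct and follows essentially the same route as the paper's proof: pass to the isotropized $Z$-coordinates where the stationary law is $\mathcal{N}(0,I)$, read off the drift of $\Theta$ from \eqref{eqn: OU polar}, kill the correction term via the angular average $\int_0^{2\pi}\sin(2\theta)\,d\theta=0$, and translate $\omega^{(y)}$ back using $\omega=\det(\Sigma_*)^{1/2}\omega^{(y)}$. You are in fact more careful than the paper on two points it treats loosely---the $r\to 0$ integrability issue (handled by conditioning on $R$ and averaging over $\Theta$ first) and the coordinate invariance of the steady-state expected angular velocity (which the paper asserts heuristically in the discussion preceding the lemma and you justify via the winding-rate argument).
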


\begin{proof}
First, convert into the $Z$ coordinate system. Then, since we adopted the coordinate system that isotropizes the steady state, $R$ is drawn from a $\chi^2$ distribution, and $\Theta$ is drawn uniformly on $[0,2 \pi]$ when $X \sim \pi_*$. 

Then:
$$
\begin{aligned}
    \mathbb{E}[d\Theta(t)]/dt & = \mathbb{E}_{R}\left[ \frac{1}{2 \pi}  \int_{\theta = 0}^{2 \pi} \frac{1}{dt}\mathbb{E}[d\Theta|\Theta = \theta,R = r] d\theta  \right] \\
    & =  \mathbb{E}_{R}\left[ \frac{1}{2 \pi} \int_{\theta = 0}^{2 \pi} \left(\omega^{(y)} - \left(1 - r^{-2} \right) \lambda \nu \sin(2 \theta) \right) d\theta \right] \\
    & = \mathbb{E}_{R}\left[\omega^{(y)} \right] = \omega^{(y)}
\end{aligned}
$$
where the second equality follows since $\sin(\theta)$ integrates to zero over $[0,2\pi]$.

Therefore, $Z(t)$ has instantaneous expected angular velocity $\omega^{(z)} = \omega^{(y)}$ at all $t$ provided $X(t) \sim \pi_*$. To convert back to $X$, recall that $\alpha_*^{(z)} = \alpha_*^{(y)} = \text{det}(\Sigma_*)^{-1/2} \alpha_*$. The theorem statement follows by setting $\alpha_* = \omega R$ for some $\omega$.
\end{proof}

At first glance, Lemma \ref{lem: angular velocity} suggests a simple estimator for $\alpha_*$. For any two-dimensional OU process, the expected instantaneous angular velocity of the process, when initialized from steady state recovers $\omega$, thus $\alpha_*$. Why not estimate this expected angular velocity by tracking the total angle swept out by long trajectories? 

In this case, even though the OU process is ergodic, the long-time average $\hat{\omega}(\{X(t)\}_{t=0}^{T})$ does not converge to its ensemble average in probability since the noise variance in the angle $\Theta$ diverges as the process approaches the origin. When $R(t)$ is sufficiently small, the process is likely to wander very rapidly around the origin, scrambling the phase information and producing large uncertainty in the angular distance traversed by the process. 

It is natural, then, to consider a weighted estimator, where the angle swept out at each time step is weighted by the distance from the origin. Weighting the angle swept out by the distance from the origin recovers the total area produced along the trajectory. Thus, while $\alpha_*$ recovers the expected instantaneous angular velocity when initialized from steady state, it is more stably estimated by tracking the rate at which the process produces area. In the next section we will prove that the empirical area production rate converges, along long trajectories, to $\alpha_*$ in probability.

\pagebreak
\noindent \textbf{How is $\alpha_*$ related to area production?}
\vspace{0.05 in}

In practice, if we do not know the pair of matrices $(A, G)$, then we cannot solve for $\alpha_*$ directly. Nevertheless, $\alpha_*$ can be estimated by evaluating the empirical area production rate $\hat{\alpha}(\{X(t)\}_{t=0}^T)$ over long trajectories. 

\begin{Lemma} \label{lem: area production}
    The empirical area production rate $\hat{\alpha}(\{X(t)\}_{t=0}^T)$ converges in probability to its expectation, $\alpha_*$ in the limit as $T \rightarrow \infty$. 
\end{Lemma}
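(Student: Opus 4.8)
The plan is to evaluate the stochastic area integral directly by substituting the defining SDE, which cleanly separates $\hat{A}$ into a drift term of bounded variation (governed by the empirical second moments) and a mean-zero martingale term. Writing the area matrix compactly as $\hat{A} = \frac{1}{2}\int_0^T (X\,dX^\intercal - dX\,X^\intercal)$, note that its $i,j$ entry is exactly $\frac{1}{2}\int_0^T (X_i\,dX_j - X_j\,dX_i)$ (the cross product of $[X_i,X_j]$ with $[dX_i,dX_j]$), and this is an It\^o integral because the path-sum uses the left endpoint $X(t_k)$. Inserting $dX = -AX\,dt + G\,dW$ is exact, and the only It\^o correction one might worry about, the symmetric term $D_{ij}\,dt$ appearing in $d(X_iX_j)$, cancels in the antisymmetric combination, so the result is convention-independent. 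I obtain
\[
\hat{A} = \frac{1}{2}\int_0^T (AXX^\intercal - XX^\intercal A^\intercal)\,dt + M_T, \qquad M_T = \frac{1}{2}\int_0^T (X\,dW^\intercal G^\intercal - G\,dW\,X^\intercal).
\]
Dividing by $T$ gives $\hat{\alpha} = \frac{1}{2}(A S_T - S_T A^\intercal) + \tfrac{1}{T}M_T$, where $S_T = \frac{1}{T}\int_0^T XX^\intercal\,dt$ is the empirical second-moment matrix. It then suffices to show $S_T \to \Sigma_*$ in probability and $\tfrac{1}{T}M_T \to 0$ in probability.

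For the drift term I would invoke ergodicity, exactly as the manuscript already does for general observables: since the OU process is ergodic with Gaussian steady state $\pi_*$, the time-average of the observable $X_iX_j$ converges in probability to its steady-state expectation, so $S_T \to \mathbb{E}_{\pi_*}[XX^\intercal] = \Sigma_*$. By continuity of matrix multiplication, the drift term then converges in probability to $\frac{1}{2}(A\Sigma_* - \Sigma_* A^\intercal) = \alpha_*$, using the definition in equation \eqref{eqn: commutator for area production}.

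For the martingale term, $M_T$ is a zero-mean It\^o integral against $dW$, hence a martingale, and I would control it by a second-moment bound. By the It\^o isometry, the quadratic variation of each entry is a time-integral of terms of the form $X_i^2 D_{jj}$; because $A$ is stable and diagonalizable, the second moments $\mathbb{E}[X_i(t)^2]$ stay uniformly bounded in $t$ (they equal $(\Sigma_*)_{ii}$ at steady state and converge to it from any Gaussian start via equation \eqref{eqn: covariance dynamics}), so $\mathbb{E}[\|M_T\|_F^2] = O(T)$. Hence $\mathrm{Var}\!\left(\tfrac{1}{T}M_T\right) = O(1/T)$, and Chebyshev's inequality gives $\tfrac{1}{T}M_T \to 0$ in probability. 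Combining the two limits yields $\hat{\alpha}(\{X(t)\}_{t=0}^T)\to\alpha_*$ in probability.

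The main obstacle is the ergodic convergence $S_T \to \Sigma_*$ in probability: this is a time-average of an \emph{unbounded} (quadratic) observable, so it does not follow from bare ergodicity and needs either the exponential mixing of the OU semigroup together with uniform-in-time boundedness of fourth moments of $X$, or a direct estimate of $\mathrm{Var}(S_T)$ using the exponentially decaying stationary covariance kernel $\mathrm{Cov}(X(s),X(t)) \sim e^{-A|t-s|}$ to show it is $O(1/T)$. Once the moment bounds are established, the martingale estimate is routine; the only care there is to confirm the integrand has finite second moment uniformly in $t$ so that the It\^o isometry applies, which again reduces to boundedness of $\mathbb{E}[X_i(t)^2]$.
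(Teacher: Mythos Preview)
Your proof is correct, but it is structured differently from the paper's argument. The paper does not substitute the SDE or isolate a martingale term; instead it invokes ergodicity directly on the path integral to replace the time average with the steady-state expectation $\tfrac{1}{2}\mathbb{E}_{X\sim\pi_*}[\mathbb{E}_{dX|X}[X^{\intercal}R^{(i,j)}dX]]$, kills the noise via the inner conditional expectation $\mathbb{E}[dX|X]=-AX\,dt$, and then evaluates the remaining Gaussian quadratic form using the trace identity \eqref{eqn: trace integral} to obtain $\tfrac{1}{2}\,\mathrm{trace}(R^{(i,j)}A\Sigma_*)={\alpha_*}_{ij}$. By contrast, you substitute $dX=-AX\,dt+G\,dW$ at the outset, which cleanly splits $\hat{\alpha}$ into a continuous functional of the empirical second-moment matrix $S_T$ plus a mean-zero martingale, and then you handle each piece separately. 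The paper's route is shorter and reuses the trace identity that also powers the angular-momentum and production-rate calculations, so it ties the lemma into the surrounding machinery; your route is more self-contained and more honest about what ergodicity actually delivers, since you flag that $S_T\to\Sigma_*$ requires moment control beyond bare ergodicity, and you give an explicit $O(1/T)$ variance bound on the martingale piece via the It\^o isometry. Both land on the same algebraic identity $\tfrac{1}{2}(A\Sigma_*-\Sigma_*A^{\intercal})=\alpha_*$; yours just arrives there through $S_T$ rather than through the trace integral.
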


\begin{proof}
The empirical estimator $\hat{\alpha}(\{X(t)\}_{t=0}^T)$ is a time-average of a path integral. Since $X(t)$ is an ergodic process, long time averages converge in probability to the expectation of the argument of the path-integral when sampling an initial condition from the steady state distribution. It follows that:
$$
\begin{aligned}
\lim_{T \rightarrow \infty} \hat{\alpha}_{ij}(\{X(t)\}_{t=0}^T)  \longrightarrow & \frac{1}{2} \mathbb{E}_{X \sim \pi_*}\left[ \mathbb{E}_{dX|X}\left[[X_i,X_j] \times [dX_i, dX_j] \right] \right] \\
& = \frac{1}{2}\mathbb{E}_{X \sim \pi_*}\left[ \mathbb{E}_{dX|X}\left[X^{\intercal} R^{(i,j)} dX \right] \right] = \frac{1}{2}\mathbb{E}_{X \sim \pi_*}\left[ X^{\intercal} R^{(i,j)} \mathbb{E}_{dX|X}\left[ dX \right] \right] \\& = \frac{1}{2}\mathbb{E}_{X \sim \pi_*}\left[ X^{\intercal} R^{(i,j)} A X \right] = \frac{1}{2}\mathbb{E}_{X \sim \pi_*}\left[ X^{\intercal} R^{(i,j)} A \Sigma_* \Sigma_*^{-1} X \right] \\& = \frac{1}{2} \text{trace}(R^{(i,j)} A \Sigma_*) = \frac{1}{2} \left( [A \Sigma_*]_{i,j} - [A \Sigma_*]_{j,i} \right) \\
& =  \left[\frac{1}{2}(A \Sigma_* - (A \Sigma_*)^{\intercal} \right]_{i,j} = \left[\frac{1}{2}(A \Sigma_* - \Sigma_* A^{\intercal} \right]_{i,j} \\
& = {\alpha_*}_{ij}
\end{aligned}
$$
where the expectation was reduced via equation \eqref{eqn: trace integral}.
\end{proof}

\begin{figure}[t]
\begin{centering}
\includegraphics[trim = 70 5 70 30, clip, scale = 0.45]{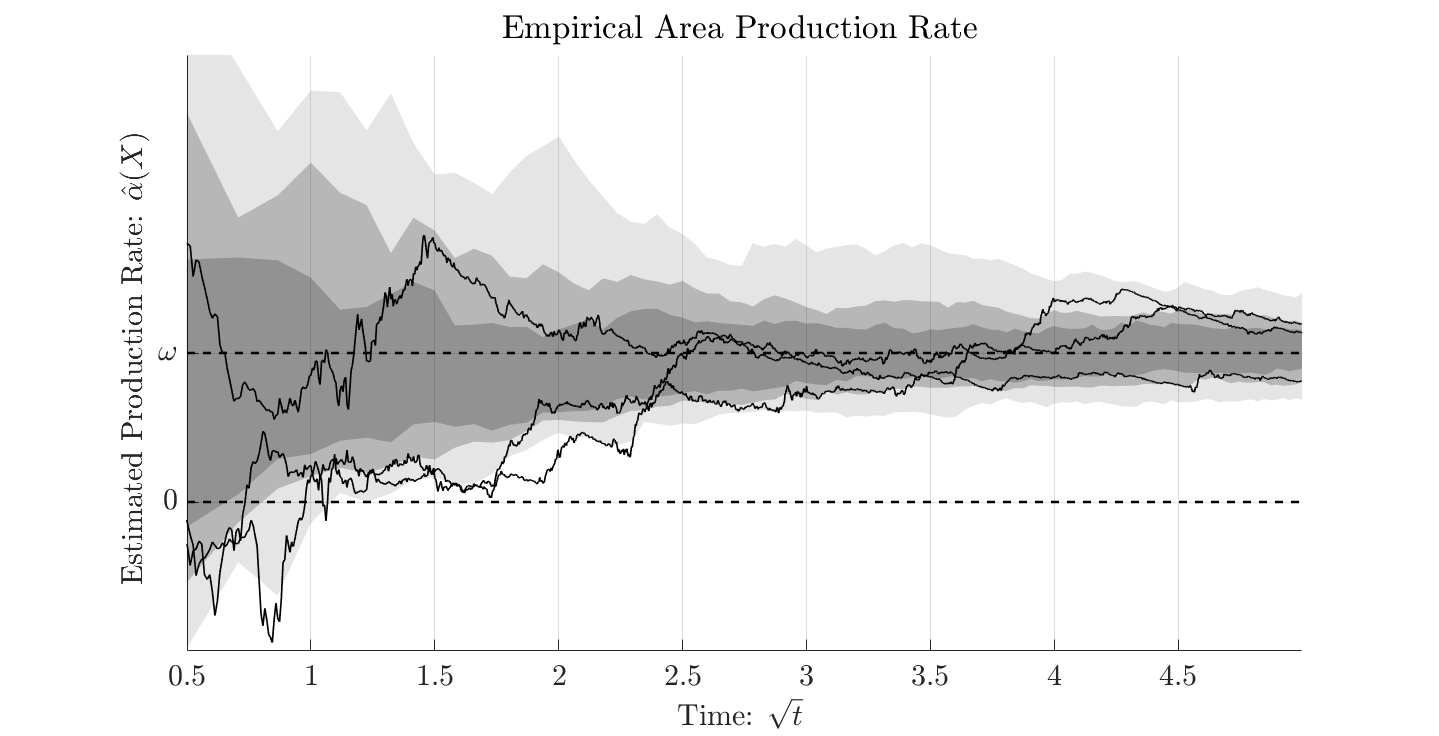}
\caption{Empirical area production rates, $\hat{\alpha}(X)$ converging to the true area production rate $\omega$ for a two-dimensional OU process. Trajectories for three independent samples are plotted in black. The gray shaded regions correspond to estimates for the 90\%, 70\%, and 50\% highest density intervals of the sampling distribution using 60 independent sample trajectories. 
}
\end{centering}
\label{fig: area production estimate}
\end{figure}  

It follows that $\alpha_*$ is the matrix containing the rates at which the OU process traces out signed area in each coordinate plane. For this reason, we call $\alpha_*$ the \textit{area production rate} matrix. Lemma \ref{lem: area production} provides practical advice, since we can estimate $\hat{\alpha}(\{X(t)\}_{t=0}^T)$ with $\hat{\alpha}(\{X(t_k)\}_{k=0}^n)$ provided $\Delta t$ is sufficiently small. See Figure \ref{fig: area production estimate} for an example.


\subsection{Area Production and the Steady State in Two-Dimensions}

So far we have analyzed the area production rate matrix assuming a fixed OU process, and have expressed the area production rate matrix, $\alpha_*$, using the steady state covariance, $\Sigma_*$. Here, we show how, in two-dimensions, to solve for the area production rate without first solving for the steady state covariance. This solution leads to a reparameterization of the process in terms of the area production rate. We can, then, show how the steady state covariance $\Sigma_*$ changes in response to changes in the area production rate. 

To start, adopt a standardized coordinate system. Consider an OU process \eqref{eqn: OU process SDE} generated by the pair of matrices $(A, G)$. Let $y = D^{-1/2} x$ for any symmetric square root of $D$. Then:
\begin{equation}
    dY(t) = - A^{(y)} Y(t) dt + dW(t) \text{ where } A^{(y)} = D^{-1/2} A D^{1/2}.
\end{equation}
Then, $Y(t)$ obeys an OU process with isotropic noise. 

In the isotropized coordinates, $w^{(y)}(y) = A^{(y)} y$. Therefore, the isotropized process obeys detailed balance if and only if the drift matrix $A^{(y)}$ is symmetric. Changing coordinates via $G^{-1}$ is an invertible linear transformation, so the process $Y$ obeys detailed balance if and only if the original process, $X$, obeys detailed balance. Indeed, if $A^{(y)}$ is symmetric then $D^{-1} A = D^{-1} D^{1/2} A^{(y)} D^{-1/2} = D^{-1/2} A^{(y)} D^{-1/2}$ is symmetric. If $D^{-1} A$ is symmetric, then $A^{(y)} = D^{1/2} (D^{-1} A) D^{1/2}$ is symmetric. 

Consider, then, a symmetry, anti-symmetry decomposition of $A^{(y)}$:
\begin{equation}
    A^{(y)} = A_s^{(y)} - A_a^{(y)} \text{ where } \begin{cases} A_s^{(y)} = \frac{1}{2} (A^{(y)} - {A^{(y)}}^{\intercal}) \\ A_a^{(y)} = -\frac{1}{2} (A^{(y)} - {A^{(y)}}^{\intercal})
    \end{cases}
\end{equation}

Detailed balance requires $A_a^{(y)} = 0$. Therefore, any monotonically increasing function of the dimensionless ratio $\|A^{(y)}_a\|/\|A^{(y)}_s\|$ for some choice of matrix norm $\| \cdot \|$ is a reasonable signature of nonequilibrium behavior. We will show that, in two-dimensions, the area production rate takes exactly this form. 

The matrix $A^{(y)}_s$ is symmetric, so is unitarily diagonalizable. It follows that, there must exist a unitary matrix $Q$ such that $Q A^{(y)} Q^{-1} = \Lambda$ where $\Lambda$ is a diagonal matrix with all positive diagonal entries. Let $z = Q y$. Applying a unitary transformation to the Weiner increment produces a new Weiner increment so, in the rotated coordinate system:
\begin{equation} \label{eqn: OU in Z}
    dZ(t) = \left(- \Lambda + A_a^{(z)}\right) Z(t) dt + dW
\end{equation}
where $A^{(z)}_a = Q A_a^{(y)} Q^{-1}$ is the similarity transformation of the skew symmetric component. So, without loss of generality, we restrict our attention to OU processes of the form \eqref{eqn: OU in Z}:
\begin{equation} \label{eqn: OU form}
    dZ(t) = \left(- \Lambda + S \right) Z(t) dt + dW
\end{equation}
where $S$ is a skew-symmetric matrix.

In this coordinate system, $- \Lambda z$ defines a conservative vector field, while $S z$ defines an incompressible, thus rotational, vector field responsible for all violations of detailed balance. The larger $S$, the more strongly the process violates detailed balance. Therefore, we will consider the family of OU processes defined:
\begin{equation}
    dZ(t) = \left(- \Lambda + s \hat{S} \right) Z(t) dt + dW
\end{equation}
where $\hat{S} = S/\|S\|$ and where the parameter $s$ controls the degree of circulation introduced into the system. 

In two-dimensions, the parameter $s$ responsible for varying the degree of circulation in the system can be expressed in terms of the area production rate for the process. In two, dimensions the area production rate matrix is proportional to $R$, with area production rate $\omega$ (see equation \eqref{eqn: two-d area production}). In particular,
\begin{equation} \label{eqn: area production in 2d isotropized}
    \alpha_*^{(z)} = \alpha_*^{(y)} = \omega^{(y)} R \text{   where   } \omega^{(y)} = \frac{1}{2} \frac{a^{(y)}_{12} - a^{(y)}_{21}}{|a^{(y)}_{11} + a^{(y)}_{22}|} = \frac{1}{2} \frac{\|A_a^{(y)}\|_{1}}{\|A_s^{(y)}\|_{1}}
\end{equation}
and where $\|M\|_{1} = \sum_{i,j} |m_{i,j}|$ is the Schatten one norm. The area production rate in the original coordinate system can be recovered from the area production rate in the isotropized coordinate system via scaling by the determinant of the linear transformation responsible for changing between the coordinate systems:
\begin{equation} \label{eqn: area production in 2d}
    \alpha_* = \omega R \text{  where  } \omega = \text{det}(D)^{1/2} \omega^{(y)} = \frac{1}{2}\text{det}(D)^{1/2} \frac{a^{(y)}_{12} - a^{(y)}_{21}}{|\text{trace}(A)|}.
\end{equation}

In two dimensions, all skew symmetric matrices are proportional to $R$. If we adopt the infinity norm, then $\hat{S} = R$. Then, $A^{(z)}(s) = -\Lambda + s R$. Let $\lambda = \bar{\lambda}[1 + \mu, 1 - \mu]$ for $\bar{\lambda} > 0$ and for $\mu \in [0,1)$. 
%
%
By equation \eqref{eqn: area production in 2d}, $\omega^{(z)} = \omega^{(y)} = \frac{1}{2} (s/\bar{\lambda})$ so:
\begin{equation} \label{eqn: standard parameterization}
    A^{(z)}(\bar{\lambda},\mu,\omega^{(y)}) = \bar{\lambda} \left[ \begin{array}{cc} -(1 + \mu) & 2 \omega^{(y)} \\ - 2 \omega^{(z)} & - (1 - \mu) \end{array} \right] = - \Lambda + 2 \bar{\lambda} \omega^{(y)} R
\end{equation}

Equation \eqref{eqn: standard parameterization} provides a standard parameterization for $A^{(z)}$ in terms of a scale parameter $\bar{\lambda}$ that controls the overall size of the drift term relative to the diffusive noise term, an eccentricity parameter $\mu$ that controls the difference in the strength of the restoring force pulling trajectories back towards the origin along the coordinate directions in $Z$, and the area production rate in the isotropized coordinate system, $\omega^{(z)} = \omega^{(y)} = \text{det}(D)^{-1/2} \omega$. Then, $\Sigma_*^{(z)}$ can be expressed as a function of the parameters $\bar{\lambda}, \mu, \omega^{(y)}$. Varying $\omega^{(y)}$ is equivalent to varying the degree of circulation, $s = 2 \bar{\lambda} \omega^{(y)}$ driving the system. So, from now on we will parameterize our OU process by first fixing $\bar{\lambda}, \mu, \omega^{(y)}$, then by fixing the matrices $Q$ and $D^{1/2}$ responsible for converting from $z$ to $y$ to $x$. These transformations are automatic, in particular, $\Sigma_*^{(x)} = D^{1/2} \Sigma_*^{(y)} D^{1/2} = D^{1/2} Q \Sigma_*^{(z)} Q D^{1/2}$, so it suffices to study the behavior of the OU process in the standardized coordinate system. 

\vspace{0.1 in}
\noindent \textbf{How is $\alpha$ related to perturbations of the steady state distribution away from equilibrium?}
\vspace{0.05 in}

\begin{figure}[t]
\includegraphics[trim = 135 50 100 30, clip,width = \textwidth]{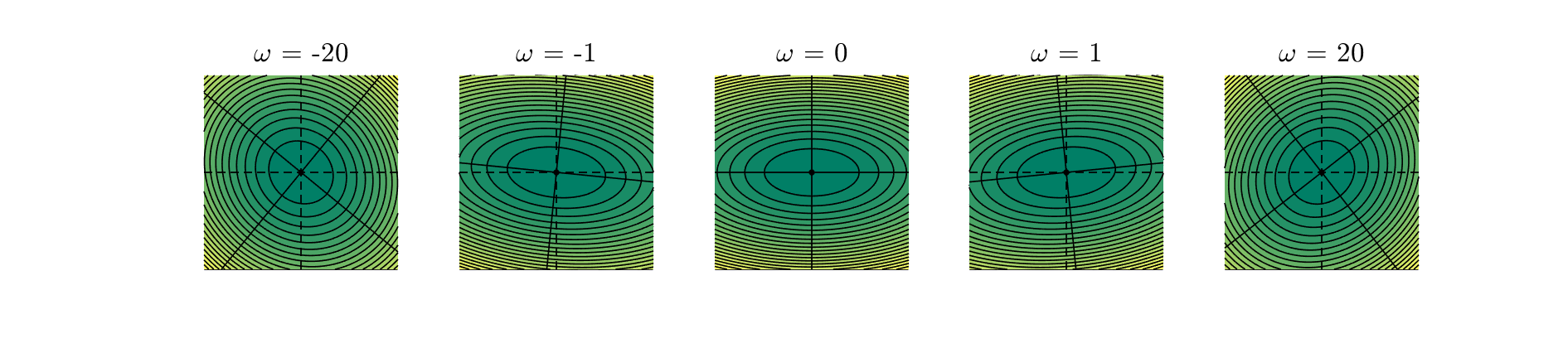}
\caption{Contours of the steady state distribution for varying area production rates $\omega$. The solid black lines are the principal axes (singular vectors) of the steady state covariance, $\Sigma_*(\omega)$. The dashed black lines are the principal axes at equilibrium. Notice that, increasing $\omega$ tilts the steady state away from its equilibrium orientation. The direction and magnitude of the tilt is determined by the sign and magnitude of $\omega$. The tilt increases quickly for $\omega$ near zero and slowly once $\omega$ is large. The contours of the steady-state distribution become less eccentric as the magnitude of $\omega$ increases.}
\label{fig: steady states}
\end{figure}


 Consider the canonical process $dZ(t) = (- \Lambda + s S) Z(t) dt + dW(t)$. How does the steady state distribution $\pi_*^{(z)}$ vary as a function of $s$ for fixed $\Lambda$ and $S$? Since the steady state is necessarily Gaussian with mean zero, it suffices to solve for the steady state covariance as a function of $s$. This amounts to solving for the matrix valued function $\Sigma_*(s)$ that solves the Lyapunov equation:
 \begin{equation} \label{eqn: standardized Lyapunov equation}
     (-\Lambda + s S) \Sigma_*(s) + \Sigma_*(s) (- \Lambda + s S)^{\intercal} = I.
 \end{equation}

 We will solve this equation assuming that the process is two-dimensional, using the $\bar{\lambda}, \mu, \omega$ parameterization. We focus on the two-dimensional case since, in two-dimensions, the steady state covariance $\Sigma_*(s)$ can be expressed as a simple geometric transformation of the equilibrium covariance $\Sigma_*(0)$, where the transformation is expressed in terms of the area production rate. Moreover, in a variety of experimental settings, only two variables are available to the observer (c.f.~\cite{skoge2013chemical,suter2011mammalian}). Similar results partially generalize to higher dimensions. We save that exposition for other work.

 Figure \ref{fig: steady states} shows a sequence of steady-state distributions corresponding to different values of $\omega$, and thus, of $s$, in a two-dimensional OU process with fixed $\Lambda$. Notice that the steady state distribution tilts away from the steady state at equilibrium by an angle that is monotonically increasing in $\omega$. Also notice that the contours of the steady state distribution become less eccentric as the magnitude of $\omega$ increases.  

\begin{Theorem} \label{thm: steady state ellipse geometry}
    Consider a two-dimensional OU process of the form $dZ = (-\Lambda + s R) Z(t) + dW(t)$ where $\lambda = \bar{\lambda}[(1 + \mu), (1 - \mu)]$, and where $s = 2 \bar{\lambda} \omega$. Let $\Sigma_*(\omega)$ denote the corresponding steady state covariance as a function of area production rate, and $\Sigma_*(0) = \frac{1}{2} \Lambda^{-1}$ denote the steady state at equilibrium. Then the steady state covariance may be expressed as the following geometric transformation of $\Sigma_*(0)$:
    \begin{enumerate}
        \item The singular vectors of $\Sigma_*(\omega)$ are rotated by the tilt angle $\theta(\omega) = \frac{1}{2} \tan^{-1}(2 \omega)$ with respect to the singular vectors of $\Sigma_*(0)$ (the coordinate axes).
        \item Define the ellipse function $\mathcal{E}(M) = \{z | z^{\intercal} M z = 1\}$. 
        \item Define an inner ellipse $\mathcal{E}(\Lambda^+)$ where $\Lambda^+ = \tfrac{\bar{\lambda}}{2} \text{diag}((1 + \mu), 1)$. Then $\mathcal{E}(\Lambda^+)$ is the ellipse with minor axis equal to the minor axis of $\Sigma_*(0)$, and with major axis of length equal to the harmonic average of the major and minor axes lengths of $\Sigma_*(0)$. 
        \item Define an outer ellipse $\mathcal{E}(\Lambda^-)$ where $\Lambda^+ = \tfrac{\bar{\lambda}}{2} \text{diag}(1, (1 - \mu))$. Then $\mathcal{E}(\Lambda^-)$ is the ellipse with major axis equal to the major axis of $\Sigma_*(0)$, and with minor axis of length equal to the harmonic average of the major and minor axes lengths of $\Sigma_*(0)$. 
        \item Then, $\Sigma_*(\omega)$ is the unique covariance matrix whose ellipse $\mathcal{E}(\Sigma_*^{-1}(\omega))$ has principal axes tilted by $\theta(\omega)$ that is entirely contained inside of $\mathcal{E}(\Lambda^-)$, is tangent to $\mathcal{E}(\Lambda^-)$ at exactly two locations, entirely contains $\mathcal{E}(\Lambda^+)$, and is tangent to $\mathcal{E}(\Lambda^+)$ at exactly two locations.
    \end{enumerate}

    The ellipse $\mathcal{E}(\Sigma_*^{-1}(\omega))$ intersects the inner and outer ellipses along rays formed by rotating the coordinate axes by twice the tilt angle, $2 \theta(\omega) = \tan^{-1}(2 \omega)$. These rays form an orthonormal basis parallel to the vector $[1, 2 \omega]$. (See Figure \ref{fig: ellipses}.) 
\end{Theorem}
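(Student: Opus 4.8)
The plan is to reduce every assertion to the explicit two-dimensional steady-state covariance and then to an elementary tangency criterion for concentric ellipses. First I would solve the Lyapunov equation \eqref{eqn: standardized Lyapunov equation} in closed form. With $A = \Lambda - sR$ and $s = 2\bar\lambda\omega$, writing $\Sigma_*(\omega)=\left[\begin{smallmatrix}a&b\\ b&c\end{smallmatrix}\right]$ turns \eqref{eqn: standardized Lyapunov equation} into three scalar equations; eliminating the off-diagonal via $b=\omega(c-a)$ leaves a $2\times2$ system for $(a,c)$ with determinant $1-\mu^2+4\omega^2$. Solving and inverting yields the workhorse identity
\[
\Sigma_*^{-1}(\omega) = 2\bar\lambda\, I + \frac{2\bar\lambda\mu}{1+4\omega^2}\begin{bmatrix} 1 & -2\omega \\ -2\omega & -1 \end{bmatrix},
\]
whose eigenvalues are $2\bar\lambda\bigl(1\pm\mu/\sqrt{1+4\omega^2}\bigr)$. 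Every remaining claim follows from this single matrix.

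Since the eigenvectors of $\Sigma_*^{-1}(\omega)$ coincide with those of its traceless part $\left[\begin{smallmatrix}1&-2\omega\\ -2\omega&-1\end{smallmatrix}\right]$, the principal-axis angle obeys $\tan 2\theta = 2\omega$ (up to the orientation convention), giving part~1 with $\theta(\omega)=\tfrac12\tan^{-1}(2\omega)$. For parts~3--4 I would compute the semi-axes of $\mathcal{E}(\Lambda^+)$ and $\mathcal{E}(\Lambda^-)$ and compare them with the equilibrium semi-axes $1/\sqrt{2\bar\lambda(1\pm\mu)}$ of $\mathcal{E}(\Sigma_*^{-1}(0))$, checking that each bounding ellipse shares one semi-axis with $\Sigma_*(0)$ while its other squared semi-axis equals the harmonic mean of the two eigenvalues of $\Sigma_*(0)$, namely $1/(2\bar\lambda)$; this geometric description is what fixes $\Lambda^\pm$.

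The crux is part~5. The key observation is a matrix-pencil criterion: for two origin-centred ellipses, $\mathcal{E}(N)$ lies inside the region bounded by $\mathcal{E}(M)$ and meets it at exactly two antipodal points if and only if $N-M$ is positive semidefinite of rank one, which in two dimensions reduces to $\det(N-M)=0$ together with a positive trace. I would apply this to $\Lambda^+-\Sigma_*^{-1}(\omega)$ and to $\Sigma_*^{-1}(\omega)-\Lambda^-$. Using $2\bar\lambda-\tfrac{2\bar\lambda}{1+4\omega^2}=\tfrac{8\bar\lambda\omega^2}{1+4\omega^2}$, both differences factor as explicit rank-one outer products,
\[
\Lambda^+-\Sigma_*^{-1}(\omega)=c_+(\omega)\begin{bmatrix}2\omega\\ 1\end{bmatrix}\begin{bmatrix}2\omega & 1\end{bmatrix},\qquad \Sigma_*^{-1}(\omega)-\Lambda^-=c_-(\omega)\begin{bmatrix}1\\ -2\omega\end{bmatrix}\begin{bmatrix}1 & -2\omega\end{bmatrix},
\]
with $c_\pm(\omega)>0$, so each is manifestly PSD of rank one. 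The contact points lie in the kernel of the difference, i.e. where the remaining squared linear form vanishes: along $[1,-2\omega]$ for the inner ellipse and along $[2\omega,1]$ for the outer. These two directions are orthogonal and are precisely the coordinate axes rotated by $2\theta=\tan^{-1}(2\omega)$, which establishes the final sentence.

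For uniqueness I would count parameters: a symmetric positive-definite $2\times2$ matrix has three degrees of freedom; prescribing the tilt $\theta(\omega)$ removes one and the two tangency (determinant) equations remove the other two, so the sandwiched tilted ellipse is unique and must equal $\mathcal{E}(\Sigma_*^{-1}(\omega))$. The step I expect to be the main obstacle is making the rank-one structure of the two difference matrices \emph{exact} rather than merely semidefinite --- this is the determinant cancellation enabled by the factorization above, and it is exactly where the frame mismatch (a tilted middle ellipse squeezed between axis-aligned bounds) is resolved, since the tilt of $\Sigma_*^{-1}(\omega)$ is what aligns its contact chords with the fixed directions $[1,\mp2\omega]$. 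I would also track the orientation convention carefully, since the sign of $\omega$ relative to $R$ flips the reported tilt between $\pm\tfrac12\tan^{-1}(2\omega)$ and the contact rays between $[1,2\omega]$ and $[1,-2\omega]$.
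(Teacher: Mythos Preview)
Your proposal is correct and takes a genuinely different route from the paper. The paper verifies the explicit SVD of $\Sigma_*^{-1}(\omega)$ by direct substitution into the Lyapunov equation, then proves the sandwich property via constrained optimization: it extremizes $u_*(z,\omega)=\tfrac12 z^\intercal\Sigma_*^{-1}(\omega)z$ on each bounding ellipse, converts the Lagrange conditions into eigenvalue problems for $\Sigma_*(\omega)\Lambda^{\pm}$, computes their traces and determinants, and shows via the quadratic formula that the relevant eigenvalue is always $1/2$, so the extremal value of $u_*$ is always $1$; the contact directions are then read off as the corresponding eigenvectors. Your argument replaces all of this with a single algebraic observation: the differences $\Lambda^+-\tfrac12\Sigma_*^{-1}(\omega)$ and $\tfrac12\Sigma_*^{-1}(\omega)-\Lambda^-$ factor explicitly as positive multiples of rank-one outer products, which simultaneously yields containment (PSD), tangency (rank deficiency), and the contact directions (the kernel). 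This is shorter and more transparent, and it makes the $[1,\pm 2\omega]$ contact rays fall out immediately rather than through a separate null-space computation. The paper's Lagrange-multiplier route is more classical and would transfer more readily to settings where a clean rank-one factorization is not available; your matrix-pencil criterion is the natural tool in two dimensions. Your caveat about orientation is well placed: the paper itself has sign inconsistencies between the theorem statement, Corollary~\ref{corr: form}, and the appendix, so the discrepancy between your $[1,-2\omega]$ and the paper's $[1,2\omega]$ is a convention issue, not an error in your argument.
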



\begin{Corollary}[Tightest Bounding Distributions] \label{corr: bounding distribution}
Under the assumptions of Theorem \ref{thm: steady state ellipse geometry}, the pair of density functions corresponding to the mean zero Gaussian distributions with inverse covariances $\Lambda^+$ and $\Lambda^-$ formed by either replacing $\text{min}(\lambda)$ or $\text{max}(\lambda)$ with $\text{mean}(\lambda) = \bar{\lambda}$, provide the tightest upper and lower bounds on the steady-state density $\pi_*(x;\omega)$ that is valid for all $x$ and all $\omega$.
\end{Corollary}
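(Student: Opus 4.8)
The plan is to reduce the corollary to a Loewner (PSD) sandwich on the precision matrices $\Sigma_*^{-1}(\omega)$ and then read off both the density bound and its optimality from the containment-plus-tangency already supplied by Theorem \ref{thm: steady state ellipse geometry}. First I would record the elementary dictionary between origin-centered ellipses and the PSD order: for symmetric positive-definite $M_1,M_2$, the region $\{z:z^{\intercal}M_1 z\le 1\}$ is contained in $\{z:z^{\intercal}M_2 z\le 1\}$ if and only if $M_2\preceq M_1$, since along each ray $\{tu\}$ the ellipse $\mathcal{E}(M)$ reaches radius $1/\sqrt{u^{\intercal}M u}$, so region containment in every direction is exactly the directional inequality $u^{\intercal}M_2 u\le u^{\intercal}M_1 u$. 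Applying this to the two containments in item 5 of Theorem \ref{thm: steady state ellipse geometry}, namely $\mathcal{E}(\Lambda^+)\subseteq\mathcal{E}(\Sigma_*^{-1}(\omega))\subseteq\mathcal{E}(\Lambda^-)$, gives the uniform sandwich $\Lambda^-\preceq\Sigma_*^{-1}(\omega)\preceq\Lambda^+$ for every $\omega$. Because $\Lambda^{\pm}$ do not depend on $\omega$, this single pair of fixed matrices controls the entire family of precision matrices at once.

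Second, I would translate the matrix sandwich into the density statement. Since $\pi_*(x;\omega)$ is a strictly decreasing function of the Mahalanobis form $x^{\intercal}\Sigma_*^{-1}(\omega)x$, the inequalities $x^{\intercal}\Lambda^- x\le x^{\intercal}\Sigma_*^{-1}(\omega)x\le x^{\intercal}\Lambda^+ x$ show that the level sets (confidence contours) of $\pi_*(\cdot;\omega)$ are nested between the corresponding contours of the fixed Gaussians $\mathcal{N}(0,(\Lambda^+)^{-1})$ and $\mathcal{N}(0,(\Lambda^-)^{-1})$, uniformly in $\omega$; equivalently the unnormalized Gaussian shapes obey $\exp(-\tfrac{1}{2}x^{\intercal}\Lambda^+ x)\le\exp(-\tfrac{1}{2}x^{\intercal}\Sigma_*^{-1}(\omega)x)\le\exp(-\tfrac{1}{2}x^{\intercal}\Lambda^- x)$ for all $x$ and $\omega$, so $\Lambda^+$ supplies the lower and $\Lambda^-$ the upper envelope of the density's bump. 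I would also note $\det\Lambda^-\le\det\Sigma_*^{-1}(\omega)\le\det\Lambda^+$ by monotonicity of the determinant on the PSD cone, so the peak heights $\tfrac{1}{2\pi}\sqrt{\det(\cdot)}$ are ordered the same way, reinforcing that these are genuine inner and outer concentration profiles.

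Third, and this is the crux, I would prove optimality: that $\Lambda^+$ and $\Lambda^-$ are the tightest $\omega$-independent bounds. It suffices to show that for every unit direction $u$ one has $\sup_\omega u^{\intercal}\Sigma_*^{-1}(\omega)u=u^{\intercal}\Lambda^+ u$ and $\inf_\omega u^{\intercal}\Sigma_*^{-1}(\omega)u=u^{\intercal}\Lambda^- u$. Here I would invoke the tangency clause of Theorem \ref{thm: steady state ellipse geometry}: $\mathcal{E}(\Sigma_*^{-1}(\omega))$ touches the inner ellipse $\mathcal{E}(\Lambda^+)$ and the outer ellipse $\mathcal{E}(\Lambda^-)$ along a pair of orthogonal rays parallel to $[1,2\omega]$, and at a contact point the two boundaries coincide, so the corresponding quadratic forms agree exactly in that direction. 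As $\omega$ ranges over $\mathbb{R}$ these rotating rays sweep every direction (the limits $\omega\to 0$ and $\omega\to\infty$ recovering the coordinate axes, with $\Sigma_*^{-1}(0)=2\Lambda$ and the isotropic high-circulation limit at the mean eigenvalue), so each direction is a $\Lambda^+$-contact direction for some $\omega$, where $u^{\intercal}\Sigma_*^{-1}(\omega)u\le u^{\intercal}\Lambda^+ u$ holds with equality. Consequently any fixed $M$ with $\Sigma_*^{-1}(\omega)\preceq M$ for all $\omega$ must satisfy $u^{\intercal}M u\ge u^{\intercal}\Lambda^+ u$ in every direction, i.e. $M\succeq\Lambda^+$; thus $\Lambda^+$ is the least upper bound and, symmetrically, $\Lambda^-$ the greatest lower bound. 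Identifying $\Lambda^+$ (resp. $\Lambda^-$) as the matrix keeping the larger (resp. smaller) eigenvalue of the equilibrium precision and clipping the other to the mean then matches the stated construction.

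The main obstacle is conceptual rather than computational: two normalized Gaussian densities can never pointwise dominate one another, since both integrate to one, so ``upper and lower bounds on the density'' must be read as bounds on the density's contours---equivalently on its Mahalanobis form, or on its unnormalized bump---which is exactly what the PSD sandwich delivers. Making this interpretation explicit, and then securing the tightness direction by verifying that the contact rays $[1,2\omega]$ and their orthogonal complements exhaust all directions as $\omega$ varies (so that $\Lambda^{\pm}$ are genuinely extremal and not merely feasible), is where the real work lies; everything else is the elementary ellipse-to-Loewner translation together with Theorem \ref{thm: steady state ellipse geometry}.
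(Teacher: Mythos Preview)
Your proposal is correct and follows essentially the same route as the paper: both arguments use the containment-plus-tangency of Theorem \ref{thm: steady state ellipse geometry} to get the sandwich, and both secure tightness by noting that the tangency directions $[1,2\omega]$ (and their orthogonals) sweep the full circle as $\omega$ varies, so no strictly tighter $\omega$-independent ellipse can survive. Your version is the more careful one---you make the Loewner-order dictionary explicit and you flag the normalization issue (that two normalized Gaussians cannot pointwise dominate one another, so the bound must be read on the Mahalanobis form/unnormalized bump), which the paper glosses over---but the underlying idea is the same.
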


\begin{Corollary}[Singular Value Decomposition] \label{corr: SVD}
      Under the assumptions of Theorem \ref{thm: steady state ellipse geometry}, the singular values of the steady state covariance are:
        \begin{equation} \label{eqn: singular values}
            \frac{1}{2 \bar{\lambda}}\left(1 \pm \sqrt{\frac{\mu^2}{1 + (2 \omega)^2}}\right)^{-1}
        \end{equation}
    and the singular vectors are formed by rotating the coordinate axes by $\theta(\omega) = \frac{1}{2} \tan^{-1}(2 \omega)$. 

\end{Corollary}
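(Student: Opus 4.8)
The plan is to reduce the entire theorem to an explicit solution of the two-dimensional Lyapunov equation \eqref{eqn: standardized Lyapunov equation}, followed by a single matrix-ordering fact about the inverse covariance. First I would solve \eqref{eqn: standardized Lyapunov equation} in closed form. Writing $\Sigma_*(\omega)$ with diagonal entries $a,c$ and off-diagonal entry $b$, substituting the drift $-\Lambda + sR$ with $s = 2\bar\lambda\omega$, and using $R^{\intercal} = -R$, the equation collapses to three scalar equations in $a,b,c$. The off-diagonal equation expresses $b$ as a multiple of $(a-c)$, and the sum and difference of the two diagonal equations then determine $a$ and $c$, giving $a,b,c$ as explicit rational functions of $\bar\lambda,\mu,\omega$. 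This step is routine linear algebra and produces the object that every subsequent claim describes geometrically.

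Next I would diagonalize $\Sigma_*(\omega)$. Because it is symmetric and positive definite, its eigenvectors are its principal axes, and the principal-axis angle of a symmetric $2\times 2$ matrix satisfies $\tan(2\theta) = 2b/(a-c)$. Substituting the solved entries should give $\tan(2\theta) = 2\omega$ exactly, hence $\theta(\omega) = \tfrac12 \tan^{-1}(2\omega)$ and item 1. The two eigenvalues of $\Sigma_*^{-1}(\omega)$ then simplify to $2\bar\lambda\big(1 \pm \sqrt{\mu^2/(1+(2\omega)^2)}\big)$, which is Corollary \ref{corr: SVD} (equation \eqref{eqn: singular values}) and fixes the semi-axis lengths of every density contour $\mathcal{E}(\Sigma_*^{-1}(\omega))$.

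The geometric heart is items 2--5, which I would recast as a matrix-ordering statement. Containment of the filled ellipses is equivalent to the positive-semidefinite ordering, $\{z : z^{\intercal} M z \le 1\} \subseteq \{z : z^{\intercal} M' z \le 1\}$ iff $M \succeq M'$, and the two boundaries are tangent at exactly two antipodal points iff the difference $M - M'$ is positive semidefinite of rank one: if $M - M' = v v^{\intercal}$, then on $\mathcal{E}(M')$ one has $z^{\intercal} M z = 1 + (v^{\intercal} z)^2 \ge 1$ with equality precisely on the line $v^{\intercal} z = 0$. I would therefore verify the chain $\Lambda^+ \succeq \Sigma_*^{-1}(\omega) \succeq \Lambda^-$ by checking that each difference, $\Lambda^+ - \Sigma_*^{-1}(\omega)$ and $\Sigma_*^{-1}(\omega) - \Lambda^-$, has vanishing determinant (rank one) and a positive diagonal entry (hence PSD). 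The determinant computation is the essential algebraic check, and the harmonic-mean descriptions of the axes of $\Lambda^\pm$ in items 3--4 follow simultaneously by reading off their diagonal entries.

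The step I expect to be the main obstacle is the uniqueness assertion in item 5: that $\Sigma_*(\omega)$ is the only covariance whose contour is tilted by $\theta(\omega)$, inscribed in $\mathcal{E}(\Lambda^-)$, and circumscribing $\mathcal{E}(\Lambda^+)$ with two tangencies each. I would settle this by a parameter count: a centered ellipse carries three degrees of freedom (two semi-axes and one orientation); fixing the orientation to $\theta(\omega)$ removes one, and two-point tangency to each fixed ellipse supplies two further independent constraints, each fixing one semi-axis, so the ellipse is determined and the Lyapunov solution exhibits it. Finally, the common tangent points lie on the kernel line $v^{\intercal} z = 0$ of each rank-one difference; substituting the solved entries shows both kernels are spanned by the coordinate axis rotated by $2\theta(\omega) = \tan^{-1}(2\omega)$, i.e. the direction $[\cos 2\theta, \sin 2\theta] \propto [1, 2\omega]$, which is the closing claim. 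Corollary \ref{corr: bounding distribution} then follows, since $\Lambda^+ \succeq \Sigma_*^{-1}(\omega) \succeq \Lambda^-$ holds uniformly in $\omega$ and the two-point tangency makes $\Lambda^\pm$ the tightest such bounds on the quadratic exponent, hence on the associated mean-zero Gaussian densities.
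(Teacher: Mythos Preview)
Your approach is correct and genuinely different from the paper's in two respects. First, the order of operations is reversed: the paper \emph{postulates} the singular value decomposition of $\Sigma_*^{-1}(\omega)$ (equations \eqref{eqn: singular values of precision}--\eqref{eqn: appendix angle}), multiplies out $U(\omega)S^{-1}(\omega)U(\omega)^{\intercal}$ via half-angle identities to reach the explicit form, and then verifies the Lyapunov equation by brute-force substitution; you instead solve Lyapunov first for the three scalar entries and read off the SVD afterwards using $\tan(2\theta)=2b/(a-c)$. Your route is more constructive and avoids the half-angle bookkeeping. Second, and more substantively, for the bounding-ellipse claims the paper sets up two constrained optimization problems, passes to the Lagrange conditions, reduces each to an eigenvalue problem for $\Sigma_*(\omega)\Lambda^{\pm}$, and then computes traces and determinants to show the relevant eigenvalue is $1/2$. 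Your rank-one argument --- that two-point tangency is exactly $M-M'\succeq 0$ with $\det(M-M')=0$ --- is the same fact in cleaner clothing: the paper's eigenvalue-$1/2$ condition on $\Sigma_*(\omega)\Lambda^{\pm}$ is equivalent, after clearing $\Sigma_*(\omega)$, to the singularity of $\tfrac12\Sigma_*^{-1}(\omega)-\Lambda^{\pm}$, which is your determinant check. What you gain is a one-line geometric explanation of tangency; what the paper's version buys is that the eigenvector of $\Sigma_*(\omega)\Lambda^{\pm}-\tfrac12 I$ directly yields the tangent ray.

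One small imprecision to fix: the two rank-one differences $\Lambda^{+}-\Sigma_*^{-1}(\omega)$ and $\Sigma_*^{-1}(\omega)-\Lambda^{-}$ do not have the \emph{same} kernel direction; their kernels are orthogonal, together forming the coordinate frame rotated by $2\theta(\omega)$. The outer tangency occurs along one ray of that frame and the inner tangency along the perpendicular ray, consistent with the theorem's phrase ``these rays form an orthonormal basis parallel to the vector $[1,2\omega]$.'' Your closing sentence reads as if both kernels are the single direction $[1,2\omega]$; you should check both null-spaces separately. The uniqueness argument by parameter count is heuristic, but the paper does not prove uniqueness rigorously either, so you are not behind on that point.
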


\begin{Corollary}[Monotonicity] \label{corr: monotonicity} Under the assumptions of Theorem \ref{thm: steady state ellipse geometry}: (1) the trace of the inverse covariance, $\text{trace}(\Sigma_*^{-1}(\omega))$, is constant in $\omega$ and equals $2 \bar{\lambda}$, (2) the determinant of the covariance $\Sigma_*(\omega)$ is monotonically decreasing in $|\omega|$, and (3) the eccentricity in the ellipse $\mathcal{E}(\Sigma_*^{-1}(\omega))$ is monotonically decreasing in $|\omega|$.
\end{Corollary}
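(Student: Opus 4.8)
The plan is to read off all three claims from the singular value decomposition already recorded in Corollary~\ref{corr: SVD}, which reduces the entire statement to the behavior of one scalar. Writing $\kappa(\omega) = |\mu|/\sqrt{1 + (2\omega)^2}$, Corollary~\ref{corr: SVD} gives the singular values of $\Sigma_*(\omega)$ as $\sigma_\pm(\omega) = \tfrac{1}{2\bar\lambda}(1 \pm \kappa(\omega))^{-1}$, so the eigenvalues of $\Sigma_*^{-1}(\omega)$ are $\ell_\pm(\omega) = 2\bar\lambda\,(1 \pm \kappa(\omega))$. The single fact that drives everything is that $\kappa$ is even in $\omega$ and strictly decreasing in $|\omega|$, falling from $\kappa(0) = |\mu|$ to $0$ as $|\omega|\to\infty$; each of the three monotonicity statements then follows by identifying the direction in which the relevant geometric quantity depends on $\kappa$.

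For claim (1) I would observe that the two eigenvalues of $\Sigma_*^{-1}(\omega)$ sum to $\ell_+ + \ell_- = 4\bar\lambda$, in which the $\kappa$-dependence cancels, so the trace is constant in $\omega$. I would also give a route independent of Corollary~\ref{corr: SVD}, to make the constancy structural: multiplying the standardized Lyapunov equation~\eqref{eqn: standardized Lyapunov equation} on the right by $\Sigma_*^{-1}(\omega)$ and taking a trace (using invariance of the trace under transpose and similarity, exactly as in the earlier Helmholtz-decomposition argument) yields $\text{trace}(\Sigma_*^{-1}(\omega)) = 2\,\text{trace}(\Lambda - sR) = 2\,\text{trace}(\Lambda)$, since $R$ is skew-symmetric and hence traceless. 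Thus the circulation strength $s$ enters $\Sigma_*^{-1}$ only through a traceless part and cannot affect the trace; the constant value is $2\,\text{trace}(\Lambda) = 4\bar\lambda$ (I would double-check this against the value quoted in the statement).

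For claims (2) and (3) I would exploit that, at the fixed trace from claim (1), both quantities are monotone functions of the eigenvalue gap $\ell_+ - \ell_- = 4\bar\lambda\,\kappa(\omega)$, which shrinks strictly as $|\omega|$ grows. For the determinant, $\det \Sigma_*(\omega) = (\ell_+\ell_-)^{-1} = \big[(2\bar\lambda)^2(1-\kappa^2)\big]^{-1}$; as $|\omega|$ increases, $\kappa^2$ decreases, $1-\kappa^2$ increases, and $\det \Sigma_*$ decreases. Equivalently, among positive pairs with a fixed sum, the product $\ell_+\ell_-$ is largest when the gap is smallest, so $\det \Sigma_*$ is smallest there. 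For the eccentricity, the ellipse $\mathcal{E}(\Sigma_*^{-1}(\omega))$ has semi-axis lengths equal to the reciprocal square roots of the eigenvalues of $\Sigma_*^{-1}$, i.e.\ $\sqrt{\sigma_\pm}$; the squared ratio of minor to major semi-axis is $\sigma_{\min}/\sigma_{\max} = (1-\kappa)/(1+\kappa)$, giving $e(\omega)^2 = 1 - (1-\kappa)/(1+\kappa) = 2\kappa/(1+\kappa)$. Since $2\kappa/(1+\kappa)$ is strictly increasing in $\kappa$ and $\kappa$ decreases in $|\omega|$, the eccentricity decreases in $|\omega|$.

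Because Corollary~\ref{corr: SVD} is available, this is essentially a bookkeeping argument and I do not expect a substantive obstacle; the one point that most easily goes wrong, and which I would state explicitly, is the reciprocal-square-root relationship between the axes of $\mathcal{E}(\Sigma_*^{-1})$ and the singular values of $\Sigma_*$ (not of $\Sigma_*^{-1}$), since inverting it incorrectly flips the monotonicity of the eccentricity. If one wished to avoid citing the singular values, the only genuine work would be solving the $2\times 2$ Lyapunov equation~\eqref{eqn: standardized Lyapunov equation} in closed form to recover $\kappa(\omega)$; note, however, that claim (1) needs no such solution, as the trace identity above follows directly from the Lyapunov equation and the tracelessness of $R$.
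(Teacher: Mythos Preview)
Your proposal is correct and follows essentially the same route as the paper: the paper's proof outline says Corollary~\ref{corr: monotonicity} ``follows directly by analyzing the behavior of the singular values as functions of $\omega$,'' and in the appendix it establishes claim~(1) by the same trace-of-Lyapunov argument you give (multiply by $\Sigma_*^{-1}$ and use trace invariance). Your observation that the constant comes out to $4\bar\lambda$ rather than the $2\bar\lambda$ printed in the statement is correct; the paper's own appendix computes $\text{trace}(\Sigma_*^{-1}) = 2\,\text{trace}(A) = 4\bar\lambda$, so the discrepancy is a typo in the corollary statement, not in your argument.
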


\begin{Corollary}[Steady State Entropy] \label{corr: entropy}
    Under the assumptions of Theorem \ref{thm: steady state ellipse geometry}, the steady state entropies of $Z$, and any linear transformation of $Z$ (e.g.~$x = D^{1/2} Q z$), are monotonically decreasing in the area production rate $\omega$, and are minimized in the limit as $\omega$ diverges. 
\end{Corollary}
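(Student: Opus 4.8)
The plan is to reduce the claim to the determinant monotonicity already established in Corollary \ref{corr: monotonicity}, using the closed-form expression for the differential entropy of a Gaussian. First I would recall that a mean-zero Gaussian in $d$ dimensions with covariance $\Sigma$ has differential entropy $H(\Sigma) = \tfrac{d}{2}\log(2\pi e) + \tfrac{1}{2}\log\det(\Sigma)$, so $H$ is a strictly increasing function of $\det(\Sigma)$ and of nothing else. Since the steady state of the canonical process is mean-zero Gaussian with covariance $\Sigma_*(\omega)$, the entropy of $Z$ is $H(\Sigma_*(\omega))$, and its monotonicity in $\omega$ is governed entirely by the monotonicity of $\det(\Sigma_*(\omega))$.

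Next I would invoke Corollary \ref{corr: monotonicity}, part (2), which states that $\det(\Sigma_*(\omega))$ is monotonically decreasing in $|\omega|$. Composing a strictly increasing function ($H$ as a function of the determinant) with a decreasing function ($\det\Sigma_*$ as a function of $|\omega|$) gives that $H(\Sigma_*(\omega))$ is monotonically decreasing in $|\omega|$; in particular it is even in $\omega$, consistent with the singular value formula of Corollary \ref{corr: SVD}. To pin down the limiting behavior I would use that formula to write $\det(\Sigma_*(\omega)) = (2\bar{\lambda})^{-2}\bigl(1 - \mu^2/(1+(2\omega)^2)\bigr)^{-1}$, which decreases to its infimum $(2\bar{\lambda})^{-2}$ as $|\omega|\to\infty$; hence the entropy decreases to its infimum in the same limit (the value is approached but not attained, so ``minimized in the limit'' is understood as an infimum).

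Finally I would extend the result from $Z$ to any fixed linear image $X = M Z$, in particular $x = D^{1/2} Q z$, where $M = D^{1/2}Q$ is determined by $\bar{\lambda}, \mu, Q, D^{1/2}$ and does \emph{not} depend on $\omega$ in the standardized parameterization. Then $\Sigma_X(\omega) = M\Sigma_*(\omega)M^\intercal$, so $\det(\Sigma_X(\omega)) = \det(M)^2\det(\Sigma_*(\omega))$ and $H(\Sigma_X(\omega)) = H(\Sigma_*(\omega)) + \log|\det M|$. The additive constant $\log|\det M|$ is independent of $\omega$, so $H(\Sigma_X(\omega))$ inherits exactly the monotonicity and limiting behavior of $H(\Sigma_*(\omega))$.

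I expect essentially no serious obstacle, since the heavy lifting is done by Corollary \ref{corr: monotonicity}. The only points requiring care are confirming that the change-of-variables matrix $M$ is genuinely $\omega$-independent, so that it contributes only an $\omega$-independent shift and cannot reverse the monotonicity, and being precise that the extremum at $\omega\to\infty$ is an infimum rather than an attained minimum.
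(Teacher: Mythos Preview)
Your proposal is correct and follows essentially the same route as the paper: reduce to the log-determinant formula for Gaussian entropy, invoke the determinant monotonicity from Corollary~\ref{corr: monotonicity}, and then observe that a fixed linear transformation shifts the entropy by an $\omega$-independent constant. Your treatment is in fact slightly more careful than the paper's sketch, since you make explicit both the $\omega$-independence of the transformation $M$ and the distinction between an infimum and an attained minimum as $|\omega|\to\infty$.
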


\begin{Corollary}[Explicit Form] \label{corr: form}
    Under the assumptions of Theorem \ref{thm: steady state ellipse geometry}, the steady state covariance has the explicit form:
    \begin{equation} \label{eqn: explicit solution for covariance}
        \Sigma_*(\omega) = \frac{1}{2 \bar{\lambda}} \left(1 - \frac{\mu^2}{1 + (2 \omega)^2} \right)^{-1} \left( I - \frac{\mu}{1 + (2 \omega)^2} \left[ \begin{array}{cc} 1 & 2 \omega \\ 2 \omega & -1 \end{array}\right]\right).
    \end{equation}
\end{Corollary}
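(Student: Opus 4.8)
The plan is to obtain \eqref{eqn: explicit solution for covariance} by solving the two-dimensional Lyapunov equation \eqref{eqn: standardized Lyapunov equation} directly in the $(\bar{\lambda},\mu,\omega)$ parameterization \eqref{eqn: standard parameterization}, and then to cross-check the result against the spectral data already recorded in Corollary \ref{corr: SVD}. I would insert the standardized drift $A^{(z)} = -\Lambda + 2\bar{\lambda}\omega R$, with $\Lambda = \bar{\lambda}\,\mathrm{diag}(1+\mu,1-\mu)$ and $R$ the rotation of \eqref{eqn: two-d area production}, together with isotropic noise $D = I$, into the stationary Lyapunov relation. Writing the unknown covariance as the symmetric matrix $\Sigma_*(\omega) = \left[\begin{smallmatrix} a & b\\ b & c\end{smallmatrix}\right]$ reduces the matrix equation to exactly three scalar equations (both sides are symmetric): one from each diagonal entry and one from the off-diagonal entry.

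The reduction collapses cleanly, and this is the crux of the computation. The off-diagonal equation gives the coupling $b = \omega(c-a)$; the sum of the two diagonal equations gives the $\omega$-independent relation $(1+\mu)a + (1-\mu)c = \bar{\lambda}^{-1}$, i.e. $\mathrm{trace}(\Lambda\Sigma_*) = 1$; and the difference of the diagonal equations, after eliminating $b$, gives $a(P+\mu) = c(P-\mu)$ with $P := 1+(2\omega)^2$. This determinate $3\times 3$ system solves to $a = (P-\mu)/[2\bar{\lambda}(P-\mu^2)]$, $c = (P+\mu)/[2\bar{\lambda}(P-\mu^2)]$, and $b = \pm 2\omega\mu/[2\bar{\lambda}(P-\mu^2)]$. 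Factoring out the common scalar $\tfrac{1}{2\bar{\lambda}}(1-\mu^2/P)^{-1}$ and collecting the remainder as $I \mp \tfrac{\mu}{P}\left[\begin{smallmatrix}1 & 2\omega\\ 2\omega & -1\end{smallmatrix}\right]$ reproduces \eqref{eqn: explicit solution for covariance}.

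To confirm the answer I would read off its spectral data. Using $P-\mu^2 = (\sqrt{P}-\mu)(\sqrt{P}+\mu)$, the trace and determinant give eigenvalues $\tfrac{1}{2\bar{\lambda}}(1 \pm \sqrt{\mu^2/P})^{-1}$, matching \eqref{eqn: singular values}, while the principal-axis condition $\tan(2\phi) = 2b/(a-c)$ collapses to $\tan(2\phi) = \pm 2\omega$, a tilt of magnitude $\tfrac{1}{2}\tan^{-1}(2\omega) = \theta(\omega)$, matching Corollary \ref{corr: SVD}. Equivalently, the whole argument can be run backwards: recompose $\Sigma_*(\omega) = \tfrac{\sigma_++\sigma_-}{2}I + \tfrac{\sigma_+-\sigma_-}{2}\left[\begin{smallmatrix}\cos 2\theta & \sin 2\theta\\ \sin 2\theta & -\cos 2\theta\end{smallmatrix}\right]$ from Corollary \ref{corr: SVD}, substitute the half-angle identities $\cos 2\theta = (1+(2\omega)^2)^{-1/2}$ and $\sin 2\theta = 2\omega(1+(2\omega)^2)^{-1/2}$ obtained from $2\theta = \tan^{-1}(2\omega)$, and simplify $\tfrac{\sigma_++\sigma_-}{2} = \tfrac{1}{2\bar{\lambda}}(1-\rho^2)^{-1}$, $\tfrac{\sigma_+-\sigma_-}{2} = \tfrac{1}{2\bar{\lambda}}\rho(1-\rho^2)^{-1}$ with $\rho = \sqrt{\mu^2/P}$; the $\sqrt{P}$ factors then cancel to leave $\mu/P$.

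No conceptual obstacle arises — the statement is a determinate linear solve — so the only difficulty is bookkeeping, in two places. First, the trigonometric collapse through $2\theta = \tan^{-1}(2\omega)$ must be carried out so that the $\sqrt{P}$ factors cancel exactly and leave $\mu/P$ rather than $\mu/\sqrt{P}$. Second, the sign of the off-diagonal entry $b$ — equivalently the direction of the tilt shown in Figure \ref{fig: steady states} — is pinned down by the orientation convention chosen for $R$ and the sign of $\omega$, and must be tracked so that the assignment of the two singular values to the two tilted axes is consistent with \eqref{eqn: singular values}; the singular-value formula together with the $\omega$-independence of the trace from Corollary \ref{corr: monotonicity} provide the checks confirming this has been done correctly.
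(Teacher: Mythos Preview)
Your proposal is correct. Your primary route---write $\Sigma_*(\omega)=\left[\begin{smallmatrix}a&b\\b&c\end{smallmatrix}\right]$, reduce the Lyapunov equation to the three scalar relations $b=\omega(c-a)$, $(1+\mu)a+(1-\mu)c=\bar{\lambda}^{-1}$, and $a(P+\mu)=c(P-\mu)$, and solve---is a clean constructive derivation. The paper's appendix runs in the opposite order: it \emph{assumes} the SVD parameterization of Corollary~\ref{corr: SVD}, expands $\Sigma_*(\omega)=U(\omega)S^{-1}(\omega)U^{\intercal}(\omega)$ using half-angle identities for $\theta(\omega)=\tfrac12\tan^{-1}(2\omega)$, and then verifies the Lyapunov equation by explicit substitution. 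Your ``backwards'' alternative in the third paragraph is exactly this route. So both arguments are the same elementary $2\times 2$ computation, just traversed in opposite directions; your direct solve is self-contained for Corollary~\ref{corr: form} alone, while the paper's ordering establishes Corollaries~\ref{corr: SVD} and~\ref{corr: form} simultaneously. Your caution about the sign of $b$ is well placed: the paper's main-text formula \eqref{eqn: explicit solution for covariance} and its appendix derivation in fact carry opposite signs on the off-diagonal block, reflecting precisely the orientation-convention bookkeeping you flag.
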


\begin{Corollary}[Limits] \label{corr: limits}
Under the assumptions of Theorem 1, $\lim_{\omega \rightarrow 0} \Sigma_*(\omega) = \Sigma_*(0) = \frac{1}{2} \Lambda^{-1}$, and $\lim_{\omega \rightarrow \infty} \Sigma_*(\omega) = (2 \bar{\lambda})^{-1} I$, so the steady state distribution has circular level sets and samples from the steady state have independent components in the strong rotation limit. If $x = D^{1/2} Q z$ for some unitary matrix $Q$, then $\lim_{\omega \rightarrow \infty} \Sigma_*^{(x)}(\omega) = (2 \bar{\lambda})^{-1} D$. 
\end{Corollary}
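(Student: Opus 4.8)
The entire block rests on a single computation: the closed-form solution of the two-dimensional Lyapunov equation. Every geometric claim of Theorem~\ref{thm: steady state ellipse geometry} and every corollary can then be read off, so the plan is to solve the Lyapunov equation once and treat the rest as bookkeeping. Writing $\Sigma_*(\omega) = \begin{bmatrix} a & b \\ b & c \end{bmatrix}$ and using the drift matrix $A = \Lambda - sR$ of the standardized process with $s = 2\bar\lambda\omega$, the symmetric equation \eqref{eqn: Lyapunov} with $D = I$ collapses to three scalar equations in $a, b, c$. Adding and subtracting the two diagonal equations and substituting the off-diagonal equation solves the system and produces the explicit form of Corollary~\ref{corr: form}. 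Inverting is cleanest via the identity $M^2 = (1 + 4\omega^2) I$ for $M = \begin{bmatrix} 1 & 2\omega \\ 2\omega & -1 \end{bmatrix}$, which immediately gives the workhorse of the argument, $\Sigma_*^{-1}(\omega) = 2\bar\lambda I + \frac{2\bar\lambda \mu}{1 + 4\omega^2} M$.

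From this inverse covariance the linear-algebraic corollaries follow routinely. Diagonalizing the rank-one perturbation $\tfrac{2\bar\lambda\mu}{1+4\omega^2}M$ gives a principal-axis rotation with $\tan(2\theta) = 2\omega$, i.e. $\theta(\omega) = \tfrac12\tan^{-1}(2\omega)$ (claim 1), and its eigenvalues are the singular values of Corollary~\ref{corr: SVD}. The constant-trace statement in Corollary~\ref{corr: monotonicity} I would obtain for free from the identity $\text{trace}(D\Sigma_*^{-1}) = 2\,\text{trace}(A)$ derived earlier in the proof of the Helmholtz and Hamilton-Jacobi decomposition lemma: with $D = I$ and $\text{trace}(R) = 0$ this forces $\text{trace}(\Sigma_*^{-1})$ to be independent of $\omega$. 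The determinant, eccentricity, and entropy monotonicity (Corollaries~\ref{corr: monotonicity} and \ref{corr: entropy}) then follow by monotonicity in $K = 1 + 4\omega^2$, and Corollary~\ref{corr: limits} is immediate by sending $\omega \to 0$ and $\omega \to \infty$ in $\Sigma_*^{-1}(\omega)$, with the $x$-coordinate limit recovered by conjugating with $D^{1/2} Q$.

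The geometric heart, claim 5 (containment in and tangency to the reference ellipses $\mathcal{E}(\Lambda^{\pm})$), is where I expect the real work. The organizing observation is that two centered ellipses $\mathcal{E}(M_1)$ and $\mathcal{E}(M_2)$ meet tangentially at exactly two points precisely when $M_1 - M_2$ is rank-one semidefinite: a rank-two indefinite difference cuts four transverse intersection points, whereas a rank-one semidefinite difference collapses the intersection locus to a single doubled line of contact. So the plan is to compute $\Sigma_*^{-1}(\omega) - \Lambda^{\pm}$ directly from the additive form above and verify that $\Sigma_*^{-1} - \Lambda^{+}$ is negative semidefinite of rank one (hence $\mathcal{E}(\Sigma_*^{-1})$ contains $\mathcal{E}(\Lambda^{+})$) while $\Sigma_*^{-1} - \Lambda^{-}$ is positive semidefinite of rank one (hence $\mathcal{E}(\Sigma_*^{-1})$ is contained in $\mathcal{E}(\Lambda^{-})$); each check reduces to seeing that a $2\times 2$ determinant vanishes. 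The common null direction $[1, 2\omega]^{\intercal}$ then pins the contact points to the rays at angle $\tan^{-1}(2\omega) = 2\theta(\omega)$, as claimed.

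Uniqueness in claim 5 I would argue by a degrees-of-freedom count: a symmetric positive-definite $2\times 2$ matrix has three free parameters, fixing the tilt angle removes one, and the two tangency conditions supply two more equations, so the constrained system is determined; alternatively, uniqueness is inherited directly from the Lyapunov equation, whose solution is unique whenever $A$ is stable. The main obstacle is the tangency characterization itself, both establishing the rank-one criterion and keeping straight which difference is positive versus negative semidefinite. Everything downstream is a consequence of having $\Sigma_*^{-1}(\omega)$ in the additive form $2\bar\lambda I + \tfrac{2\bar\lambda\mu}{1+4\omega^2}M$, so I would invest the care there and let the corollaries, ending with the limits of Corollary~\ref{corr: limits}, fall out.
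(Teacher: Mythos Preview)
For Corollary~\ref{corr: limits} itself your approach coincides with the paper's: both treat the limits as an immediate read-off from the explicit covariance once that is in hand (you send $\omega\to 0,\infty$ in $\Sigma_*^{-1}(\omega)$, the paper in $\Sigma_*(\omega)$; the distinction is cosmetic), and the $x$-coordinate statement follows by conjugating with $D^{1/2}Q$ in either case.

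Where your block-level plan genuinely diverges from the paper is in how the explicit form is reached and in the tangency argument for claim~5 of Theorem~\ref{thm: steady state ellipse geometry}. You solve the three scalar Lyapunov equations directly for the entries of $\Sigma_*$ and then invert via $M^2=(1+4\omega^2)I$ to obtain the additive form $\Sigma_*^{-1}(\omega)=2\bar\lambda I+\tfrac{2\bar\lambda\mu}{1+4\omega^2}M$; the paper instead posits the SVD parameterization of $\Sigma_*^{-1}$ and verifies the Lyapunov equation by brute-force multiplication. For tangency, the paper sets up the constrained optimization ``extremize $u_*(z,\omega)$ on the bounding ellipse'', reduces via Lagrange multipliers to an eigenvalue problem for $\Sigma_*(\omega)\Lambda^{\pm}$, and computes trace and determinant to show the relevant eigenvalue is always $1/2$. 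Your rank-one semidefinite criterion for $\Sigma_*^{-1}(\omega)-\Lambda^{\pm}$ is a cleaner linear-algebraic shortcut that bypasses the optimization and reads the contact direction $[1,2\omega]^{\intercal}$ straight from the null space of the difference; it buys brevity at the cost of assuming the reader knows (or is willing to verify) that rank-one semidefiniteness of the difference characterizes two-point tangency of centered ellipses. Both routes are correct, and both deliver the explicit form from which Corollary~\ref{corr: limits} drops out.
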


\begin{proof}[Proof Outline for Theorem 1 and Corollaries 1 - 6]
    The complete proofs for Theorem \ref{thm: steady state ellipse geometry}, and Corollaries \ref{corr: bounding distribution} - \ref{corr: limits} are provided in Appendix \ref{app: steady state proofs}. Direct substitution of \eqref{eqn: explicit solution for covariance} into the Lyapunov equation \eqref{eqn: Lyapunov} for $A$ parameterized according to \eqref{eqn: standard parameterization} establishes Corollary \ref{corr: form}. Corollary \ref{corr: monotonicity} follows directly by analyzing the behavior of the singular values as functions of $\omega$. Corollary \ref{corr: entropy} follows since the entropy of any Gaussian distribution is, up to the addition of a dimensional constant, equal to the log determinant of its covariance. By corollary \ref{corr: monotonicity}, the determinant of the steady state covariance is monotonically decreasing in $\omega$, so the entropy decreases as well. The entropy of any linear transformation of a random variable equals the entropy of that variable plus a constant that depends on the transform. Corollary \ref{corr: limits} follows immediately by evaluating the limits as $\omega$ approaches 0 or infinity. Corollary \ref{corr: SVD} follows by verifying that the matrix with singular values \eqref{eqn: singular values} and singular vectors rotated by the angle $\theta(\omega)$ recovers the explicit covariance presented in Corollary \ref{corr: form}. 

    Theorem \ref{thm: steady state ellipse geometry} follows by showing that, the ellipse specified by a matrix with singular value decomposition as specified in Corollary \ref{corr: SVD} has principal axes  tilted by $\theta(\omega)$ degrees, intersects the outer and inner ellipses at pairs of orthogonal points, and is tangent to those ellipses where it intersects them. This proof is accomplished by solving a pair of constrained optimization problems. Namely, optimize $u_*(z; \omega) = \frac{1}{2} z^{\intercal} \Sigma_*^{-1}(\omega) z$ constrained to $z$ in either the inner or outer bounding ellipse. These optimization problems can be converted into a pair of eigenvalue problems. The solutions verify the theorem claim, are orthogonal to one another, and are rotated by twice the tilt angle, $2 \theta(\omega)$, off the coordinate axes. Corollary \ref{corr: bounding distribution} is a necessary consequence of the relationship between the elliptical level sets of a Gaussian distribution and its density function. 
\end{proof}

\begin{figure}[t]
\centering
\includegraphics[trim = 100 30 100 30, clip,scale = 0.47]{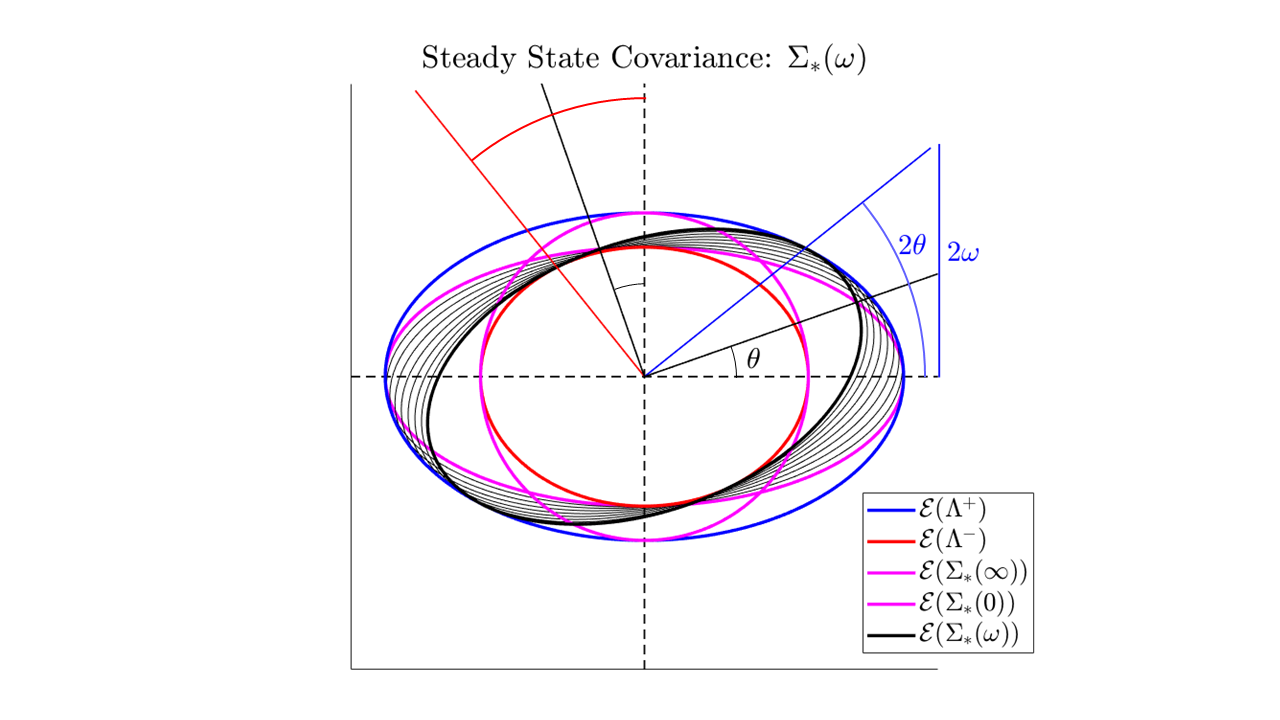}
\caption{Ellipses corresponding to the steady state density as a function of $\omega$. The dashed black lines are the coordinate axes, and the principal axes of $\Sigma_*$ at equilibrium. The bold magenta oval that is stretched out along the horizontal axis corresponds to the equilibrium density and covariance, $\Sigma_*(0)$. It has axes of length $\frac{1}{2} \lambda_-^{-1}$ and $\frac{1}{2} \lambda_+^{-1}$ where $\lambda_{\pm} = \bar{\lambda}(1 \pm \mu)$. The thin black ellipses represent $\Sigma_*(\omega)$ for increasing values of $\omega$. The bold black ellipse corresponds to $\omega = 4$. The steady state ellipse corresponding to $\Sigma_*(\omega)$ is rotated off the coordinate axes by $\theta(\omega) = \tfrac{1}{2} \tan^{-1}(2 \omega) \simeq \omega$. The size of the ellipse (lengths of its principal axes) is determined by a pair of bounding ellipses shown in blue (outer) and red (inner). The ellipses are constructed by first drawing a circle, corresponding to the infinite rotation limit, centered at the origin with radius determined by the average eigenvalue, $\frac{1}{2} \bar{\lambda}^{-1}$. Next draw the outer ellipse (bold, blue) by finding the smallest ellipse that contains both the circle and the equilibrium ellipse. Draw the inner ellipse (bold, red) by finding the largest ellipse contained inside both the circle and the equilibrium ellipse. For any particular $\omega$, the steady-state ellipse intersects the outer and inner ellipses at four orthogonal points and is tangent to the bounding ellipses there. These intersections occur along rays tilted $2 \theta(\omega)$ off the coordinate axes, so may be recovered by following all ninety degree rotations of the vector $[1,2 \omega]$.   }
\label{fig: ellipses}
\end{figure}

The geometric rules stated in Theorem \ref{thm: steady state ellipse geometry} are sensible. At equilibrium, the circulating component vanishes, so the process obeys detailed balance. When in detailed balance, the steady state is always a Boltsmann distribution with potential equal to any function such that $-\nabla u(x) = w(x) = - \Lambda x$. The potential $u(x) = \frac{1}{2} x^{\intercal} \Lambda x$ suffices, so $\Sigma_*(0) = \frac{1}{2} \Lambda^{-1}$. 

As $\omega$ increases, the drift field $w(x) = (-\Lambda + 2 \bar{\lambda} \omega R)$ induces rotation. This will skew the steady-state distribution in the direction of the applied rotation. The resulting skew rotates the principal axes by an angle controlled by the strength of the applied roation, and roughly proportional to the applied rotation when it is small. In particular, $\theta(\omega) = \frac{1}{2} \tan^{-1}(2 \omega) \simeq \omega$ for small $\omega$.

Increasing $\omega$ drives trajectories to orbit the origin with winding angle produced at an average rate equal to $\omega$.\footnote{Like the stochastic area, probability currents, and angular momentum, the winding velocity has also been proposed as a test statistic for detecting violations of detailed balance \cite{gradziuk2019scaling}. To show that the winding velocity equals the area production rate in an OU process, change into polar coordinates, then applies It\^o's lemma to derive the stochastic process governing the phase.} For sufficiently large $\omega$, the circulating drift is much larger than the conservative attraction to the origin, so sample trajectories complete many orbits before drifting on the equilibrium potential $-\frac{1}{2} z^{\intercal} \Lambda z$. In this case, the time scales for circulation and inward attraction separate. Then, over the time scale at which the process drifts in or out, it completes many orbits, so averages the different attractive forces felt along separate directions. As a result, the anisotropy in the steady state distribution diminishes, and the eccentricity in the covariance decreases, as $\omega$ increases. The eccentricity vanishes as $\omega$ diverges. When $\omega$ is very large, circulation dominates, and the different attractive forces average, so the steady state is spherically symmetric with variance proportional to one over the average of $\lambda_+$ and $\lambda_-$, $\bar{\lambda}$. 

The strongly geometric character of this solution, and its simple dependence on $\omega$ recommends $\omega$ as a measure of the intensity of nonequilibrium circulation in two-dimension OU processes.

\subsection{Optimal Observables}

So far we have argued for the adoption of the area production rate matrix as a unifying signature of nonequilibrium circulations in OU processes by showing that it is meaningful. Here we adopt a different approach. Instead of demonstrating that $\alpha_*$ admits a rich interpretation, we argue that it is, in a sense, directs the optimal choice, when selecting a test statistic for identifying violations of detailed balance. To establish this result we will show that $\alpha$ controls the production rate of an entire class of observables, then, that the coefficient of variation in an appropriate norm of $\alpha$ decreases the fastest among all observables in that class. 

\pagebreak
\noindent \textbf{How is $\alpha$ related to the production rate of other observables?}
\vspace{0.05 in}

First, consider an observable of the form:
\begin{equation} \label{eqn: observable}
    \hat{B}(\{X(t)\}_{t=0}^T) = \int_{t=0}^T b(X(t)) \cdot dX(t)
\end{equation}
where $b: \mathbb{R}^d \rightarrow \mathbb{R}^d$ is a vector field on $\mathbb{R}^d$. An observable of this form is an example of a stochastic line integral \cite{teitsworth2022stochastic} or generalized current \cite{gingrich2016dissipation}.  Notice that the empirical area produced by a trajectory on the $i,j$ plane is a member of the class of observables, with $b(x) = R^{(i,j)} x$. 

Then, define the empirical, time-averaged production rate:
\begin{equation}
\hat{\beta}(\{X(t)\}_{t=0}^T)) = \frac{1}{T} \hat{B}(\{X(t)\}_{t=0}^T).
\end{equation}

Since $X$ is ergodic, the empirical, time-averaged production rate of the observable $\hat{\beta}(\{X(t)\}_{t=0}^T))$ converges in probability to its steady-state expectation provided $b$ is chosen so that the variance in the production rate is finite at steady-state \cite{boltzmann1910vorlesungen,feller1991introduction}. Assuming $b$ is chosen so that the variance is finite,
\begin{equation}
\hat{\beta}(\{X(t)\}_{t=0}^T) \longrightarrow \mathbb{E}_{X \sim \pi_*}\left[ \mathbb{E}_{dX|X}[b(X) \cdot dX] \right] = \beta
\end{equation}

Further, we will restrict our attention to stochastic line integrals whose vector field $b(x) = M x + v$ for some matrix $M \in \mathbb{R}^{d \times d}$ and vector $v \in \mathbb{R}^d$. Any observable of this form is a linear observable in the sense that is defined by path integration over a linear vector field. Examples include the entropy production rate of a process (the rate at which it exchanges energy with its reservoir), which uses $b(x) \propto w(x) = D^{-1} A x$ \cite{qian2002thermodynamics,tomita2008irreversible}. 

\begin{Lemma} \label{lem: production rates}
    The expected production rate, $\beta$, of any observable defined as a path integral against a linear vector field $b(x) = M x + v$ along the path $\{X(t)\}_{t=0}^T$ is:
    \begin{equation}
    \beta = \text{trace}(M_a^{\intercal} {\alpha_*}) = \langle M_a, {\alpha_*} \rangle = \sum_{i,j} {m_{a}}_{ij} {\alpha_*}_{ij} =  \sum_{i < j} (m_{ij} - m_{ji}){\alpha_*}_{ij}
    \end{equation}
    where $M_a$ is the skew-symmetric component of $M$, $\alpha$ is the area production rate matrix, and $\langle \cdot, \cdot \rangle$ is the matrix inner product.
\end{Lemma}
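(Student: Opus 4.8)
The plan is to directly compute the expected production rate $\beta = \mathbb{E}_{X \sim \pi_*}\left[ \mathbb{E}_{dX|X}[b(X) \cdot dX] \right]$ by substituting $b(x) = Mx + v$ and then evaluating the conditional expectation of the It\^o increment. First I would use the SDE \eqref{eqn: OU process SDE} to write $\mathbb{E}_{dX|X}[dX] = -AX\,dt$, since the Wiener increment has mean zero. This reduces the inner expectation to $b(X)\cdot(-AX)\,dt = -(MX + v)^{\intercal} A X\,dt$, and hence (dropping the $dt$ that is absorbed into the time-averaging) $\beta = -\mathbb{E}_{X \sim \pi_*}\left[ X^{\intercal} M^{\intercal} A X + v^{\intercal} A X \right]$.

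Next I would dispatch each of the two terms. The affine part $v^{\intercal} A X$ has expectation zero because $\mathbb{E}_{X \sim \pi_*}[X] = 0$ (the steady state is mean-zero Gaussian), so the constant vector $v$ drops out entirely — this is why only $M$ appears in the final formula. For the quadratic form, I would use the standard Gaussian identity $\mathbb{E}_{X \sim \pi_*}[X^{\intercal} N X] = \text{trace}(N \Sigma_*)$ with $N = M^{\intercal} A$, giving $\beta = -\text{trace}(M^{\intercal} A \Sigma_*)$. At this point I would mirror the manipulation already used in the proof of Lemma \ref{lem: area production}: only the skew-symmetric part of $M$ survives the contraction against the relevant antisymmetric piece. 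Concretely, write $M = M_s + M_a$ and note $\text{trace}(M_s^{\intercal} A \Sigma_*)$ can be paired against the symmetric part of $A\Sigma_*$ while $\text{trace}(M_a^{\intercal} A\Sigma_*)$ sees only its skew part; since $\alpha_* = \tfrac{1}{2}(A\Sigma_* - \Sigma_* A^{\intercal})$ is exactly the skew-symmetric component of $A\Sigma_*$, the symmetric contribution must vanish.

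The main obstacle — and the step I would be most careful about — is showing that the symmetric part $M_s$ contributes nothing. The cleanest route is to decompose $A\Sigma_* = (A\Sigma_*)_s + (A\Sigma_*)_a$ where $(A\Sigma_*)_a = \alpha_*$, and to use that the trace (matrix inner product) of a symmetric matrix against a skew-symmetric matrix is zero. Then $\text{trace}(M^{\intercal} A \Sigma_*) = \langle M, (A\Sigma_*)_s\rangle + \langle M, \alpha_*\rangle$, and within each inner product only the matching-symmetry part of $M$ survives, so $\langle M, \alpha_* \rangle = \langle M_a, \alpha_*\rangle$. I would need to track the overall sign carefully: I expect the $-\text{trace}(M^{\intercal}A\Sigma_*)$ together with the sign convention in $\alpha_*$ to reconcile to the stated $\beta = \langle M_a, \alpha_* \rangle = \text{trace}(M_a^{\intercal}\alpha_*)$, and I would verify this against the area-production special case $M = R^{(i,j)}$ (which is already skew-symmetric), where Lemma \ref{lem: area production} forces $\beta = 2{\alpha_*}_{ij}$, matching $\sum_{i<j}(m_{ij}-m_{ji}){\alpha_*}_{ij}$. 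Finally I would expand $\langle M_a, \alpha_* \rangle$ in coordinates to obtain the component form, using $m_{a,ij} = \tfrac{1}{2}(m_{ij} - m_{ji})$ and the antisymmetry of both $M_a$ and $\alpha_*$ to collapse the double sum over all $i,j$ into the sum over $i<j$.
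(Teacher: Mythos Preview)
Your approach mirrors the paper's for the antisymmetric piece but has a real gap at precisely the step you flag as the obstacle. After the direct computation you obtain $\beta = -\operatorname{trace}(M^{\intercal} A\Sigma_*)$, and decomposing $A\Sigma_* = (A\Sigma_*)_s + \alpha_*$ gives
\[
\beta \;=\; -\langle M_s,\, (A\Sigma_*)_s\rangle \;-\; \langle M_a,\, \alpha_*\rangle .
\]
Orthogonality of symmetric and skew matrices does justify $\langle M,\alpha_*\rangle = \langle M_a,\alpha_*\rangle$, as you say, but it does \emph{nothing} to the first term: by the Lyapunov equation $(A\Sigma_*)_s = \tfrac{1}{2}D$, so $\langle M_s,(A\Sigma_*)_s\rangle = \tfrac{1}{2}\operatorname{trace}(M_s D)$, which is nonzero for generic symmetric $M_s$. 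Your sentence ``the symmetric contribution must vanish'' is exactly the claim that needs an argument, and the trace--orthogonality route you describe does not supply one.

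The paper does not try to kill $M_s$ and $v$ algebraically after taking expectations. Instead it removes them \emph{before} computing any moment, by observing that $M_s x + v = \nabla\phi(x)$ with $\phi(x) = \tfrac{1}{2}x^{\intercal} M_s x + v^{\intercal} x$, so that $\hat{B}_s(\{X(t)\}_{0}^{T}) = \phi(X(T)) - \phi(X(0))$ is a bounded endpoint difference whose time average tends to zero. Only after stripping off this gradient part does the paper evaluate $\mathbb{E}_{X\sim\pi_*}[X^{\intercal}M_a^{\intercal}\,\mathbb{E}_{dX\mid X}[dX]]$ and reduce to the trace against $\alpha_*$. If you want to keep a purely algebraic route you need an extra ingredient---for instance, interpreting the observable integral in the Stratonovich sense, whose It\^o correction produces exactly the missing $\tfrac{1}{2}\operatorname{trace}(M_s D)$ and cancels it---but as written your argument leaves that term unaccounted for.
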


\begin{proof}
Consider an observable $\hat{B}(\{X(t)\}_{t=0}^T)$ as defined in equation \eqref{eqn: observable} where the vector field $b(x) = M x + v$ for some $M, v$. Let:
\begin{equation}
M = M_s + M_a \text{  where  } \begin{cases} M_s = \frac{1}{2}(M + M^{\intercal}) \\
M_a = \frac{1}{2}(M - M^{\intercal})
\end{cases}
\end{equation}
	
Since $M_s$ is symmetric $M_s x + v$ can be rewritten as the gradient of the function $\phi(x) = \frac{1}{2} x^T M_s x + v^T x$. The path integral against the gradient of a function is just the difference in the value of the function at its end points. Therefore, if we break the observable into its symmetric and antisymmetric parts $\hat{B}(\{X(t)\}_{t=0}^T) =  \hat{B}_s(\{X(t)\}_{t=0}^T) + \hat{B}_a(\{X(t)\}_{t=0}^T)$ then $\hat{B}_s(\{X(t)\}_{t=0}^T) = \phi(X(T)) - \phi(X(0))$. Then, $\mathbb{E}[\hat{B}_s(\{X(t)\}_{t=0}^T)] = \mathbb{E}[\phi(X(T))] - \mathbb{E}[\phi(X(0))]$. If $X(0)$ is drawn from the steady state, then $X(T)$ is drawn from the steady state, so the two expectations cancel. If not, then, after a long time this expected value converges to $\mathbb{E}_{X \sim \pi_*}[\phi(X)] - \mathbb{E}_{X \sim \pi(\cdot,0)}[\phi(X)]$. This does not change in time, so does not contribute to the long-term production of the observable \cite{teitsworth2022stochastic}. It follows that the long time production rate of the observable only depends on the antisymmetric part.

Then, replacing a temporal average with a spatial average \cite{teitsworth2022stochastic}:
$$
\begin{aligned}
\beta & = \mathbb{E}_{X \sim \pi_*}[X^{\intercal} M_a^{\intercal} \mathbb{E}_{dX|X}\left[dX \right]] = \mathbb{E}_{X \sim \pi_*}[X^{\intercal} M_a A X] \\
& = \mathbb{E}_{X \sim \pi_*}[X^{\intercal} M_a A \Sigma_* \Sigma_*^{-1} X] = -\int_{x \in \mathbb{R}^d} x^{\intercal} M_a A \Sigma_* \nabla \pi_*(x) dx \\
& = -\text{trace}(M_a A \Sigma_*) = \text{trace}(M_a^{\intercal} A \Sigma_*) \\& = \langle M_a,A \Sigma_* \rangle.
\end{aligned}
$$
	
The matrix inner product can be simplified by rewriting $A \Sigma = \frac{1}{2}(A \Sigma_* + \Sigma_* A^T) + \frac{1}{2} (A \Sigma_* - \Sigma_* A^T) = \frac{1}{2} D + {\alpha_*}$. Symmetric and skew-symmetric matrices are orthogonal under the matrix inner product, so $\langle M_a, D \rangle = 0$. Therefore $\beta = \langle M_a, {\alpha_*} \rangle$.

The matrix inner product $\langle M_a, \alpha \rangle$ can be broken into a sum over each pair of coordinate planes $i,j$ by writing out the trace of the product explicitly:
$$
\begin{aligned}
\beta = \langle M_a, {\alpha_*} \rangle = \sum_{ij} m_{a_{ij}} {\alpha_*}_{ij} & = \sum_{i < j} m_{a_{ij}} {\alpha_*}_{ij} + m_{a_{ji}} {\alpha_*}_{ji} \\
& = \sum_{i < j} m_{a_{ij}} {\alpha_*}_{ij} + m_{a_{ij}} {\alpha_*}_{ij}  =  \sum_{i < j} 2 {m_a}_{ij} {\alpha_*}_{ij} = \sum_{i < j} (m_{ij} - m_{ji})\alpha_{ij}.
\end{aligned}
$$\end{proof}

Thus, the expected production rate of all linear observables is controlled by a matrix inner product against the area product rate matrix, $\alpha_*$. It follows that, if $\alpha_*$ is known, or estimated, then we can predict the expected production rate of any linear observable. 

\begin{Corollary}
    Given an OU process, no linear observable defined as a path integral against a linear vector field $b(x) = M x + v$ is produced faster than $\|\alpha_*\|_{\text{Fro}} \|M_{a}\|_{\text{Fro}}$. This upper bound is tight and is satisfied by any observable with $M_a \propto \alpha_*$.
\end{Corollary}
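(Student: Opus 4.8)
The plan is to recognize this corollary as an immediate consequence of Lemma \ref{lem: production rates} combined with the Cauchy--Schwarz inequality for the Frobenius (matrix) inner product. By that lemma, the expected production rate of any linear observable with vector field $b(x) = Mx + v$ is $\beta = \langle M_a, \alpha_* \rangle = \text{trace}(M_a^{\intercal} \alpha_*)$, the matrix inner product of the skew-symmetric part $M_a$ with the area production rate matrix $\alpha_*$. Since the Frobenius norm is precisely the norm induced by this inner product, $\|N\|_{\text{Fro}} = \sqrt{\langle N, N \rangle}$, the Cauchy--Schwarz inequality applies verbatim to give
\[
|\beta| = |\langle M_a, \alpha_* \rangle| \le \|M_a\|_{\text{Fro}}\,\|\alpha_*\|_{\text{Fro}},
\]
which is exactly the stated bound on the magnitude of the production rate.

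Next I would verify tightness, which is the only part requiring more than a single line. Equality in Cauchy--Schwarz holds precisely when $M_a$ and $\alpha_*$ are linearly dependent, i.e.\ when $M_a \propto \alpha_*$. The point worth checking is that this extremal condition is realizable by an \emph{admissible} linear observable: since $\alpha_*$ is itself skew-symmetric, taking $M = \alpha_*$ (so $M_a = \alpha_*$ and the symmetric part $M_s = 0$) yields the legitimate vector field $b(x) = \alpha_* x$, whose production rate is $\langle \alpha_*, \alpha_* \rangle = \|\alpha_*\|_{\text{Fro}}^2 = \|\alpha_*\|_{\text{Fro}}\,\|M_a\|_{\text{Fro}}$, attaining the bound. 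More generally, any $M$ whose skew-symmetric part is proportional to $\alpha_*$ saturates the inequality, which matches the claim exactly.

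There is essentially no hard step here; all of the substantive work is carried by Lemma \ref{lem: production rates}, which reduced $\beta$ to an inner product against $\alpha_*$. The only items demanding a sentence of care are (i) noting that the bound constrains the magnitude $|\beta|$, since the production rate changes sign with the orientation of $M_a$ relative to $\alpha_*$, and (ii) confirming that the maximizer $M_a \propto \alpha_*$ lies in the admissible class. The latter holds because the admissible skew-symmetric parts range over all skew-symmetric matrices and $\alpha_*$ is skew-symmetric, so the extremizer is attained within the class rather than merely approached. Hence the bound is both valid and tight, completing the argument.
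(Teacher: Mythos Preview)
Your proposal is correct and follows essentially the same approach as the paper: invoke Lemma \ref{lem: production rates} to write $\beta = \langle M_a, \alpha_* \rangle$, then apply Cauchy--Schwarz for the Frobenius inner product, with equality when $M_a \propto \alpha_*$. The paper compresses this into a single sentence (``a standard statement characterizing inner products''), whereas you spell out the tightness check explicitly, but the argument is the same.
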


\begin{proof}
    By Lemma \ref{lem: production rates}, the expected production rate is the matrix inner product $\langle M_a, \alpha_* \rangle$ which is equivalent to the Euclidean inner-product of the vectorized matrices. The Frobenius norm of a matrix is equivalent to the Euclidean norm of the vectorized matrix. The corollary follows as a standard statement characterizing inner products. 
\end{proof}

Note that, the production rate is only nonzero if $\alpha_*$ is nonzero. Thus, if \textit{any} linear observable is produced on average, then the process cannot obey detailed balance. This is a special case of the more general observation that, if a process obeys detailed balance, then the expected production rate of all observables is zero. Otherwise, forward and reverse time would be distinguishable. This leads to a natural question. Which observable is the best test statistic?

We will use Lemma \ref{lem: production rates} to seek a linear observable that is an optimal test statistic for detecting violations of detailed balance. We will see that, because the expected production rate of all linear observables is defined by an inner product against the area production rate matrix, $\alpha_*$, the optimal observable is defined in terms of $\alpha_*$ and is produced at a rate which equals a weighted Frobenius norm of $\alpha_*$. 

\vspace{0.1 in}
\noindent \textbf{Given $\alpha_*$, what observable is optimal?}
\vspace{0.05 in}

Suppose that we track a trajectory $\{X(t_k)\}_{k=0}^n$ at a sequence of finely sampled time points $\{t_k\}_{k=0}^n$. Let $\hat{\beta}(\{X(t_k)\}_{k=0}^n)$ denote the empirical approximation to $\beta$ formed by approximating the time-averaged production rate of $B$ over the observed trajectory. In detailed balance, $\beta = 0$ for all observables, so $\hat{\beta} \neq 0$ is evidence against detailed balance. 

The quality of this evidence depends on the variation in $\hat{\beta}$. If $\hat{\beta}$ is highly uncertain, then the confidence interval for $\beta$ given $\hat{\beta}$ may include 0, in which case a hypothesis test would not consider the observed $\hat{\beta}$ sufficient evidence to reject the detailed balance hypothesis. Therefore, $B$ is only a useful observable for detecting violations of detailed balance if the uncertainty in $\beta$ given $\hat{\beta}$ vanishes quickly relative to the expectation of $\hat{\beta}$, which is $\beta$. 

In the absence of the full sampling distribution for $\hat{\beta}$ consider, instead, the coefficient of variation:
\begin{equation}
    \text{CV}(\hat{\beta}) = \frac{\mathbb{V}[\hat{\beta}]^{1/2}}{\mathbb{E}[\hat{\beta}]}
\end{equation}
where $\mathbb{V}[\hat{\beta}]$ is the variance in $\hat{\beta}$. The smaller the coefficient of variation, the less uncertainty in the estimated production rate relative to its expectation. In general, the standard deviation in the numerator will decay at rate $\mathcal{O}(T^{-1/2})$ where $T$ is the length of time observed. The faster the coefficient of variation in $\hat{\beta}$ an observable decays in $T$, the faster the power of the test will grow in $T$. If the coefficient of variation decays sufficiently quickly, then we will not need long trajectories to design a sensitive and specific test.


\begin{Theorem} \label{thm: optimal observable}
    Given an OU process observed at the sequence of samples $t_k = k \Delta t$ for $k = 0$ to $k = n = T/\Delta t$
    \begin{equation}
    \lim_{\Delta t \rightarrow 0, T \rightarrow \infty} T^{1/2} CV(\hat{\beta}(\{X(t_k)\}_{k=0}^n) \geq \|D^{-1/2} \alpha_* \Sigma_*^{-1/2}\|_{\text{Fro}}^{-1} = q^{-1/2}
    \end{equation}
    where $q$ is the entropy production rate for the process. 
    
    The lower bound is tight, and is achieved by any linear observable with vector field proportional to $b(x) = D^{-1} \alpha_* \Sigma_*^{-1} x = D^{-1} v_*(x)$. This observable is produced at rate $\|D^{-1/2} \alpha_* \Sigma_*^{-1/2}\|_{\text{Fro}}^2 = q$, the entropy production rate of the process.
\end{Theorem}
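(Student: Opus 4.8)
The plan is to reduce the claim to a single variance computation followed by a Cauchy--Schwarz optimization. Writing $\hat\beta = \tfrac1T\hat B$, we have $T^{1/2}CV(\hat\beta) = \sqrt{T\,\mathbb V[\hat\beta]}\,/\,\mathbb E[\hat\beta]$, and by Lemma \ref{lem: production rates} the denominator converges to $\beta = \langle M_a,\alpha_*\rangle$. Everything therefore hinges on the asymptotic variance rate $\sigma^2 := \lim_{\Delta t\to 0,\,T\to\infty} T\,\mathbb V[\hat\beta] = \lim \tfrac1T\mathbb V[\hat B]$: once this is in hand, $T^{1/2}CV\to\sigma/\beta$ and we minimize $\sigma/\beta$ over admissible fields $b(x)=Mx+v$. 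Since $\beta$ depends only on the skew part $M_a$, and (as below) so does $\sigma$, the optimization runs over skew-symmetric matrices.

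To compute $\sigma^2$ I would split $\hat B = \int_0^T X^\intercal M^\intercal\,dX$ using $dX = -AX\,dt + G\,dW$. The symmetric part obeys $M_s x + v = \nabla\phi(x)$ with $\phi(x)=\tfrac12 x^\intercal M_s x + v^\intercal x$, so $\int \nabla\phi\cdot dX = \phi(X(T))-\phi(X(0))$ is a telescoping boundary term of $O(1)$ variance that drops out after dividing by $T$; this re-confirms that only $M_a$ matters. For the skew part, the dominant fluctuation as $\Delta t\to 0$ is the martingale increment $X^\intercal M_a^\intercal G\,dW$, whose variance the It\^o isometry evaluates as
\begin{equation}
\sigma^2 = \lim_{\Delta t\to 0,\,T\to\infty}\tfrac{1}{T}\int_0^T \mathbb E_{\pi_*}\!\left[X^\intercal M_a^\intercal D M_a X\right]dt = \text{trace}(M_a^\intercal D M_a\Sigma_*) = \|D^{1/2}M_a\Sigma_*^{1/2}\|_{\mathrm{Fro}}^2 .
\end{equation}
This variance identity is the main obstacle. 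The martingale term is manifestly $\Delta t$-independent and survives the limit, and a per-step estimate shows each drift increment carries only $O(\Delta t^2)$ variance, summing to $O(T\,\Delta t)\to 0$. The delicate point is that the \emph{stationary temporal correlations} of the drift integral $\int X^\intercal M_a^\intercal A X\,dt$, together with its covariance against the martingale, could a priori contribute at the same $O(T)$ order; confirming the stated constant requires a Poisson-equation (Green--Kubo) corrector argument showing these do not raise the leading variance rate.

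Granting the variance, the optimization is exact and clean. Substitute $P = D^{1/2}M_a\Sigma_*^{1/2}$ and $Q = D^{-1/2}\alpha_*\Sigma_*^{-1/2}$; then $\sigma = \|P\|_{\mathrm{Fro}}$ and, by cyclicity of the trace, $\beta = \langle M_a,\alpha_*\rangle = \text{trace}(P^\intercal Q) = \langle P,Q\rangle_{\mathrm{Fro}}$, so
\begin{equation}
T^{1/2}CV \to \frac{\|P\|_{\mathrm{Fro}}}{\langle P,Q\rangle_{\mathrm{Fro}}} \;\geq\; \frac{1}{\|Q\|_{\mathrm{Fro}}} = \|D^{-1/2}\alpha_*\Sigma_*^{-1/2}\|_{\mathrm{Fro}}^{-1} = q^{-1/2},
\end{equation}
with equality precisely when $P\propto Q$, i.e.\ $M_a\propto D^{-1}\alpha_*\Sigma_*^{-1}$, recovering the asserted optimal field $b(x)\propto D^{-1}\alpha_*\Sigma_*^{-1}x = D^{-1}v_*(x)$. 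Here one must also check the minimizer is admissible as a skew part; this is automatic in two dimensions, where every skew matrix is proportional to $R$.

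Finally I would identify the constant with the entropy production rate. Expanding $q = \|Q\|_{\mathrm{Fro}}^2 = \text{trace}(\Sigma_*^{-1}\alpha_*^\intercal D^{-1}\alpha_*)$ and using $v_*(x) = -\alpha_*\Sigma_*^{-1}x$ gives $q = \mathbb E_{\pi_*}[v_*^\intercal D^{-1} v_*]$, the steady-state entropy production rate. For the optimal observable the production rate is $\langle M_a,\alpha_*\rangle = \text{trace}(\Sigma_*^{-1}\alpha_*^\intercal D^{-1}\alpha_*) = q$, so $T^{1/2}CV = q^{1/2}/q = q^{-1/2}$, closing the loop. Thus the only genuinely hard step is the variance identity; the rest is the production-rate Lemma plus one Cauchy--Schwarz inequality.
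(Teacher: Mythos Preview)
Your proposal is correct and follows the same three-step skeleton as the paper---mean via Lemma~\ref{lem: production rates}, an asymptotic variance rate, then an optimization---but the technical choices at each step differ in ways worth noting. For the variance, you use the It\^o isometry on the martingale part of $\int X^\intercal M^\intercal dX$ after telescoping away the symmetric piece; the paper instead replaces the path integral by an ensemble of well-separated (effectively i.i.d.) one-step increments and applies the law of total variance, arriving at $\|D^{1/2}M\Sigma_*^{1/2}\|_{\mathrm{Fro}}^2 + v^\intercal D v$ in terms of the \emph{full} $M$, not $M_a$. That difference matters for the optimization: because the paper's variance depends on all of $M$, it optimizes unconstrained over $M\in\mathbb{R}^{d\times d}$ via Lagrange multipliers ($DM\Sigma_*\propto\alpha_*$), and the optimizer $M\propto D^{-1}\alpha_*\Sigma_*^{-1}$ need not be skew---so the admissibility issue you flag simply does not arise in the paper's framing. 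Your Cauchy--Schwarz substitution $P=D^{1/2}M_a\Sigma_*^{1/2}$, $Q=D^{-1/2}\alpha_*\Sigma_*^{-1/2}$ is slicker than the Lagrange computation and makes the bound and its equality case transparent, but it leaves you with exactly the question you raise about whether $D^{-1}\alpha_*\Sigma_*^{-1}$ lies in the skew subspace in dimension $d>2$. Your Green--Kubo concern about drift-term correlations is also more careful than the paper, which sidesteps it by passing to the i.i.d.\ surrogate rather than controlling the continuous-time covariance directly. The entropy-production identification is essentially the same in both: the paper writes $D^{-1}A=\tfrac12\Sigma_*^{-1}+D^{-1}\alpha_*\Sigma_*^{-1}$ and drops the symmetric part, which is equivalent to your $q=\mathbb{E}_{\pi_*}[v_*^\intercal D^{-1}v_*]$.
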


\noindent \textbf{Remark:} The fact that the optimal observable estimates the entropy production rate for the process is natural. Entropy production rates play a fundamental role in the analysis of non-equilibrium systems through fluctuation theorems, which bound the probability of specific trajectories in terms of a large deviation function, which is expressed in terms of entropy production (c.f.~\cite{barato2015thermodynamic,gingrich2016dissipation,speck2004distribution}). Indeed, Theorem \ref{thm: optimal observable} is a special case of the observation, in continuous time Markov jump processes, that the coefficient of variation in any production rate is bounded from below by $q^{-1}$ \cite{barato2015formal,gingrich2016dissipation}. Note that, no physical process can violate thermodynamic equilibrium without exchanging energy between a pair of reservoirs. The entropy production tracks the rate at which it uses energy by tracking the rate at which the trajectory performs work, and in turn, produces or absorbs heat from the reservoir. Thus, it is natural that an estimator for the entropy production should be the optimal test statistic for detecting violations of detailed balance. For other tests based on entropy production rates, see \cite{ciliberto2013heat,ciliberto2013statistical}. For an alternate estimator, see \cite{skinner2021estimating}. For other work relating area and entropy production, see \cite{teitsworth2022stochastic}.

\begin{proof}

Consider a linear observable with vector field $M x + v$. Then by Lemma \ref{lem: production rates}, $\mathbb{E}[\hat{\beta}] = \langle M_a, \alpha \rangle = \langle M, \alpha \rangle$  if $X(0) \sim \pi_*$. 

To compute the variance in the estimator $\hat{\beta}(\{X(t_k)\}_{k=0}^n)$ in the joint limit as $T$ diverges and $\Delta t$ approaches zero note that $X(t)$ is ergodic, so long time averages converge in probability to spatial averages. Consider, for example, a procedure in which the average production rate of the observable is approximated by selecting $n$ times steps of length $\Delta t$ from a longer trajectory, with a time interval $\tau$ between sampled steps. Then, if $\tau$ is much larger than the autocorrelation time for the process, $\hat{\beta}(\{X(t_k)\}_{k=0}^n)$ will converge in the limit of large $\tau$ to the sum of $n$ i.i.d.~ random variables of the form $X^{\intercal} M_a^{\intercal} dX$ where $X \sim \pi_*$, and where $dX$ is drawn conditionally given $X$ with a time step of size $\Delta t$. Then, using an ensemble average:
$$
\mathbb{V}[\hat{\beta}(\{X(t_k)_{k=0}^{n})] \rightarrow \mathbb{V} \left[\frac{1}{n \Delta t} \sum_{k=0}^n (M X + v)^{\intercal} dX \right]  = \frac{1}{n \Delta t^2} \mathbb{V}[(M X + v)^{\intercal} dX].
$$
Then, applying the law of total variance:
$$
\begin{aligned}
\mathbb{V}[\hat{\beta}(\{X(t_k)_{k=0}^{n})] & \rightarrow \frac{1}{n \Delta t^2}\left( \mathbb{E}_{X \sim \pi_*}[\mathbb{V}_{dX|X}[(M X + v)^{\intercal} dX]] + \mathbb{V}_{X \sim \pi_*}[\mathbb{E}_{dX|X}[(M X + v)^{\intercal} dX]]\right) \\
& = \frac{1}{n \Delta t^2}\left( \mathbb{E}_{X \sim \pi_*}[(M X + v)^{\intercal} D (M X + v)] \Delta t + \mathbb{V}_{X \sim \pi_*}[(M X + v)^{\intercal} A X] \Delta t^2\right) \\
& = \frac{1}{n \Delta t}\left( \mathbb{E}_{X \sim \pi_*}[(M X + v)^{\intercal} D (M X + v)] + \mathcal{O}(\Delta t)\right) 
\end{aligned}
$$

Therefore, given $T = n \Delta t$:
$$
\begin{aligned}
    \lim_{T \rightarrow \infty} T \mathbb{V}(\hat{\beta}(\{X(t_k)\}_{k=0}^n)) & \rightarrow \mathbb{E}_{X \sim \pi_*}[(M X + v)^{\intercal} D (M X + v)] \\ & = \mathbb{E}_{X \sim \pi_*}[X^{\intercal} M^{\intercal} D M X] + v^{\intercal} D v  = \langle M^{\intercal} D M, \Sigma \rangle + v^{\intercal} D v.
\end{aligned}
$$

To symmetrize the matrix inner product note that, $\langle M^{\intercal} D M, \Sigma \rangle = \text{trace}(M^{\intercal} D M \Sigma) = \text{trace}(\Sigma^{1/2} M^{\intercal} D M \Sigma^{1/2})$ since the trace is preserved by similarity transformations. Then, expanding $D = D^{1/2} D^{1/2}$ gives, $\langle M^{\intercal} D M, \Sigma \rangle  = \| D^{1/2} M \Sigma^{1/2}\|_{\text{Fro}}^2.$ Therefore:
\begin{equation} \label{eqn: asymptotic variance in estimator}
    \lim_{T \rightarrow \infty} T \mathbb{V}(\hat{\beta}(\{X(t_k)\}_{k=0}^n)) = \| D^{1/2} M \Sigma_*^{1/2}\|_{\text{Fro}}^2 + v^{\intercal} D v.
\end{equation}

Introducing $v \neq 0$ does not change the expected production rate $\mathbb{E}[\hat{\beta}]$ but always increases the variance in the estimator since $D$ is positive definite. Therefore, to minimize the coefficient of variation, set $v = 0$. 

Then, to identify an optimal linear observable, find $M \in \mathbb{R}^{d \times d}$ that minimizes:
\begin{equation}
    \lim_{T \rightarrow \infty, \Delta t \rightarrow 0} T^{1/2} \text{CV}(\hat{\beta}) = \frac{\| D^{1/2} M \Sigma_*^{1/2}\|_{\text{Fro}}}{\langle M, \alpha_* \rangle}.
\end{equation}

Equivalently, maximize the ratio:
\begin{equation}
    Z(M) = \frac{\langle M, \alpha_*\rangle}{\| D^{1/2} M \Sigma_*^{1/2}\|_{\text{Fro}}}
\end{equation}
where $Z(M) T^{-1/2}$ is the expected z-score for the observable $\hat{\beta}$ with vector field $M x$ in the long time limit. 

Notice that $Z(\lambda M) = Z(M)$ for any $\lambda \in \mathbb{R}$. So without loss of generality, we can restrict our attention to $M$ with a fixed norm. In particular, set $\| D^{1/2} M \Sigma_*^{1/2}\|_{\text{Fro}} = 1$. Then, finding the optimal linear observable amounts to solving the constrained optimization problem:
\begin{equation} \label{eqn: constrained optimization problem}
\begin{aligned}
& \textbf{Find: } M \in \mathbb{R}^d \\
& \textbf{Maximizing: } \langle M, \alpha_* \rangle \\
& \textbf{Given: } \|D^{1/2} M \Sigma_*^{1/2}\|_{\text{Fro}}^2 = 1.
\end{aligned}
\end{equation}
In other words, the optimal observable is given by finding the centered, linear observable with maximal production rate, constrained so that the variance in the observable grows at rate $T$ when $T$ is large. 

Equation \eqref{eqn: constrained optimization problem} is a constrained optimization problem in a continuous domain with a continuously differentiable equality constraint and a continuously differentiable objective function. Therefore, the solution must satisfy the system of equations:
\begin{equation} \label{eqn: Lagrange}
    \nabla_M\|D^{1/2} M \Sigma_*^{1/2}\|_{\text{Fro}}^2 =  \lambda \nabla_M \langle M, \alpha_* \rangle = \lambda \alpha_*
\end{equation}
for some Lagrange multiplier $\lambda \in \mathbb{R}$. 

The gradient of the weighted Frobenius norm of $M$ is (see Appendix \ref{app: gradient proof}):
\begin{equation} \label{eqn: frobenius norm gradient}
    \nabla_M \|D^{1/2} M \Sigma_*^{1/2}\|_{\text{Fro}}^2 = 2 D M \Sigma
\end{equation}

Therefore, the Lagrange constraint is solved by finding $M$ such that $D M \Sigma \propto \alpha_*$. Then, the optimal observable is given by selecting:
\begin{equation}
    M \propto D^{-1} \alpha_* \Sigma_*^{-1}
\end{equation}

In particular, adopt:
\begin{equation}
    M = D^{-1} \alpha_* \Sigma_*^{-1}, \quad b(x) D^{-1} v_*(x)
\end{equation}
where $v_*(x)$ are the steady state probability velocities (see equation \eqref{eqn: steady state velocities}).

This observable is produced at rate:
$$
\begin{aligned}
    \beta & = \langle M, \alpha \rangle = \text{trace}(D^{-1} \alpha_* \Sigma_*^{-1} \alpha) = \text{trace}(D^{-1/2} \alpha_* \Sigma_*^{-1} \alpha D^{-1/2}) = \|D^{-1/2} \alpha_* \Sigma^{-1/2} \|_{\text{Fro}}^2.
\end{aligned}
$$

To conclude, we will show that the rate $\beta = \|D^{-1/2} \alpha_* \Sigma^{-1/2} \|_{\text{Fro}}$ is equivalent to the entropy production rate $q$ for the process. The entropy production rate is the production rate of the observable with vector field $b(x) = -w(x) = D^{-1} A x$ \cite{qian2002thermodynamics,tomita2008irreversible}. 

First, notice that $D^{-1} A = D^{-1} A \Sigma_* \Sigma_*^{-1} = D^{-1} (\frac{1}{2}D + \alpha_*) \Sigma_*^{-1} = \frac{1}{2} \Sigma_*^{-1} + D^{-1} \alpha_* \Sigma_*^{-1}$. By Lemma \ref{lem: production rates}, the entropy production rate only depends on the anti-symmetric part of $D^{-1} A$. Since $\Sigma_*$ is a covariance, $\Sigma_*^{-1}$ is symmetric, so does not influence the entropy production rate. It follows that:
\begin{equation}
    q = \langle D^{-1} \alpha_* \Sigma_*^{-1}, \alpha_* \rangle = \| D^{-1/2} \alpha_* \Sigma_*^{-1/2} \|_{\text{Fro}}^2.
\end{equation}

Therefore the production rate for the observable defined by the vector field $b(x) = D^{-1} \alpha_* \Sigma_*^{-1} x$ is an optimal test statistic for detecting violations of detailed balance, and estimates the entropy production rate for the process. It follows that the expected $z$-score in all linear observables is bounded above, for large $T$ by $q T^{1/2}$ \end{proof}

While elegant, Theorem \ref{thm: optimal observable} is somewhat frustrating, since the optimal observable depends on the diffusion tensor $D$, steady state area production rate matrix $\alpha_*$, and the steady state covariance $\Sigma_*$. These are generally not known in observational problems. In hindsight, this circular dependence is a necessary consequence of the problem structure. For example, it is possible to construct OU processes that only exhibit circulation in one plane. Useful observables should be aligned to track this circulation. Since this plane depends on the process, so does the observable that best illustrates the circulation. 

To use Theorem \ref{thm: optimal observable}, adopt a two-stage process. First, estimate $D$, $\alpha_*$, and $\Sigma_*$. These estimates could be formed by tracking an initial trajectory segment, by running a bootstrap procedure on increments extracted from a full trajectory. Alternately, use an initial trajectory segment to form a maximum likelihood estimator for $D$ and $A$, or for $D$ and $\alpha_* \Sigma_*^{-1}$. Second, use the estimated matrices ato form $b(x)$ and to estimate $q$ via a time averaged path-sum. Ideally, the data should be divided so that the information used to form $D^{-1} \alpha_* \Sigma_*^{-1}$ is independent of the information later used to estimate $q$ from $\hat{\beta}$. 

The optimality guarantee developed in Theorem \ref{thm: optimal observable} also relies on the convergence of long-time averages to spatial averages against the steady-state. Therefore, the optimality guarantee is only provides relevant advice for trajectories that are watched for a long time, that is, for long enough for the trajectory to mix. Given a long trajectory, demonstrating violations of detailed balance is only difficult if the violations are small. Therefore, Theorem \ref{thm: optimal observable} provides relevant advice when the user is attempting to detect small violations of detailed balance using long trajectories. In this setting, dividing the data into two separate sets in order to perform the two-stage estimation procedure described above is not too wasteful.

\section{Conclusion} 

In summary, the area production rate matrix introduced in \cite{ghanta2017fluctuation} and \cite{gonzalez2019experimental} characterizes the nonequilibrium behavior of OU processes. Following \cite{tomita1974irreversible} and \cite{tomita2008irreversible} we showed that the probability fluxes and velocities are generated by the instantaneous area production rate matrix $\alpha(t)$, which coincides with the angular momentum matrix of the probability fluid. At steady state, it converges to $\alpha_*$, which is the unique linear mapping that converts the conservative component of the vector field $w(x)$ into its rotational component, where $w(x)$ is defined such that path integration against $w(x)$ returns the work performed along a trajectory, and where minimizing the work along a path is equivalent to solving for a maximally likely path \cite{nolting2016balls}. We then showed that, at least in two-dimensions, the steady-state distribution of a nonequilibrium OU process depends on induced rotation in a geometric fashion that is elegantly expressed using the corresponding area production rate (see Theorem \ref{thm: steady state ellipse geometry}). To conclude, we illustrated that the expected long-term production rate of any observable defined as a path integral against a linear vector field is controlled by an inner product against $\alpha_*$. As a result, the fastest any linear observable is produced is $\|\alpha_*\|_{\text{Fro}}^2$. This result recommends the norm area production rate matrix as a signature of nonequilibrium behavior since, at equilibrium, no observable is produced on average. In particular, the linear observable with the asymptotically smallest coefficient of variation in its empirical estimate returns a weighted Frobenius norm of the area production rate matrix that coincides with the entropy production rate of the process (see Theorem \ref{thm: optimal observable}). 

\bibliographystyle{siam}
\bibliography{Refs.bib}

\pagebreak
\section{Appendix}

\subsection{Proof of Equation \eqref{eqn: trace integral}} \label{app: trace proof}

Consider an integral of the form (where $M$ is an arbitrary square matrix and $\mu(x)$ is differentiable distribution that vanishes at least exponentially at infinity):
\begin{equation}
    -\int_{\mathbb{R}^d} x^T M \nabla \mu(x) dx.
\end{equation}
    
Now, notice that $\nabla \cdot [(M^T x) \mu(x)] = [\nabla \cdot M^T x] \mu(x) + (M^T x) \cdot \nabla \mu(x)$. Therefore $x^T M \nabla \mu(x) = \nabla \cdot [(M^T x) \mu(x)] - [\nabla \cdot M^T x] \mu(x).$ Now, the divergence of $M^T x$ is just the trace of $M^T$, which is equal to the trace of $M$. Therefore the integral can be rewritten:
\begin{equation}
-\int_{\mathbb{R}^d} x^T M \nabla \mu(x) dx = \int_{\mathbb{R}^d} \text{trace}(M) \mu(x) dx - \int_{\mathbb{R}^d} \nabla \cdot [(M^T x) \mu(x)].
\end{equation}
    
To evaluate the second integral use the divergence theorem. If we evaluate the integral over some domain $\Omega$:
\begin{equation}
    -\int_{\Omega} \nabla \cdot [(M^T x) \mu(x)] \mu(x) dx = \int_{\partial \Omega} \partial_n [(M^T x) \mu(x)] dx.
\end{equation}
    
Now, taking the domain to be a sphere radius $r$, all partial derivatives of $(M^T x \mu(x))$ are vanishing for large $r$ since $\mu(x)$ and its partials are assumed to vanish exponentially at infinity. Since the surface area of the sphere only expands at rate $\mathcal{O}(r^{d - 1})$ the exponential decay of $\mu$ ensures that the first integral vanishes. Then:
\begin{equation}
    -\int_{\mathbb{R}^d} x^T M \nabla \mu(x) dx = \text{trace}(M) \int_{\mathbb{R}^d} \mu(x) dx = \text{trace}(M).
\end{equation}
Therefore any integral of this form simply reduces to the trace of the matrix $M$. $\square$

\subsection{Proof of Theorem \ref{thm: steady state ellipse geometry} and Corollaries \ref{corr: bounding distribution} through \ref{corr: form}} \label{app: steady state proofs}

Consider a two dimensional OU process with the standard parameterization:
\begin{equation}
dZ  = \bar{\lambda} \left[\begin{array}{cc} (1 + \mu) & 2 \omega\\ -2 \omega & (1 - \mu) \end{array} \right] Z dt + dW = A(\omega) Z dt + dw
\end{equation}

Then let $\Sigma_*(\omega)$ denote the steady state covariance and $\Sigma_*(\omega)^{-1}$ its inverse as functions of the area production rate $\omega$. Then, $\Sigma_*(\omega)$ must solve the Lyapunov equation:
\begin{equation} \label{eqn: appendix lyapunov}
    A(\omega) \Sigma_*(\omega) + \Sigma_*(\omega) A(\omega)^{\intercal} = I.
\end{equation}

We will parameterize $\Sigma_*(\omega)$ by parameterizing the singular value decomposition of its inverse. In particular, we will show that $\Sigma_*(\omega)^{-1}$ has singular values:
\begin{equation}
    s_\pm(\omega) = \bar{s}(\omega)(1 \pm s'(\omega))
\end{equation}
where:
\begin{equation} \label{eqn: singular values of precision}
\begin{aligned}
    & \bar{s}(\omega) = \frac{1}{2}\text{trace}(\Sigma_*(\omega)^{-1}) = \text{trace}(A(\omega)) = 2 \bar{\lambda} , \quad s'(\omega) = \sqrt{\frac{\mu^2}{1 + (2\omega)^2}}
\end{aligned}
\end{equation}
and has singular vectors rotated $\theta(\omega)$ degrees off the coordinate bases where:
\begin{equation} \label{eqn: appendix angle}
    \theta(\omega) = \frac{1}{2} \tan^{-1}(2 \omega).
\end{equation}

If shown, these results would immediately establish Corollaries \ref{corr: SVD} and \ref{corr: form}, as well as the first statement in Corollary \ref{corr: monotonicity}. The remaining statements in Corollaries \ref{corr: monotonicity} and \ref{corr: entropy} follow by direct analysis of the singular values. Corollary \ref{corr: bounding distribution} follows by explicit limiting analysis of the ensuing explicit form for the steady state covariance. 

Therefore, it remains to show that: (a) if $\Sigma_*^{-1}(\omega)$ is parameterized by equations \eqref{eqn: singular values of precision} and \eqref{eqn: appendix angle}, then $\Sigma_*(\omega)$ solves the Lyapunov equation and that $\Sigma_*(\omega)$ generates a family of ellipses and density functions satisfying the bounds stated in Theorem \ref{thm: steady state ellipse geometry} and Corollary \ref{corr: bounding distribution}.

First, we establish that $\bar{s}(\omega) = 2 \bar{\lambda}$ for all $\omega$. Multiply the Lyapunov equation \eqref{eqn: appendix lyapunov} from the left by $\Sigma_*^{-1}$, then compute the trace:
$$
    \text{trace}(\Sigma_*^{-1} A \Sigma_*) + \text{trace}(A^{\intercal}) = 2 \text{trace}(A) = \text{trace}(\Sigma_*^{-1}).
$$

Therefore, $\text{trace}(\Sigma_*^{-1}) = 2 \bar{s}(\omega) = 2 \text{trace}(A) = 4 \bar{\lambda}$ so $\bar{s}(\omega) = 2 \bar{\lambda}$.  
	
Given $\theta(\omega)$, $\Sigma_*(\omega)$ and its inverse have singular vectors:
$$
U(\omega) = \left[\begin{array}{cc} \cos(\theta(\omega)) & \sin(\theta(\omega)) \\ -\sin(\theta(\omega)) & \cos(\theta(\omega)) \end{array} \right]
$$

The diagonal matrix of singular values is:
$$
S^{-1}(\omega) = \frac{1}{\bar{s}}\left[\begin{array}{cc} \frac{1}{1 + s'(\omega)} & 0 \\ 0 & \frac{1}{1 + s'(\omega)} \end{array} \right] = \frac{1}{2 \bar{\lambda}} \left[\begin{array}{cc} \frac{1}{1 + s'(\omega)} & 0 \\ 0 & \frac{1}{1 + s'(\omega)} \end{array} \right]
$$
	
Then $\Sigma_*(\omega) = U(\omega) S^{-1}(\omega) U^{\intercal}(\omega)$ has entries:
$$
\Sigma_*(\omega) = \left[\begin{array}{cc} \frac{u_{11}^2}{s_+} + \frac{u_{12}^2}{s_-}  & \frac{u_{11} u_{12}}{s_+} + \frac{u_{12} u_{22}}{s_-} \\
\frac{u_{21} u_{11}}{s_+} + \frac{u_{12} u_{22}}{s_-}  & \frac{u_{21}^2}{s_+} + \frac{u_{22}^2}{s_-} \end{array} \right].
$$
where $s_\pm = \bar{s} (1 \pm s'(\omega))$. 

Next, simplify the products of the entries of $U$:
$$
\begin{aligned}
& u_{11} = u_{22} = \cos{\left(\frac{1}{2} \tan^{-1}(\omega) \right)} = \sqrt{\frac{1}{2} \left(1 + \frac{1}{\mu} s'(\omega)\right)} \\
& u_{12} = - u_{21} = \sin{\left(\frac{1}{2} \tan^{-1}(\omega) \right)} = \text{sign}(\omega) \sqrt{\frac{1}{2} \left(1 - \frac{1}{\mu} s'(\omega) \right)}.
\end{aligned}
$$
Therefore:
$$
\begin{aligned}
& u_{11}^2 = u_{22}^2 = \frac{1}{2} \left(1 + \frac{1}{\mu} s'(\omega) \right) \\
& u_{12}^2 = u_{21}^2 = \frac{1}{2} \left(1 - \frac{1}{\mu} s'(\omega) \right) \\
& u_{11} u_{12} = -u_{21} u_{22} = \frac{1}{2} \text{sign}(\omega) \sqrt{1 - \left(\frac{s'(\omega)}{\mu} \right)^2}.
\end{aligned}
$$
	
The covariance matrix is symmetric so we only need to evaluate the diagonal entries and one of the off-diagonals. The first diagonal entry is:
$$
\begin{aligned}
\frac{u_{11}^2}{s_+} + \frac{u_{12}^2}{s_-}  & = \frac{1}{4 \lambda} \left( \frac{1 + \frac{1}{\mu} s'(\omega)}{1 + s'(\omega)} + \frac{1 - \frac{1}{\mu} s'(\omega)}{1 - s'(\omega)} \right) \\
& = \frac{1}{4\lambda} \left(\frac{1}{1 + s'(\omega)} + \frac{1}{1 - s'(\omega)} + \frac{s'(\omega)}{\mu}\left(\frac{ 1}{1 + s'(\omega)} - \frac{1}{1 - s'(\omega)} \right) \right) \\
& = \frac{1}{4\lambda} \left(\frac{2}{1 - s'(\omega)^2} + \frac{s'(\omega)}{\mu}\left(\frac{ 2 s'(\omega)}{1 - s'(\omega)^2} \right) \right) \\
& = \frac{1}{2 \lambda} \frac{1}{1 - s'(\omega)^2} \left(1 + \frac{s'(\omega)^2}{\mu} \right)
\end{aligned}
$$
	
The second diagonal entry simplifies similarly:
$$
\begin{aligned}
\frac{u_{21}^2}{s_+} + \frac{u_{22}^2}{s_-}   = \frac{1}{2 \lambda} \frac{1}{1 - s'(\omega)^2} \left(1 - \frac{s'(\omega)^2}{\mu} \right).
\end{aligned}
$$

The off-diagonal entries are:
$$
\begin{aligned}
\frac{u_{21} u_{11}}{s_+} + \frac{u_{12} u_{22}}{s_-}  & = \frac{1}{4 \lambda} \left( \frac{-\text{sign}(\omega)}{1+s'(\omega)} \sqrt{1 - \left(\frac{s'(\omega)}{\mu} \right)^2} \right) + \frac{1}{4 \lambda} \left( \frac{\text{sign}(\omega)}{1-s'(\omega)} \sqrt{1 - \left(\frac{s'(\omega)}{\mu} \right)^2} \right) \\ & = \frac{1}{4 \lambda} \text{sign}(\omega) \sqrt{1 - \left(\frac{s'(\omega)}{\mu} \right)^2} \left(\frac{1}{1-s'(\omega)} - \frac{1}{1+s'(\omega)} \right) \\
& = \frac{\text{sign}(\omega)}{2 \lambda} \frac{s'(\omega)}{1 - s'(\omega^2)} \sqrt{1 - \left(\frac{s'(\omega)}{\mu} \right)^2}.
\end{aligned}
$$

Therefore:
$$
\Sigma_*(\omega) = \frac{1}{2 \bar{\lambda}} \frac{1}{1 - s'(\omega)^2} \left[I + s'(\omega) \left[\begin{array}{cc} s'(\omega)/\mu & \text{sign}(\omega) \sqrt{1 - \left(\frac{s'(\omega)}{\mu} \right)^2} \\ \text{sign}(\omega) \sqrt{1 - \left(\frac{s'(\omega)}{\mu} \right)^2} & -s'(\omega)/\mu \end{array}\right] \right]
$$
	
Next, use \eqref{eqn: singular values of precision} to substitute in for $s'(\omega)$. This gives:
\begin{equation}
\Sigma_*(\omega) = \frac{1}{2 \lambda} \left(1 - \frac{\mu^2}{1 + \left(2 \omega \right)^2}\right)^{-1} \left[I + \frac{\mu}{1 + \left(2 \omega \right)^2} \left[\begin{array}{cc} 1 & 2 \omega \\ 2 \omega & -1 \end{array}\right] \right]
\end{equation}
	
Therefore, Corollaries \ref{corr: SVD} and \ref{corr: form} define the same matrix-valued function of $\omega$. It remains to show that this matrix solves the Lyapunov equation, and thus recovers the steady state covariance. To show that $\Sigma_*(\omega)$ satisfies the Lyapunov equation \eqref{eqn: appendix lyapunov}, we compute the products $A(\omega) \Sigma_*(\omega)$ and $\Sigma_*(\omega) A(\omega)^{\intercal}$ explicitly.
$$
\begin{aligned}
& A(\omega) \Sigma_*(\omega) = \frac{1}{2} \left(1 - \frac{\mu^2}{1 + (2 \omega)^2} \right)^{-1} \left[\left[\begin{array}{cc} 1 - \mu & (2 \omega) \\ -(2 \omega) & 1 + \mu \end{array} \right] + \frac{\mu}{1+(2 \omega)^2} \left[\begin{array}{cc} 1- \mu + (2 \omega)^2 & -\mu (2 \omega) \\ \mu (2 \omega) & -(1 + \mu) - (2 \omega)^2 \end{array} \right] \right].\\
& \Sigma_*(\omega) {A}^T(\omega)  = \frac{1}{2} \left(1 - \frac{\mu^2}{1 + (2 \omega)^2} \right)^{-1} \left[\left[\begin{array}{cc} 1 - \mu & -(2 \omega) \\ (2 \omega) & 1 + \mu \end{array} \right] + \frac{\mu}{1+(2 \omega)^2} \left[\begin{array}{cc} 1- \mu + (2 \omega)^2 & \mu (2 \omega) \\ -\mu (2 \omega) & -(1 + \mu) - (2 \omega)^2 \end{array} \right] \right].
\end{aligned}
$$
	
Notice that the off-diagonal terms in the second product are equal to the negative of the off-diagonal terms in the first product. Therefore, summing the two gives a diagonal matrix. All that remains is to show that this matrix is the identity. Taking the sum:
$$
\begin{aligned}
A(\omega) \Sigma_*(\omega) + \Sigma_*(\omega) A^{\intercal}(\omega) = & \left(1 - \frac{\mu^2}{1 + (2 \omega)^2} \right)^{-1} \times \ldots \\ & \ldots \left[\begin{array}{cc} (1 - \mu) + \frac{\mu}{1 + (2 \omega)^2}((1 - \mu) + (2 \omega)^2)  & 0 \\ 0 & (1 + \mu) - \frac{\mu}{1 + (2 \omega)^2}((1 + \mu) + (2 \omega)^2)  \end{array} \right].
\end{aligned}
$$
	
To simplify, multiply by $(1 + (2 \omega)^2)/(1+(2 \omega)^2)$. This gives:
$$
A^(\omega) \Sigma_*(\omega) + \Sigma_*(\omega) A^{\intercal}(\omega) = \left(1 + (2 \omega)^2 - \mu^2 \right)^{-1}  \left[\begin{array}{cc} 1 + (2 \omega)^2 - \mu^2  & 0 \\ 0 & 1 + (2 \omega)^2 - \mu^2  \end{array} \right] = I.
$$
	
It follows that $\Sigma_*(\omega)$ satisfies the Lyapunov equation for all $\omega$. Thus, Corollaries \ref{corr: SVD} and \ref{corr: form} correctly parameterize the steady state covariance. 

To conclude, we show that, if $\Sigma_*(\omega)$ satisfies Corollary \ref{corr: SVD} then the bounding statements in Theorem \ref{thm: steady state ellipse geometry} and Corollary \ref{corr: bounding distribution} hold. 

	
The next key geometric observation that we need to prove is to prove that the ellipses $\mathcal{E}(\Sigma_*^{-1}(\omega))$ defined by $\frac{1}{2} z^T \Sigma^{-1}(\omega) z = 1$ is bounded by the ellipses $\mathcal{E}(\Lambda^-)$ and $\mathcal{E}(\Lambda^+)$, where the first is given by $\bar{\lambda} z_1^2 + \lambda_2 z_2^2 = 1$ and the second by $\lambda_1 z_1^2 + \bar{\lambda} z_2^2 = 1$, for all $\omega$, and is always tangent to the two ellipses at least two points.
	
Define the potential functions:
\begin{equation}
\begin{aligned}
& u^+(z) = z^T \Lambda^+ z,\quad \Lambda^+ = \text{diag}(\lambda_+, \bar{\lambda}) \\
& u_*(z,\omega) = \frac{1}{2} z^T \Sigma_*^{-1}(\omega) z \\
& u^-(z) = z^T \Lambda^{-} z,\quad \Lambda^- = \text{diag}(\bar{\lambda}, \lambda_-).
\end{aligned}
\end{equation}
The relevant ellipses are level sets of these potential functions when they are set equal to one. 

To show that the iso-ellipse of the potential function $u(z,\omega)$ is contained inside the iso-ellipse of the potential function $u^{-}(z)$ for all $\omega$, and tangent to it at two points, we will minimize the potential function $u_*(z,\omega)$ constrained to the ellipse $u^{-}(z) = 1$. If the minimum value of $u_*(z,\omega)$ acheived is equal to one for all $\omega$ then the ellipse $u_*(z,\omega) = 1$ is contained inside the ellipse $u^{-}(z) = 1$ for all $\omega$ and is tangent to it at at least two points. To show that the iso-ellipse of the function $u_*(z,\omega)$ contains the iso-ellipse of $u^{+}(z)$ for all $\epsilon$ we maximize the value of $u_*(z,\omega)$ constrained to the ellipse $u^{+}(z) = 1$. Then, if the maximum value acheived is always one, the iso-ellipse $u_*(z,\omega) = 1$ contains the iso-ellipse $u^{+}(z) = 1$ and is tangent to it at at least two points. 
	
 Start with the outer ellipse. Minimizing $u_*(z,\omega)$ constrained to $u^{-}(z) = 1$ is a constrained minimization problem. Stationary requires:
\begin{equation}
\nabla u^{-}(z) = \nu \nabla u_*(z,\omega) 
\end{equation}
for some Lagrange multiplier $\nu$. Evaluating the gradients yeilds:
$$
\Lambda^- z = \nu \Sigma_*^{-1}(\omega) z.
$$
	
Multiply both sides with $\Sigma_*(\omega)$ to convert to an eigenvalue problem:
\begin{equation}
\Sigma_*(\omega) \Lambda^- z = \nu z.
\end{equation}
	
Therefore, any minimizing $z$ must be parallel to an eigenvector of $\Sigma_*(\omega) \Lambda^-$, where $\nu$ is the corresponding eigenvalue.
	
Now, the constraint required $z^T \Lambda^- z = 1$. But, $\Lambda^- z = \nu \Sigma_*^{-1}(\omega) z$ so:
$$
z^T \Lambda^- z = \frac{2 \nu}{2} z^T \Sigma_*^{-1}(\omega) z = 2 \nu u_*(z,\omega) = 1. 
$$
	
It follows that, if $z$ is an eigenvector of $\Sigma_*(\omega) \Lambda^-$ with eigenvalue $\nu$, then the value of the potential function $u_*(z,\omega)$ at $z$ is:
\begin{equation}
u_*(z,\omega) = \frac{1}{2 \nu}.
\end{equation}
	
So, to minimize the potential function $u_*(z,\omega)$, maximize $\nu$. Therefore, we are interested in is the eigenvector of $\Sigma_*(\omega) \Lambda^-$ corresponding to the largest eigenvalue of $\Sigma_*(\omega) \Lambda^-$. Repeating this analysis for the inner ellipse shows that the intercept of interest with the inner ellipse occurs along the eigenvalue of $\Sigma_*(\omega) \Lambda^+$ corresponding to the smallest eigenvalue of $\Sigma_*(\omega) \Lambda^+$. 
This reduces the problem of finding the intersects to an eigenvalue problem. In each case the eigenvector of interest corresponds to either the largest or smallest eigenvalue. We aim to show that the value of $u_*(z,\omega)$ is equal to one at both of these points, regardless of $\omega$. So, we must show that the largest eigenvalue of $\Sigma_*(\omega) \Lambda^-$ and the smallest eigenvalue of $\Sigma_*(\omega) \Lambda^+$ are both always equal to $1/2$ for any $\omega$.
	
All of the matrices involved are two dimensional so we can compute their eigenvalues explicitly by finding their trace and determinant. Specifically, given a two by two matrix with trace $t$ and determinant $d$ the eigenvalues are:
$$
\nu_{\pm} = \frac{t \pm \sqrt{t^2 - 4 d}}{2}.
$$
	
Therefore, to find the eigenvalues of $\Sigma_*(\omega) \Lambda^-$ and $\Sigma_*(\omega) \Lambda^+$ we will compute their trace and determinant. The determinant of $\Sigma_*(\omega)$ is:
\begin{equation}
|\Sigma_*(\omega)| = \frac{1}{4 \lambda^2} \frac{1}{1 - s'(\omega)^2}.
\end{equation}
	
The determinant of the product of matrices is the product of the determinants. Both $\Lambda^-$ and $\Lambda^+$ are diagonal so their determinants are just the products of their entries. Then:
$$
|\Sigma_*(\omega) \Lambda^-| = \frac{1}{4} \frac{1 - \mu}{1 - s'(\omega)^2} \quad |\Sigma_*(\omega) \Lambda^+| = \frac{1}{4} \frac{1 + \mu}{1 - s'(\omega)^2}.
$$
	
To find the traces we compute the diagonal entries of the products explicitly. For example:
\begin{equation}
\text{trace}(\Sigma_*(\omega) \Lambda^-) = \frac{1}{2} \left[1 + \frac{1 - \mu}{1 - s'(\omega)^2} \right] = \frac{1}{2} \left[1 + 4 |\Sigma_*(\omega) \Lambda^-| \right].
\end{equation}
	
Let $t$ be the trace and $d$ be the determinant. Then we have shown that, for $\Sigma_*(\omega) \Lambda^-$, $t = \frac{1}{2} + 2 d$. Plugging into the quadratic equation recovers the eigenvalues:
\begin{equation}
\begin{aligned}
\nu_{\pm} & = \frac{t \pm \sqrt{t^2 - 4 d}}{2} = \frac{\frac{1}{2} + 2 d}{2} \pm \frac{\sqrt{\frac{1}{4} + 2 d + 4 d^2 - 4 d}}{2} \\ & = \frac{1}{4} +  d \pm \frac{1}{2}\sqrt{(1/2 - 2 d)^2} = \frac{1}{4} + d \pm \left(\frac{1}{4} - d \right)
\end{aligned} 
\end{equation}
	
We are looking for the larger eigenvalue, which is given by summing the two terms, which gives $\beta_+ = \frac{1}{4} + d + \frac{1}{4} - d = \frac{1}{2}.$ Therefore, the largest eigenvalue of $\Sigma_*(\omega) \Lambda^-$ is always equal to one half, so the minimum value of $u_*(z,\omega)$ constrained to $u^{-}(z) = 1$ is equal to one. This proves that the ellipse $\mathcal{E}(\Sigma_*(\omega))$ given by setting $u_*(z,\omega) = 1$ is always contained inside the ellipse $u^{-}(z) = 1$, and tangent to it at at least two points. Moreover, these points are parallel to the eigenvector of $\Sigma_*(\omega) \Lambda^-$ corresponding to eigenvalue $1/2$. 
	
Repeating for the inner ellipse:
\begin{equation}
 \begin{aligned}
& |\Sigma_*(\omega) \Lambda^+| = \frac{1}{4} \frac{1+\mu}{1 - s'(\omega)^2} \\
& \text{trace}(\Sigma_*(\omega) \Lambda^-) = \frac{1}{2} \left[ 1 - \frac{1 + \mu}{1 - s'(\omega)^2} \right] = \frac{1}{2} - 2 |\Sigma_*(\omega) \Lambda^+|. 
 \end{aligned}
\end{equation}
	
Therefore, when considering the inner ellipse the trace is $t = \frac{1}{2} - 2 d$, where $d$ is the determinant. Plugging this into the quadratic equation confirms that the smallest eigenvalue is always $1/2$, so the maximal value of $u_*(z,\omega)$ constrained to $u^{+}(z) = 1$ is one. This proves that the ellipse $\mathcal{E}(\Sigma_*(\omega))$ given by fixing $u_*(z,\omega) = 1$ always contains the ellipse fixed by $u^{+}(z) = 1$, and is tangent to it at at least two points. Now the eigenvector of $\Sigma_*(\omega) \Lambda^+$ corresponding to eigenvalue $1/2$ points to the intersection points. This could lead to a formula for finding the angle between the vector pointing from the origin to the intersection of the two ellipses and the $y$ axis.
	
This proves that, for all $\omega$ the ellipse defined by setting $u_*(z,\omega) = 1$ is bounded between the ellipses $u^{-}(z) = 1$ and $u^{+}(z) = 1$, and tangent to both at least two points. As long as $\lambda_1 \neq \lambda_2$ the outer and inner ellipse are distinct, so the ellipse $u_*(z,\omega) = 1$ is tangent to both at exactly two points. Since the ellipses always intersect there are no other ellipses contained in the ellipse defined by $u^{-}(z) = 1$, or containing the ellipse $u^{+}(z) = 1$, such that the ellipse $u_*(z,\omega) = 1$ is always contained in, or contains the intermediate ellipse for all $\epsilon$. This proves that the bounding ellipses $u^{-}(z) = 1$ and $u^{+}(z) = 1$ are the smallest, and largest ellipses that contain, or are contained by, $u_*(z,\omega)$ for all $\omega$. As an immediate consequence, replacing the covariance with $\Lambda^-$ or $\Lambda^+$ automatically produces an upper bound and a lower bound on the steady-state distribution. These are the tightest such bounds achievable using any distribution that is independent of $\omega$ (Corollary \ref{corr: bounding distribution}).

It remains, only, to show that the four points where the ellipse $u_*(z,\omega) = 1$ intersects the bounding ellipses are orthogonal points rotated $2 \theta(\omega)$ degrees off the coordinate axis. the intersection points occur along the eigenvectors of $\Sigma_*(\omega) \Lambda^-$ and $\Sigma_*(\omega) \Lambda^+$ that correspond to eigenvalues $1/2$. Since the eigenvalues are known we only need to find the null-vectors of the matrices $\Sigma_*(\omega) \Lambda^- - \frac{1}{2} I$ and $\Sigma_*(\omega) \Lambda^+ - \frac{1}{2} I$. The first matrix is:
$$
\Sigma_*(\omega) \Lambda^+ - \frac{1}{2} I = \Sigma_*(\omega) \Lambda^+ - \frac{1}{2} I = \frac{1}{2} (1 - s'(\omega))^2 \frac{s'(\omega)^2}{\mu}\left[\begin{array}{cc} 1 - \frac{\mu^2}{s'(\omega)^2} & (2 \omega) \\ (2 \omega) (1 - \mu) & -(1 - \mu) \end{array} \right].
$$
	
But $s'(\omega) = \mu (1 + (2 \omega)^2)^{-1/2}$ so the term $\mu^2/s'(\omega)^2$ is just $1 + (2 \omega)^2$ therefore:
$$
\Sigma_*(\omega) \Lambda^+ - \frac{1}{2} I = \frac{1}{2} (1 - s'(\omega))^2 \frac{s'(\omega)^2}{\mu}\left[\begin{array}{cc} -(2 \omega)^2 & (2 \omega) \\ (2 \omega) (1 - \mu) & -(1 - \mu) \end{array} \right].
$$
	
This matrix has null vector:
$$
v = \frac{1}{\sqrt{1 + (2 \omega)^2}} \left[\begin{array}{c} 1 \\ (2 \omega) \end{array} \right]
$$
Since if $v_2 = (2 \omega) v_1$ then $-(2 \omega)^2 v_1 + (2 \omega) v_2 = 0$ and $(1 - \mu)((2 \omega) v_1 + v_2) = 0$. It follows that the vector pointing to the intersection with the outer ellipse is parallel to $[1, 2 \omega]$ where $(2 \omega)$ is the area production. When $(2 \omega)$ is zero (detailed balance) this vector is just $[1,0]$ so is parallel with the first coordinate axes. When $(2 \omega)$ goes to positive infinity it converges to $[0,1]$ and when $(2 \omega)$ goes to negative infinity it goes to $[0,-1]$ so the angle measured with the first coordinate axis approaches $\pi/2$ and $-\pi/2$ as rotation becomes infinitely positive or negative. The angle is given by:
\begin{equation}
\theta_{outer}(\omega) = \cos^{-1}\left(\frac{1}{\sqrt{1 + (2 \omega)^2}} \right) = \tan^{-1}((2 \omega)) = 2 \theta(\omega).
\end{equation}
This proves that the vector pointing to the intersection with the outer ellipse is given by rotating off the first coordinate axes by twice the tilt angle. $\square$

\subsection{Proof of Equation \eqref{eqn: frobenius norm gradient}} \label{app: gradient proof}

Consider the gradient $\nabla_M \|A^{\intercal} M B\|_{\text{Fro}}^2$ for some pair of matrices $A$ and $B$. To compute the gradient, first expand the norm explicity:
$$
\begin{aligned}
\|A^{\intercal} M B\|_{\text{Fro}}^2 & = \sum_{ij} [A^{\intercal} M B]_{ij}^2 = \sum_{i,j} \sum_{k,l} \sum_{i,j} \left( \sum_{k,l} a_{ik} m_{kl} b_{jl} \right)^2 \\
& = \sum_{i,j} \sum_{k,l} \sum_{m,n} a_{ik} a_{in} m_{kl} m_{nm} b_{jl} b_{jm} = \sum_{k,l,m,n} \left[\sum_{ij} a_{ik} a_{in} \right] m_{kl} m_{nm} \left[\sum_{ij} b_{jl} a_{jm} \right] \\
& = \sum_{k,l,n,m} \left[A^{\intercal} A \right]_{kn} m_{kl} m_{nm} [B^{\intercal} B]_{lm}.
\end{aligned}
$$

Then, differentiating with respect to $m_{ij}$ gives:
$$
\partial_{m_{ij}} \|A^{\intercal} M B\|_{\text{Fro}}^2 = \sum_{kl} [A^{\intercal} A]_{ki} m_{kl} [B^{\intercal} B]_{lj} + \sum_{n,m} [A^{\intercal} A]_{in} m_{nm} [B^{\intercal} B]_{jm} 
$$
the two sums are identical since $A^{\intercal} A$ and $B^{\intercal} B$ are both symmetric matrices. Therefore:
$$
\partial_{m_{ij}} \|A^{\intercal} M B\|_{\text{Fro}}^2 = \sum_{kl} [A^{\intercal} A]_{ki} m_{kl} [B^{\intercal} B]_{lj} = \left[ (A^{\intercal} A)^{\intercal} M (B^{\intercal} B)  \right]_{ij}
$$
so:
\begin{equation}
    \nabla_M \|A^{\intercal} M B\|_{\text{Fro}}^2 = (A^{\intercal} A) M (B^{\intercal} B).
\end{equation}

Substituting $D^{1/2}$ for $A$ and $\Sigma^{1/2}$ for $B$ recovers the stated result. $\square$

\end{document}